 \definecolor{BLACK}{gray}{0}
 \definecolor{WHITE}{gray}{1}
 \definecolor{RED}{rgb}{1,0,0}
 \definecolor{GREEN}{rgb}{0,1,0}
 \definecolor{BLUE}{rgb}{0,0,1}
 \definecolor{CYAN}{cmyk}{1,0,0,0}
 \definecolor{MAGENTA}{cmyk}{0,1,0,0}
 \definecolor{YELLOW}{cmyk}{0,0,1,0}
\theoremstyle{plain}
\newtheorem{thm}{\protect\theoremname}
\theoremstyle{plain}
\newenvironment{proof}[1][\protect\proofname]{\par
\normalfont\topsep6\p@\@plus6\p@\relax
\trivlist
\itemindent\parindent
\item[\hskip\labelsep
\scshape
#1]\ignorespaces
}{%
\endtrivlist\@endpefalse
}
\providecommand{\proofname}{Proof}
\theoremstyle{plain}
\newtheorem{lem}[thm]{\protect\lemmaname}
\newtheorem{defi}[thm]{\protect\definitionname}
\newtheorem{cor}[thm]{\protect\corollaryname}
\providecommand{\lemmaname}{Lemma}
\providecommand{\definitionname}{Definition}
\providecommand{\propositionname}{Proposition}
\providecommand{\corollaryname}{Corollary}
\providecommand{\theoremname}{Theorem}
\definecolor{myurlcolor}{rgb}{0,0,0.7}
\newcommand{\tr}{{\operatorname{Tr\,}}}
\newcommand{\ketbra}[2]{|#1\rangle\!\langle#2|}
\def\ket#1{| #1 \rangle}
\def\bra#1{\langle  #1 |}
\def\braket#1{\langle  #1 \rangle}
\def\proj#1{| #1 \rangle\!\langle #1 |}
\newcommand{\haH}
\definecolor{orange}{RGB}{255,127,0}
\begin{document}
\title{Attaining Carnot Efficiency with Quantum and Nano-scale Heat Engines}

\author{Mohit Lal Bera}
\affiliation{ICFO -- Institut de Ci\`encies Fot\`oniques, The Barcelona Institute of Science and Technology, ES-08860 Castelldefels, Spain}

\author{Maciej Lewenstein} 
\affiliation{ICFO -- Institut de Ci\`encies Fot\`oniques, The Barcelona Institute of Science and Technology, ES-08860 Castelldefels, Spain}
\affiliation{ICREA, Pg.~Lluis Companys 23, ES-08010 Barcelona, Spain} 

\author{Manabendra Nath Bera}
\email{mnbera@gmail.com}
\affiliation{Department of Physical Sciences, Indian Institute of Science Education and Research (IISER), Mohali, Punjab 140306, India}

\begin{abstract}
A heat engine operating in the one-shot finite-size regime, where systems composed of a small number of quantum particles interact with hot and cold baths and are restricted to one-shot measurements, delivers fluctuating work. Further, engines with lesser fluctuation produce a lesser amount of deterministic work. Hence, the heat-to-work conversion efficiency stays well below the Carnot efficiency. Here we overcome this limitation and attain Carnot efficiency in the one-shot finite-size regime, where the engines allow the working systems to simultaneously interact with two baths via the semi-local thermal operations and reversibly operate in a one-step cycle. These engines are superior to the ones considered earlier in work extraction efficiency, and, even, are capable of converting heat into work by exclusively utilizing inter-system correlations. We formulate a resource theory for quantum heat engines to prove the results.
\end{abstract}

\maketitle

\section{Introduction}
Heat engines are the fundamental building blocks of modern technology. These were invented primarily to convert heat into mechanical work. To lay a theoretical framework and to uncover the laws governing the processes in the engines, the thermodynamics was empirically developed \cite{SadiCarnot}. Later, it has been founded on statistical mechanics \cite{Callen85} assuming that the systems are large and composed of an asymptotically large number of particles ($N \to \infty$) interacting with even larger baths, where the average fluctuation in energy approaches zero. This is termed usually as the asymptotic regime.  

However, the situation changes completely for the systems of a finite, but moderate or even small number of quantum particles ($N \ll \infty$) where the standard thermodynamics is not applicable. In such cases, from the very beginning, the fluctuations may play a much more important role. The situations may be classified in two regimes: many-shot finite-size regime - where repeated measurements (in time) are allowed on a system is made up of moderate or a small number of particles, and one-shot finite-size regime - where only one-shot measurements are allowed on a system composed of a single or a moderate number of particles. In the last decades, enormous efforts have been put forward to extend thermodynamics to these regimes leading to two major approaches. 

The first approach is based on fluctuation theorems, exploiting statistical mechanics and open quantum systems dynamics \cite{Jarzynski97, Crooks99, Campisi11}. The other one is based on the quantum information theory \cite{Brandao13, Skrzypczyk14, Horodecki13, Aberg13, Brandao15, Lostaglio15, Cwiklinski15, Lostaglio15a, Bera16, Bera17, Sparaciari17, Sparaciari17, Bera17, Uzdin18, Muller18, Gour2018}. Among others, the latter leads to a resource theory of quantum systems out of thermal equilibrium, which is commonly termed as the resource theory of quantum thermodynamics (RTQTh) \cite{Brandao13, Horodecki13, Brandao15}. The RTQTh stands out among the other approaches as it exploits a rigorous mathematical framework similar to the resource theory of entanglement. Recently, the approaches based on fluctuation theory and resource theory have been inter-connected for some cases \cite{Alhambra16, Aberg18, Guarnieri19}. All these investigations are majorly limited to the situations, where the quantum system is interacting with only one thermal bath at a fixed temperature. Apart from some efforts to quantify extractable work and engine efficiency in few special cases, and to study the finite-size effects and the quantum signatures \cite{Alicki79, Kieu04, Verley14, Rosnagel14, Uzdin15, Tajima17, Ng17, Ito18, Woods2019, Woods19a}, there has been no major progress, so far, in formulating a resource theory for quantum heat engines. 

One of the striking features is that the engines operating in the one-shot finite-size regime can only deliver fluctuating work \cite{Horodecki13, Brandao15} and the lesser the fluctuation lesser becomes the extractable deterministic or one-shot work.  Further, a heat engine operating in this regime cannot in general achieve reversible transformation. As a consequence, it is not possible to attain the maximum allowed heat-to-work conversion efficiency (i.e., the Carnot efficiency) in a Carnot engine, unless the system interacting with the baths is made up of an asymptotically large number of particles where Carnot efficiency may at most be achieved on average.   

In this work, we present quantum and nano-scale heat engines that attain the maximum possible heat-to-work conversion efficiency, i.e., the Carnot efficiency, in the one-shot finite-size regime. These engines are superior in work extraction efficiency compared to the traditional engines. To prove our results and to address quantum thermodynamics in the one-shot finite-size regime in general, we formulate a resource theory for quantum heat engines (see Supplementary Information) in which a system with few quantum particles interacts with two or multiple thermal baths. With the precise characterization of thermodynamic operations by introducing a first law for engines, we derive the second laws for quantum state transformation in presence two or multiple baths at different temperatures by using information-theoretic tools. The newly introduced engine operations are more general in the sense that the system interacts with the baths simultaneously. We term these engine operations as ``semi-local thermal operations'' (SLTOs). The SLTOs not only enable us to build a Carnot heat engine operating with a one-step cycle, but also enhances the work extraction efficiency in the one-shot finite-size regime -- in this sense the SLTOs are more powerful than the ones considered earlier. As revealed by this resource theoretic framework, the state transformations in the quantum engine are fundamentally irreversible and must obey many second laws. As an important result of this framework, we design a reversible engine transformation that attains the maximum possible efficiency for work extraction, i.e., the Carnot efficiency, there by demonstrating the supremacy of quantum heat engines.

\begin{figure}
\includegraphics[width=0.8\columnwidth]{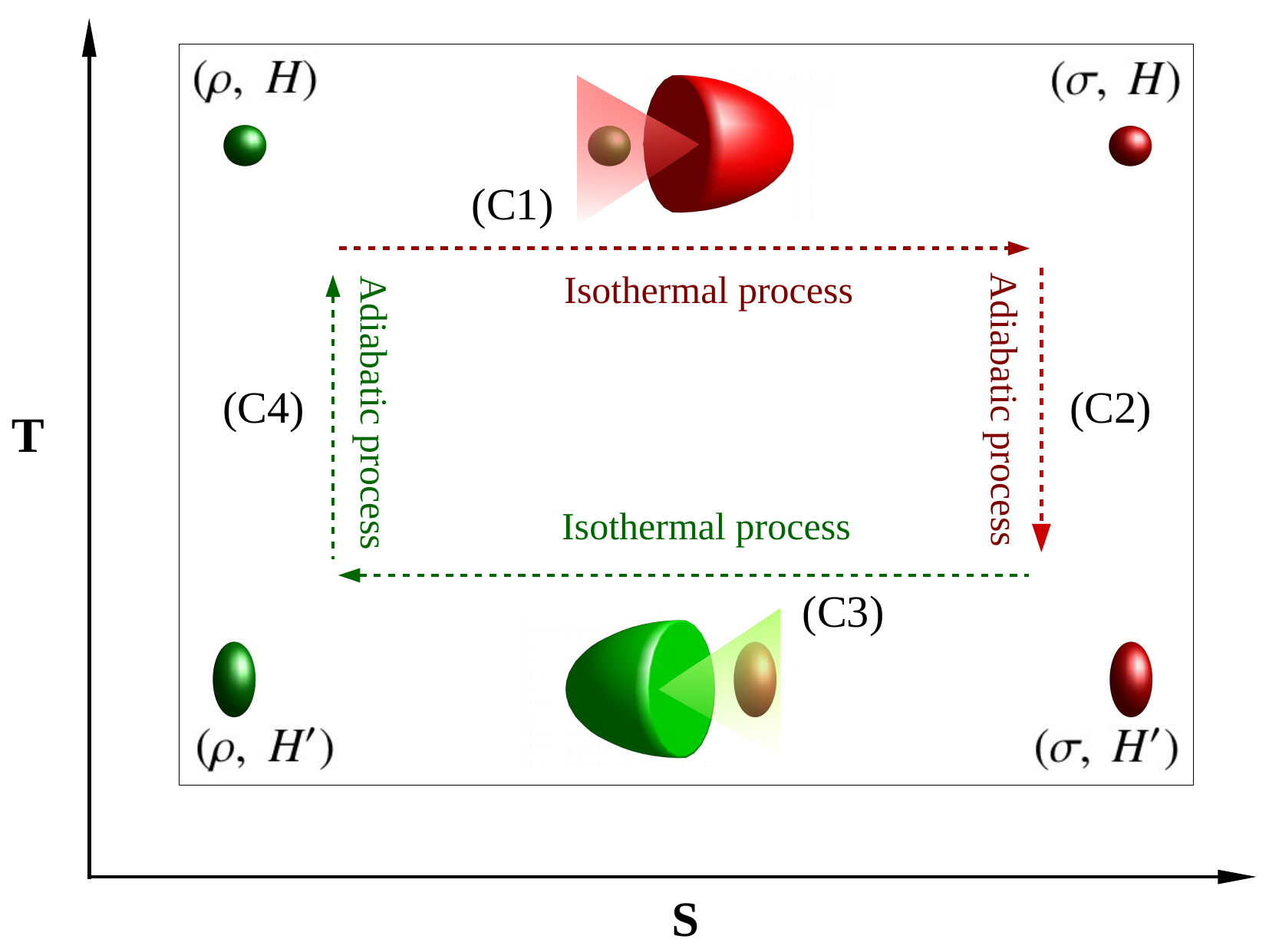}
\caption{{\bf A schematic of the operations in a traditional Carnot heat engine.} The horizontal and vertical axes are the thermodynamic entropy ($S$) and the temperature ($T$). The engine is made up of one working system and two heat baths with inverse temperatures $\beta_1=1/T_1$ and $\beta_2=1/T_2$, where $\beta_1 < \beta_2$. The engine operates in a cycle composed of four thermodynamically reversible steps: (C1) First, an isothermal transformation $(\rho, H) \rightarrow (\sigma, H) $ in interaction with the bath $B_1$ at inverse temperature $\beta_1$, where the state changes ($\rho \rightarrow \sigma$) without updating the system Hamiltonian $H$. (C2) Second, an adiabatic transformation $(\sigma, H) \rightarrow (\sigma, H^\prime)$ without any contact with the baths, where state remain unchanged but the system Hamiltonian modifies to $H \rightarrow H^\prime$. (C3) Third, an isothermal transformation $(\sigma, H^\prime) \rightarrow (\rho, H^\prime)$ in interaction with the bath $B_2$ at inverse temperature $\beta_2$, only changing the state. (C4) Finally, an adiabatic transformation $(\rho, H^\prime) \rightarrow (\rho, H)$ without any interaction with the baths, and updating only the system Hamiltonian $H^\prime \rightarrow  H$. The steps (C1)-(C4) constitute a cycle and the engine operates repeating the cycle many times. The important point is that, in the steps (C1) and (C2), the working system interacts with one bath at a time.
\label{fig:QHEng}}
\end{figure}

\section{Results}

A typical (Carnot) heat engine is comprised of two heat baths $B_1$ and $B_2$ with the inverse temperatures $\beta_1=1/T_1$ and $\beta_2=1/T_2$ respectively and a working system $S$, as shown in the Figure  \ref{fig:QHEng}. We assume $\beta_1 < \beta_2$ throughout this article. In our proposed quantum heat engine, we define a new engine operation where a working system ($S_{12}$) is composed of two non-interacting subsystems $S_{1}$ and $S_2$. The Hamiltonian is $H_{S_{12}}=H_{S_1}+H_{S_2} \equiv H_{S_1}\otimes \mathbb{I}_{S_2} + \mathbb{I}_{S_1} \otimes H_{S_2}$. The Hamiltonians of the baths $B_1$ and $B_2$ are denoted as $H_{B_{1}}$ and $H_{B_2}$ respectively. The subsystems $S_1$ and $S_2$ semi-locally interact with the baths $B_1$ and $B_2$ respectively. With this, the number of steps in the Carnot engine is reduced. For instance, consider that the subsystems $S_1$ and $S_2$ are in the states $\rho$ and $\sigma$ respectively. Then the isothermal steps (C1) and (C3) can be combined to one step, as 
\begin{align*}
 (\rho \otimes \sigma, \ H_{S_1} + H_{S_2}) \rightarrow (\sigma \otimes \rho, \ H_{S_1} + H_{S_2}),
\end{align*}
where the $H_{S_1}=H$ and the $H_{S_2}=H^\prime$. In this step, the subsystems swap their states without changing their Hamiltonian. Further, both the adiabatic steps (C2) and (C4) can be performed in one step as well, that is
\begin{align*}
 (\sigma \otimes \rho, \ H_{S_1} + H_{S_2}) \rightarrow (\sigma \otimes \rho, \ H_{S_1}^\prime + H_{S_2}^\prime),
\end{align*}
where the $H_{S_1}^\prime=H^\prime$ and $H_{S_2}^\prime=H$. Here the subsystems swap their local Hamiltonians without modifying their states. In fact, the four steps in a Carnot heat engine can be further reduced to just one step (see Figure \ref{fig:OurEng}) which enables one to attain maximum possible heat-to-work conversion efficiency, as we shall discuss later. 

\subsection{Semi-local thermal operations}

Let us now introduce the general form of thermodynamically allowed (semi-local) operations that a (bipartite) quantum system $S_{12}$ undergoes in a quantum heat engine, where the bipartite system $S_{12}$ can be in an arbitrary state. Even, the states may possess strong correlation, e.g., quantum entanglement, shared by the subsystems $S_1$ and $S_2$. 

\begin{defi}[Semi-local thermal operations (SLTOs)] \label{propmt:slto}
In a quantum heat engine, the thermodynamic operations on system $S_{12}$ in a state $\rho_{S_{12}}$ are defined as 
\begin{align}
\label{eqmt:slto-Stinespring}
 \Lambda_{S_{12}} \left(\rho_{S_{12}}\right)=\tr_{B_1 B_2} \left[ U (\gamma_{B_1} \otimes \gamma_{B_2} \otimes \rho_{S_{12}}) U^\dag \right],
\end{align} 
with the condition that the global unitary $U$ satisfies the commutation relations
\begin{align}
\left[U, \ H_{B_1} + H_{S_1} +  H_{B_2} + H_{S_2}  \right] &=0, \label{eqmt:slto-commutation} \\
\left[U, \ \beta_1 \ (H_{B_1} + H_{S_1}) + \beta_2 \ (H_{B_2} + H_{S_2})  \right] &=0, \label{eqmt:slto-commutationTemp}
\end{align}
where the thermal states of the baths are denoted by $\gamma_{B_x}=\frac{e^{-\beta_xH_{B_x}}}{\tr [e^{-\beta_xH_{B_x}}]}$ for $x=1,2$.
\end{defi}
The resultant operations on the system $S_{12}$ are semi-local in the sense that, even though the subsystems ($S_1$ and $S_2$) ``selectively'' interact with the baths ($B_1$ and $B_2$), the unitary $U$ still allows certain interactions among them with the constraints \eqref{eqmt:slto-commutation} and \eqref{eqmt:slto-commutationTemp}. It should be noted that the commutation relations \eqref{eqmt:slto-commutation} and \eqref{eqmt:slto-commutationTemp} together constitute the first law for quantum heat engines. The relation \eqref{eqmt:slto-commutation} guarantees strict conservation of the total energy $E_{12}=E_1 + E_2$, where $E_1$ and $E_2$ are the energies of the $B_1S_1$ and $B_2S_2$ composites respectively. In addition, the relation \eqref{eqmt:slto-commutationTemp} ensures strict conservation of the total weighted-energy $E_{12}^{\beta_1\beta_2}=\beta_1 E_1 + \beta_2 E_2$, and it signifies that any change in (one-shot) entropy of $B_{1}S_1$, due to an exchange of energy between $B_1S_1$ and $B_2S_2$, must be compensated by a counter change in (one-shot) entropy of $B_2S_2$. It is interesting to note that the SLTOs converge to the (local) thermal operations that are introduced in the resource theory of quantum states beyond thermal equilibrium presented in \cite{Brandao13, Horodecki13, Brandao15}, when both the baths are of the same temperature, i.e., for $\beta_1=\beta_2$. Several useful properties of SLTOs are outlined in the Method. 

The SLTOs can be further generalized with an access to a bipartite catalyst $C_{12}$ composed of two non-interacting subsystems $C_1$ and $C_2$ and the Hamiltonian $H_{C_{12}}=H_{C_1} + H_{C_2}$. The $C_1$ is clubbed with the subsystem $S_1$ to form the composite $S_1C_1$. Similarly, the $C_2$ is clubbed with the $S_2$ to form $S_2C_2$. Then, the composites $S_1C_1$ and $S_2C_2$ interacts with the baths $B_1$ and $B_2$ via semi-local thermal operations.  Such operations are called catalytic semi-local thermal operations (cSLTOs) that satisfy 
\begin{align}
 \Lambda_{S_{12}C_{12}}(\rho_{S_{12}} \otimes\rho_{C_{12}} ) \rightarrow \sigma_{S_{12}} \otimes \rho_{C_{12}},
\end{align} 
where $\rho_{C_{12}} $ is a state of the catalyst. Note, the catalyst remains unchanged before and after the process. These catalytic operations form a larger set of thermodynamically allowed operations compared to SLTOs and respect all the properties satisfied by the SLTOs. The cSLTOs are the allowed thermodynamic operation in a quantum heat engine and constitute the free operation for the resource theory developed to prove the results presented in this article (see Supplementary Information).

When the subsystems are locally in thermal equilibrium with the baths they are semi-locally interacting with, the joint uncorrelated state of the system $S_{12}$ becomes $\gamma_{S_{12}}=\gamma_{S_1} \otimes \gamma_{S_2}$,  where $\gamma_{S_x}=e^{-\beta_xH_{S_x}}/{Z_x} $ with the partition functions $Z_{x}=\tr [e^{-\beta_xH_{S_x}}]$ for $x=1,2$. We term these states as the semi-Gibbs states. The set of all such semi-Gibbs states is denoted by the set $\mathcal{T}_{S_{12}} \ni \gamma_{S_{12}}$. The SLTOs map the set $\mathcal{T}_{S_{12}}$ onto itself. The SLTOs and the semi-Gibbs states are the precursors of a resource theory of heat engines that we develop in the Supplementary Information.

\begin{figure}
\includegraphics[width=0.8\columnwidth]{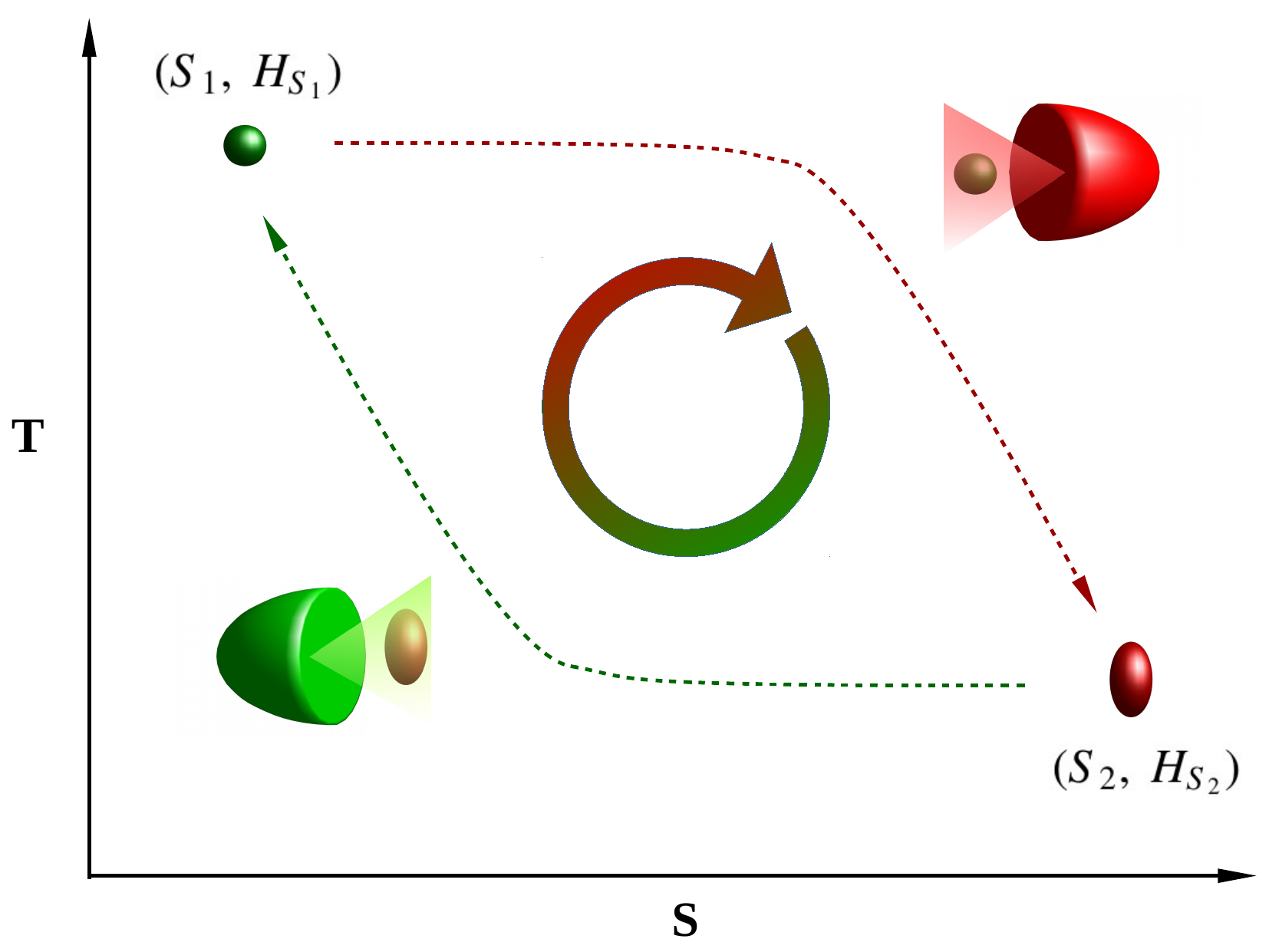}
\caption{{\bf Generalized one-step engine operation.} Consider initial state and the non-interacting Hamiltonian of the working-system are $\rho_{S_{12}}$ and $H_{S_{12}}=H_{S_1} + H_{S_2}$ respectively. The sub-systems $S_1$ and $S_2$ semi-locally interact with the baths $B_1$ and $B_2$ with inverse temperatures $\beta_1$ and $\beta_2$ respectively, where $\beta_1 < \beta_2$. The engine operates in a cycle by implementing the step $\left(\rho_{S_{12}}, \ H_{S_{12}} \right) \longrightarrow \left( \sigma_{S_{12}}^\prime, \ H_{S_{12}}^\prime  \right)$, with the modified system Hamiltonian $H_{S_{12}}^\prime=H_{S_1}^\prime + H_{S_2}^\prime$, so that it satisfies the conditions $\sigma_{S_{12}}^\prime=U^{SWAP}_{S_1 \leftrightarrow S_2} \left( \rho_{S_{12}} \right), \ \ H_{S_1}^\prime=H_{S_2}, \ \ \mbox{and} \ \ H_{S_2}^\prime=H_{S_1}$, where the unitary $U^{SWAP}_{S_1 \leftrightarrow S_2}$ performs a SWAP operation between sub-systems $S_1$ and $S_2$. Note,  all four steps in the traditional Carnot engine (comprising the steps (C1)-(C4) in Figure \ref{fig:QHEng}) can be performed in one-stroke with $\rho_{S_{12}}=\rho \otimes \sigma$, which is just a special case of above general operation.
\label{fig:OurEng}}
\end{figure}  

\subsection{Engine with one-step cycle, second laws, and work extraction}

The first important result of this work is that the SLTOs can be utilized to implement the cycle in a Carnot engine in one-step, as shown in Figure \ref{fig:OurEng}. At the end of a cycle, the initial state of the system is returned. Such a one-step engine cycle involves the transformation
\begin{align}\label{eq:EngineTransformGen}
 \left(\rho_{S_{12}}, \ H_{S_{12}} \right) \longrightarrow \left( \sigma_{S_{12}}^\prime, \ H_{S_{12}}^\prime  \right).
\end{align} 
Here the system Hamiltonian is modified to $H_{S_{12}}=H_{S_1} + H_{S_2} \to H_{S_{12}}^\prime=H_{S_1}^\prime + H_{S_2}^\prime$, and satisfies the conditions  $\sigma_{S_{12}}^\prime=U^{SWAP}_{S_1 \leftrightarrow S_2} \left( \rho_{S_{12}} \right), \ \ H_{S_1}^\prime=H_{S_2}, \ \ \mbox{and} \ \ H_{S_2}^\prime=H_{S_1}$, where the unitary $U^{SWAP}_{S_1 \leftrightarrow S_2}$ swaps the subsystems $S_1$ and $S_2$. The next cycle starts by inter-changing the interactions between subsystems and baths. In other words, the first engine cycle starts with the (semi-local) interactions as $B_1S_1-S_2B_2$, where the subsystems $S_1$ and $S_2$ semi-locally interact with the baths $B_1$ and $B_2$ respectively. In the next cycle, engine interchanges the interactions as $B_1S_2-S_1B_2$, where the subsystem $S_1$ and $S_2$ semi-locally interact with the baths $B_2$ and $B_1$ respectively via a cSLTO, and the cycles continue.

Here we restrict ourselves to the states $\rho_{S_{12}}$ that are block-diagonal in the energy eigenbases of the system Hamiltonian $H_{S_{12}}$, i.e., $[\rho_{S_{12}}, H_{S_{12}}]=0$. The cSLTOs are time-translation symmetric with respect to the time-translation driven by $H_{S_{12}}$, and that is why the cSLTOs monotonically decrease the superpositions between different energy eigenbases (see Appendix). For a heat engine operates in an arbitrarily large number of cycles, it is therefore safe to assume that an arbitrary state will dephase to its block-diagonal form after some cycles. Then the second laws that provide the necessary and sufficient conditions for such transformations are given in the theorem below. This theorem is proven, even for more general transformation, in the Supplementary Information.

\begin{thm}[Second laws for engines] \label{thmmt:2ndLawsAll}
Under cSLTOs, the transformation in Eq. \eqref{eq:EngineTransformGen}
is possible if, and only if, 
\begin{align}\label{eqmt:2ndLawsAll}
S_{\alpha}\left(\rho_{S_{12}}, \gamma_{S_{1}} \otimes \gamma_{S_{2}} \right) \geqslant S_{\alpha}\left(\sigma_{S_{12}}^\prime, \gamma_{S_{1}}^\prime \otimes \gamma_{S_{2}}^\prime \right), \ \ \forall \alpha \geqslant0,
\end{align}
where the $\alpha$-free-entropy of $\rho_{S_{12}}$ (and similarly for $\sigma_{S_{12}}^\prime$) is defined, for all $\alpha \in [-\infty, \infty]$, as
\begin{align}\label{eqmt:free-ent}
S_\alpha (\rho_{S_{12}}, \gamma_{S_1}\otimes \gamma_{S_2})= D_\alpha \left(\rho_{S_{12}} \parallel \gamma_{S_1}\otimes \gamma_{S_2} \right) - \log Z_1Z_2,  
\end{align}
with the thermal states $\gamma_{S_x}=\frac{e^{-\beta_x H_{S_x}}}{Z_x}$ and $\gamma_{S_x}^\prime=\frac{e^{-\beta_x H_{S_x}^\prime}}{Z_x^\prime}$, and the partition functions $ Z_x=\tr[e^{-\beta_x H_{S_x}}]$ and $Z_x^\prime=\tr [e^{-\beta_x H_{S_x}^\prime}]$ for $x=1,2$. Here the R\'enyi  $\alpha$-relative entropy is given by $D_{\alpha} (\rho \parallel \gamma )=\frac{\mbox{sgn}(\alpha)}{\alpha -1} \log \tr [\rho^{\alpha} \ \gamma^{1-\alpha}]$.
\end{thm}

The $\alpha$-free-entropies quantify the thermodynamic resource present in the system $S_{12}$ and it vanishes for the semi-Gibbs states. Therefore, any transformation among the block-diagonal states under the cSLTOs must respect the above monotonic relation for the $\alpha$-free-entropies for all $\alpha$. Note, the second laws leading to these monotonic relations can also be derived for more general state transformations as allowed by non-cyclic engine operations (see Supplementary Information). Apart from dictating state transformations, the Theorem \ref{thmmt:2ndLawsAll} delimits the amount of thermodynamic resource, i.e., free-entropy or work, can be extracted using a state transformation in an engine. It also quantifies the amount of the free-entropy required to be expended to make a transformation possible. For this, a bipartite battery $W_{12}$ is introduced that stores work in the form of pure energy. It is	attached with the system $S_{12}$ and then jointly evolved with cSLTOs, as shown in Fig.~\ref{fig:FreeEnt}. Now the free-entropy distance is introduced in Theorem \ref{thmmt:fed} to quantify the extractable free-entropy or the free-entropy cost in the one-shot finite-size regime, in terms of the works that can be stored in a battery. This in turn also quantifies the maximum extractable deterministic work from a quantum engine or the minimum deterministic work required to execute a refrigeration process.

\begin{figure}
\includegraphics[width=0.70\columnwidth]{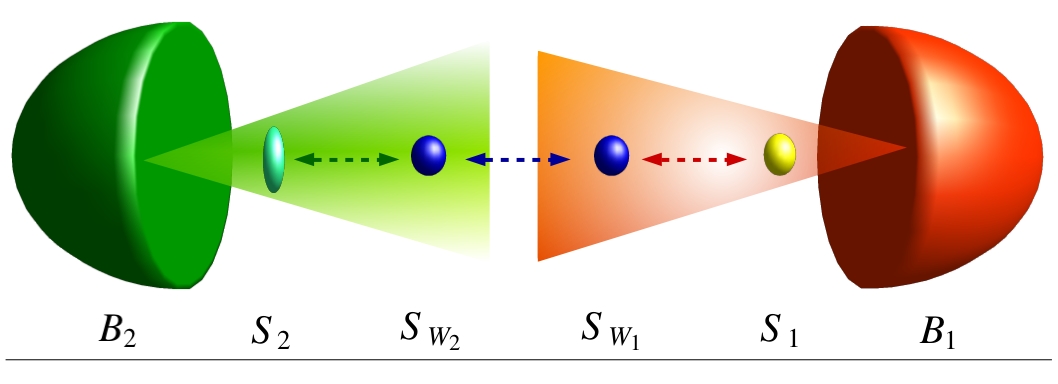}
\caption{{\bf Extraction of free-entropy.} A battery $S_{W_{12}}$, with two sub-systems $S_{W_1}$ and $S_{W_2}$ and the non-interacting Hamiltonian $H_{S_{W_{12}}}=H_{S_{W_1}} + H_{S_{W_2}}$ is attached with the system $S_{12}$ to store free-entropy (or work) once extracted. Without loss of generality, the battery subsystems are considered to be two-level systems with the Hamiltonians $H_{S_{W_{1}}}=W_{1} \proj{W_{1}}_{S_{W_{1}}}$ and $H_{S_{W_{2}}}=W_{2} \proj{W_{2}}_{S_{W_{2}}}$, and these are restricted to remain in the eigenstates of the Hamiltonians always. The $S_{W_1}$ is tagged with subsystem $S_1$ and similarly the $S_{W_2}$ is with $S_2$. The initial battery state is chosen to be the zero-energy state $ \rho_{S_{W_{12}}}^i=\proj{0}_{S_{W_1}} \otimes \proj{0}_{S_{W_2}}$. The composites $S_1S_{W_1}$ and $S_{2}S_{W_2}$ interact with the baths $B_1$ and $B_2$ (at different inverse temperatures $\beta_1$ and $\beta_2$) respectively through semi-local thermal operations, so that the overall transformation is $\left(\rho_{S_{12}} \otimes \rho_{S_{W_{12}}}^i, \ H_{S_{12}} + H_{S_{W_{12}}} \right) \rightarrow \left( \sigma_{S_{12}}^\prime \otimes \rho_{S_{W_{12}}}^f, \ H_{S_{12}}^\prime + H_{S_{W_{12}}}  \right)$, where final state of the battery is $\rho_{S_{W_{12}}}^f = \proj{W_1}_{S_{W_{1}}} \otimes \proj{W_2}_{S_{W_{2}}}$. Note, the battery Hamiltonian remains unchanged in the transformation. The values of $W_1$ and $W_2$ depend on the very cSLTO under which the transformation happens.}
\label{fig:FreeEnt}
\end{figure}

\begin{thm}[Free-entropy distance in each engine cycle]\label{thmmt:fed}
For the transformation in Eq. \eqref{eq:EngineTransformGen} via a cSLTO, the free-entropy distance between the initial and final states of the system is given by 
\begin{align}
 S_d(\rho_{12} & \rightarrow \sigma_{12}^\prime)= \beta_1 W_1 + \beta_2 W_2, \\
 &= \inf_{\alpha \geqslant 0}\left[S_\alpha(\rho_{S_{12}},\gamma_{S_1} \otimes \gamma_{S_2}) -  S_\alpha(\sigma_{S_{12}}^\prime,\gamma_{S_1}^\prime \otimes \gamma_{S_2}^\prime) \right]. \nonumber
 \end{align}
\end{thm}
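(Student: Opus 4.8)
The plan is to reduce the claim to Theorem~\ref{thmmt:2ndLawsAll} by absorbing the battery $S_{W_{12}}$ into the working system. First I would regard the enlarged system $S_{12}S_{W_{12}}$ as a bipartite object, with side~$1$ being $S_1 S_{W_1}$ (semi-locally coupled to $B_1$ at $\beta_1$) and side~$2$ being $S_2 S_{W_2}$ (coupled to $B_2$ at $\beta_2$), so that the composite weighted Hamiltonian is $\beta_1(H_{S_1}+H_{W_1})\otimes\mathbb{I}+\mathbb{I}\otimes\beta_2(H_{S_2}+H_{W_2})$. Since the battery is restricted to energy eigenstates ($\proj{00}$ initially, $\proj{W_1W_2}$ finally) and $\rho_{S_{12}},\sigma_{S_{12}}'$ are block-diagonal in the weighted eigenbasis, both composite states are block-diagonal in this enlarged weighted eigenbasis. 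Hence Theorem~\ref{thmmt:2ndLawsAll} applies verbatim to the full transformation $\rho_{S_{12}}\otimes\proj{00}\to\sigma_{S_{12}}'\otimes\proj{W_1W_2}$.

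The key computation is to express the composite $\alpha$-free-entropies in terms of the system ones plus a battery offset. Here I would invoke additivity of the R\'enyi $\alpha$-relative entropy across tensor products, $D_\alpha(\rho_A\otimes\rho_B\parallel\sigma_A\otimes\sigma_B)=D_\alpha(\rho_A\parallel\sigma_A)+D_\alpha(\rho_B\parallel\sigma_B)$, together with the battery thermal references $\gamma_{W_x}=(\proj{0}+e^{-\beta_xW_x}\proj{W_x})/Z_{W_x}$. A direct evaluation gives, for every $\alpha>0$, the $\alpha$-independent values $D_\alpha(\proj{0}\parallel\gamma_{W_x})=\log Z_{W_x}$ and $D_\alpha(\proj{W_x}\parallel\gamma_{W_x})=\beta_xW_x+\log Z_{W_x}$ (the $\alpha=0$ endpoint following by continuity). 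The crucial feature is precisely this $\alpha$-independence: it is what makes a battery sitting in a sharp energy eigenstate an ideal work reservoir, so that its stored free-entropy is a single number rather than an $\alpha$-dependent spread.

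Substituting these into the definition \eqref{eqmt:free-ent} and cancelling partition-function offsets, the ground-state battery contribution cancels exactly against the $-\log Z_{W_1}Z_{W_2}$ terms, so the initial composite free-entropy equals $S_\alpha(\rho_{S_{12}},\gamma_{S_1}\otimes\gamma_{S_2})$, while the excited final battery adds a flat offset, giving $S_\alpha(\sigma_{S_{12}}',\gamma_{S_1}'\otimes\gamma_{S_2}')+\beta_1W_1+\beta_2W_2$. Feeding these two expressions into Theorem~\ref{thmmt:2ndLawsAll} converts the second-law inequalities into the single family of constraints $\beta_1W_1+\beta_2W_2\leqslant S_\alpha(\rho_{S_{12}},\gamma_{S_1}\otimes\gamma_{S_2})-S_\alpha(\sigma_{S_{12}}',\gamma_{S_1}'\otimes\gamma_{S_2}')$ holding for all $\alpha\geqslant0$.

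Finally I would identify the free-entropy distance with the largest weighted work $\beta_1W_1+\beta_2W_2$ that still permits the transformation. Because the battery energies $W_1,W_2$ are free and enter only through the combination $\beta_1W_1+\beta_2W_2$, this family of constraints is saturable, and the maximal admissible value is the tightest bound, i.e.\ the infimum over $\alpha$. This yields $S_d(\rho_{12}\to\sigma_{12}')=\beta_1W_1+\beta_2W_2=\inf_{\alpha\geqslant0}[S_\alpha(\rho_{S_{12}},\gamma_{S_1}\otimes\gamma_{S_2})-S_\alpha(\sigma_{S_{12}}',\gamma_{S_1}'\otimes\gamma_{S_2}')]$. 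The step I expect to be the main obstacle is the $\alpha$-independence of the battery free-entropy: one must confirm it holds precisely because the battery levels are exact energy eigenstates, and verify that enlarging Theorem~\ref{thmmt:2ndLawsAll} to the new bipartition is legitimate (the battery must respect the same semi-local, weighted-energy-conserving structure), since any residual $\alpha$-dependence there would smear the extractable work and destroy the clean identification with a single infimum.
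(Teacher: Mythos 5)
Your proposal is correct and follows essentially the same route as the paper: the paper likewise introduces the two-level battery pair $S_{W_1},S_{W_2}$ tagged to $S_1,S_2$, applies the second laws (Theorem \ref{thmmt:2ndLawsAll}) to the enlarged composite, and uses additivity of the R\'enyi relative entropies together with the $\alpha$-independent battery contributions $D_\alpha(\proj{0}\parallel\gamma_{W_x})=\log Z_{W_x}$ and $D_\alpha(\proj{W_x}\parallel\gamma_{W_x})=\beta_x W_x+\log Z_{W_x}$ to reduce the inequalities to $\beta_1 W_1+\beta_2 W_2\leqslant S_\alpha(\rho_{S_{12}},\gamma_{S_1}\otimes\gamma_{S_2})-S_\alpha(\sigma_{S_{12}}^\prime,\gamma_{S_1}^\prime\otimes\gamma_{S_2}^\prime)$ for all $\alpha\geqslant0$, saturated at the infimum since the second laws are necessary and sufficient. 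Your explicit verification of the $\alpha$-independence and of the saturability step is in fact more detailed than the paper's own presentation, which states these points without computation.
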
 
We refer to the Supplementary Information for the proof.
As per the second laws, if the initial state possesses larger free-entropy than the final one, i.e., $S_{\alpha}\left(\rho_{S_{12}}, \gamma_{S_{1}} \otimes \gamma_{S_{2}} \right) \geqslant S_{\alpha}\left(\sigma_{S_{12}}^\prime, \gamma_{S_{1}}^\prime \otimes \gamma_{S_{2}}^\prime \right)$ for all $\alpha \geqslant0$, the transformation can take place spontaneously under cSLTOs. This we term as the forward process. Then, the guaranteed one-shot extractable work from the process is 
\begin{align}\label{eq:extW}
W_{ext}=W_1+W_2 > 0, 
\end{align}
where $W_1 >0$ and $W_2$. To perform the reverse transformation $ \left( \sigma_{S_{12}}^\prime, H_{S_{12}}^\prime  \right) \rightarrow \left(\rho_{S_{12}}, H_{S_{12}} \right) $, as for a refrigeration process, the Theorem \ref{thmmt:fed} constrains that the minimum one-shot free-entropy to be supplied to ascertain the transformation is $S_d(\sigma_{12}^\prime \to \rho_{12})$. It is important to note that the free-entropy distance of a forward process is not in general equal to its reverse process, and $S_d(\rho_{12} \rightarrow \sigma_{12}^\prime) \leqslant - S_d(\sigma_{12}^\prime \to \rho_{12})$, where the equality holds for a few special cases. Therefore, the thermodynamic reversibility is no longer respected in the one-shot finite-size regime.

The one-step engine cycle also enables an engine to operate exclusively utilizing inter-system correlations. This is the second important result of this work. Consider that the subsystem $S_1$ (and $S_2$) is made up of two parties $M_1$ and $N_1$ (and $M_2$ and $N_2$) with Hamiltonian $H_{S_1}=H_{M_1} + H_{N_1}$ (and $H_{S_2}=H_{M_2} + H_{N_2}$). We further assume $H_{M_1}=H_{M_2}$ and $H_{N_1}=H_{N_2}$. The cycle starts with the state $\rho_{S_{12}}=\rho \otimes \sigma$, where the state $\rho$ of $S_1$ has no correlation between $M_1$ and $N_2$, i.e., $\rho= \rho_{M_1} \otimes \rho_{N_1}$ with $\rho_{M_1}=\tr_{N_1} \rho$ and $\rho_{N_1}=\tr_{M_1} \rho$. But, the state $\sigma$ of $S_2$ has non-vanishing correlation between $M_2$ and $N_2$, i.e., $\sigma \neq \sigma_{M_2} \otimes \sigma_{N_2}$ with $\sigma_{M_2}=\tr_{N_2} \sigma$ and $\sigma_{N_2}=\tr_{M_2} \sigma$. The one-step engine cycle that exclusively utilizes correlation is $\left(\rho_{S_{12}}, \ H_{S_{12}} \right) \to \left(\sigma_{S_{12}}, \ H_{S_{12}} \right)$, where $\sigma_{S_{12}}=\sigma \otimes \rho$. Then, in each cycle, the extractable free-entropy distance is $S_d(\rho_{S_{12}} \to \sigma_{S_{12}})=\beta_1 W_1^c + \beta_2 W_2^c$, and the extractable work is $W^c_{ext}=W_1^c + W_2^c$.

It is known that the inter-system correlation can store thermodynamic work potential and can lead to ``anomalous'' heat flow - a spontaneous heat transfer from a cooler to a warmer body \cite{Bera16}. However, the studies were restricted to the asymptotic regime. Now, we are able to characterize such thermodynamic potential and its role in anomalous heat flow in the one-shot finite-size regime. For instance, consider a system state $\rho_{S_{12}}$ which has non-vanishing correlation shared by the subsystems $S_1$ and $S_2$, i.e., $\rho_{S_{12}} \neq \rho_{S_1} \otimes \rho_{S_2}$, where $\rho_{S_1}=\tr_{S_2}[\rho_{S_{12}}]$ and $\rho_{S_2}=\tr_{S_1}[\rho_{S_{12}}]$. 
Then a (non-cyclic) state transformation $\left(\rho_{S_{12}}, \ H_{S_{12}} \right) \to \left(\rho_{S_1} \otimes \rho_{S_2}, \ H_{S_{12}} \right)$ leads to the free-entropy distance $S_d(\rho_{S_{12}} \to \rho_{S_1} \otimes \rho_{S_2})=\beta_1 W_1 + \beta_2 W_2$ and the extractable work $W_{ext}=W_1 + W_2$. In fact this work is responsible for the anomalous heat from or, equivalently, the refrigeration process. 

\subsection{Attaining Carnot efficiency}

The state transformations in the one-shot finite-size regime are very restrictive as these need to obey many second laws simultaneously. Moreover, thermodynamics is irreversible in general. This raises the question if there can be a heat engine that attains the maximum possible heat-to-work conversion efficiency (i.e., the Carnot efficiency) in the one-shot finite-size regime. Here we give an affirmative answer to this question that also vindicates the superiority of the quantum heat engines operating in the one-step cycle using semi-local thermal operations. The Carnot efficiency of a work extraction process is attained when the transformations occurring in an engine are reversible. In the one-shot finite-size regime, such a reversible engine transformation can be achieved if we consider initial and final states of the (working) system to be the energy eigenstates of the Hamiltonian. 

Let us present one such engine transformation that attains maximum possible heat-to-work conversion efficiency, i.e., the Carnot efficiency. Without loss of generality, we consider a two-qubit (working) system $S_{12}$ with the Hamiltonian $H_{S_{12}}=H_{S_1} + H_{S_2}$. The subsystem Hamiltonians are given by $H_{S_1}=a\ketbra{1}{1}_{S_1}$ and $H_{S_2}=a\ketbra{1}{1}_{S_2}$. The engine undergoes a one-step cycle following the transformation $(\rho_{S_{12}}, H_{S_{12}}) \to (\sigma_{S_{12}}, H_{S_{12}})$ using semi-local thermal operation, where the initial and final states respectively are 
\begin{align}
 \rho_{S_{12}}=\ketbra{0}{0}_{S_{1}} \otimes \ketbra{1}{1}_{S_{2}} \ \ \mbox{and} \ \ \sigma_{S_{12}}=\ketbra{1}{1}_{S_{1}} \otimes \ketbra{0}{0}_{S_{2}}.
\end{align}
The Hamiltonians of the subsystems $S_1$ and $S_2$ do not change under this transformation. The transformation is reversible because $\alpha$-free-entropies for such pure states are $\alpha$ independent. As a consequence, free-entropy distances satisfy $S_d(\rho_{S_{12}} \rightarrow \sigma_{S_{12}} ) = S_d(\rho_{S_{12}} \leftarrow \sigma_{S_{12}} )$, where 
\begin{align}\label{eq:SdCarnotMax}
 S_d(\rho_{S_{12}} \to \sigma_{S_{12}})=\beta_2 a - \beta_1 a = \beta_1 W_1 + \beta_2 W_2 > 0.
\end{align}
The net extracted work in each engine cycle is given by $W_{ext}=W_1 + W_2 >0$. Recall, $\beta_1 < \beta_2$.

To understand the conversion of heat into work and how it attains the Carnot efficiency, let us explore the state transformation in the engine considering baths-system-battery composite all together. Since the system-battery $S_{12}S_{W_{12}}$ is in a block-diagonal state, the initial state of baths-system-battery composite can be expressed in the block-diagonal form as 
\begin{align}
 \gamma_{B_1} \otimes \gamma_{B_2} \otimes \rho_{S_{12}} \otimes \rho_{S_{W_{12}}}=\bigoplus_{E_1 + E_2} [\gamma_{B_1} \otimes \gamma_{B_2} \otimes \rho_{S_{12}} \otimes \rho_{S_{W_{12}}}]_{E_1 + E_2}, \nonumber
\end{align}
where $[\gamma_{B_1} \otimes \gamma_{B_2} \otimes \rho_{S_{12}} \otimes \rho_{S_{W_{12}}}]_{E_1 + E_2}$ is the part of the global state that resides in the total energy block $E_1 + E_2$. Here $E_1=E_{S_1}+E_{B_1}$ is the sum of energies belonging to $S_1$ and $B_1$, and similarly for $E_2=E_{S_2}+E_{B_2}$. A global unitary $U$ is applied jointly on the composite conserves total energy and the total weighted-energy of the baths-system composite due to the constraints \eqref{eqmt:slto-commutation} and \eqref{eqmt:slto-commutationTemp} and has the block-diagonal structure, given by $U=\bigoplus_{E_1 + E_2} U_{E_1 + E_2}$ (see Supplementary Information). The unitary $U_{E_1 + E_2}$ applies to the total energy block $E_1 + E_2$ and is allowed transfer populations within the block so that total weighted-energy $\beta_1 E_1 + \beta_2 E_2$ is strictly conserved. In the block of total energy $E_1+E_2$, the transformation becomes
\begin{align}
[\gamma_{B_1} \otimes \gamma_{B_2} \otimes \rho_{S_{12}}]_{E_1+E_2} \otimes  \rho^i_{S_{W_{12}}} \to [\sigma_{B_1B_2S_{12}}]_{E_1^\prime + E_2^\prime} \otimes \rho^f_{S_{W_{12}}}, \nonumber
\end{align}
where $\rho_{S_{W_{12}}}^i=\ketbra{0}{0}_{S_{W_1}} \otimes \ketbra{0}{0}_{S_{W_2}}$ is the initial battery state, $\rho_{S_{W_{12}}}^f=\ketbra{W_1}{W_1}_{S_{W_1}} \otimes \ketbra{W_2}{W_2}_{S_{W_2}}$ is the final battery state, and $\tr_{B_1B_2} \sigma_{B_1B_2S_{12}}=\sigma_{S_{12}}$. Here the bipartite battery Hamiltonian is $H_{S_{W_{12}}}=H_{S_{W_{1}}}=W_1 \ketbra{1}{1}_{S_{W_1}} + W_2 \ketbra{1}{1}_{S_{W_2}}$. Now the conservation of the total weighted-energy and the total energy ensure that 
\begin{align}
 \beta_1 E_1 + \beta_2 E_2 & = \beta_1 (E_1^\prime + W_1) + \beta_2 (E_2^
\prime + W_2), \label{eq:TotWeightEngCons} \\ 
 E_1 + E_2 & =E_1^\prime +  E_2^\prime + W_1 + W_2, \label{eq:TotEngCons}
\end{align}
where $E_1=E_{B_1}$, $E_2=E_{B_2}+a$, $E_1^\prime=E_{B_1}^\prime + a$, and $E_2^\prime=E_{B_2}^\prime$. The Eqs.~\eqref{eq:TotWeightEngCons} and \eqref{eq:TotEngCons}, along with Eq.~\eqref{eq:SdCarnotMax}, lead to
\begin{align}
\beta_1 (E_{B_1} - E_{B_1}^\prime) + \beta_2 (E_{B_2} - E_{B_2}^\prime) & = \beta_1 Q_1 + \beta_2 Q_2=0, \label{eq:ClausiusSaturate} \\ 
W_{ext}=W_1+W_2 & =Q_1 + Q_2 >0, \label{eq:maxWork} 
\end{align}
where we have identified the heat as the change in energy of the bath $B_1$ given by $Q_1=E_{B_1} - E_{B_1}^\prime$ and similarly $Q_2=E_{B_2} - E_{B_2}^\prime$ for bath $B_2$. The Eq.~\eqref{eq:ClausiusSaturate} represents the Clausius equality for the cyclic process. The Eq.~\eqref{eq:maxWork} quantifies the extracted work in each one-step cycle. The other total energy blocks will result in identical Clausius equality and the same amount of extracted work. Therefore, the efficiency for the heat-to-work conversion is now reduced to
\begin{align}
 \eta_C=\frac{W_{ext}}{Q_1}=1-\frac{\beta_1}{\beta_2},
\end{align}
which is exactly the Carnot efficiency. As a result, the Carnot's efficiency can be attained for an engine operating in the one-shot finite-size regime. A refrigerator can also be constructed just by reversing engine cycle, i.e., $(\sigma_{S_{12}}, H_{S_{12}}) \to (\rho_{S_{12}}, H_{S_{12}})$, and that achieves maximum possible coefficient of performance in the one-shot finite-size regime. The example for the two-qubit system we have considered above can automatically be generalized to arbitrary dimensional (working) systems where the subsystems are also allowed to update their Hamiltonians. The only requirement to attain the Carnot efficiency is that the initial and final states of the system have to be in the eigenstates of the initial and final Hamiltonians.

\section{Discussion}

In this work, we have designed a heat engine that attains Carnot efficiency in the one-shot finite-size regime. These engines are superior in work extraction compared with the other engines so far considered in the literature. To prove our result, we have developed a resource theory of quantum heat engines to systematically study thermodynamics in the presence of two or more baths in the finite-size one-shot regime (see Supplementary Information). We stress that the earlier works focus on how the a-thermal (or out-of-equilibrium) properties of a system transform into thermodynamic work and do not consider the conversion of heat into work. For such considerations, one thermal bath is sufficient. On the contrary, the resource theory developed for quantum heat engines, with two or more thermal baths, is the only framework to systematically address how and to what extent the heat can be converted into work in the quantum heat engines operating in the one-shot finite-size regime.

We have proceeded with the precise characterizations of thermodynamic operations, i.e., the semi-local thermal operations, by introducing the first law for engines, where the system simultaneously interacts with both the baths. In addition to strict total energy conservation, the first law also ensures a strict weighted-energy conservation. Then, we have defined semi-Gibbs states as resource-free states and free-entropies as the measure of thermodynamic resource applicable in the one-shot finite-size regime. With this, we have formulated the second laws for state transformations in a quantum engine. Our formulation reveals that the state transformations in a quantum engine are irreversible, in general, and are dictated by many second laws (see Supplementary Information). In addition, the framework enables us to design a Carnot heat engine operating with a one-step cycle that can enhance work extraction efficiency in the one-shot finite-size regime (see Method). Most importantly, we have presented the reversible engine operations that result in the maximum possible heat-to-work conversion efficiency, i.e., the Carnot efficiency, in the one-shot finite-size regime. 

Although the framework for resource theory of heat engines is mathematically rigorous and clean in the theoretical sense, the semi-local thermal operations are difficult to implement in experiments. Note, there have been various proposals exploring possible physical realizations of (local) thermal operations in the presence of a single bath, see for example \cite{Mazurek18, Perry18, Lostaglio2018}. Following a similar track, it can also be possible to implement semi-local thermal operations. In particular, the proposed engine operation that results in higher efficiency in the conversion of heat into work can be implemented without much difficulties. This will certainly open up the possibility to experimentally realize quantum heat engines operating in a one-step cycle with higher one-shot efficiencies in the finite-size regime, even a possibility of attaining Carnot efficiency.

In summary, our work achieves:
\begin{itemize}
\item An engine with a one-step cycle leading to higher heat-to-work conversion efficiency, even attaining Carnot efficiency in the one-shot finite-size regime for the first time. The results in turn demonstrate the superiority of the proposed engines over other engines operating in the one-shot finite-size regime.

\item A quantum heat engine that converts heat into work by exclusively utilizing inter-system correlations.

\item A fundamental gain, i.e., a concrete mathematical framework leading to a resource theory and a novel theoretical understanding of quantum and nano-scale heat engine, and, in particular, the conversion of heat into work in quantum heat engines operating in the one-shot finite-size regime, and the role of inter-system correlations in such processes.
 
\item Possibilities of new experimental avenues for quantum heat engines that attain higher or even Carnot efficiency in the one-shot finite-size regime. 
\end{itemize} 

\

\section{Method}
\subsection{Characterization of the semi-local thermal operations}
The SLTOs, introduced in the Definition~\ref{propmt:slto}, possess several important properties. We outline them below.  \\

\noindent 
(P1) For an arbitrary initial semi-Gibbs state $\gamma_{S_{12}} \in \mathcal{T}_{S_{12}}$, the SLTOs satisfy  $\Lambda_{S_{12}} \left(\gamma_{S_{12}} \right) \in \mathcal{T}_{S_{12}} $. Therefore, the SLTOs map semi-Gibbs states onto itself. This is a consequence of the very definition of operation in Definition \ref{propmt:slto}, in particular the constraint \eqref{eqmt:slto-commutationTemp}. If the local states of the subsystems, $S_1$ and $S_2$ are in thermal equilibrium, the SLTOs cannot bring them away from their local equilibrium, despite the fact that the SLTOs are semi-local in nature and allow an exchange of energy among the subsystems. This happens despite the fact that the operation is global in nature and justifies the semi-local character of the SLTOs. Therefore, under the SLTOs, the semi-Gibbs states represent the fixed points. This is one of the required properties the allowed operations should possess to formulate a resource theory.  \\

\noindent
(P2) Action of an operation that satisfies $\Lambda_{S_{12}} \left(\gamma_{S_{12}} \right) \in \mathcal{T}_{S_{12}}$, on a system state $\rho_{S_{12}}$ can be simulated by an SLTO, given that $[\rho_{S_{12}}, H_{S_{12}}]=0$. A detailed proof is given in the Supplementary Information. Note, these operations are termed as the semi-Gibbs preserving operations and form a larger set of operation than that of the SLTOs.   \\

\noindent 
(P3) The SLTOs can implement the changes in the system Hamiltonians. For instance, an initial non-interacting Hamiltonian of $S_{12}$ can be updated to a new one, as $H_{S_{12}}=H_{S_1} + H_{S_2} \to H_{S_{12}}^\prime=H_{S_1}^\prime + H_{S_2}^\prime$, with the help of clocks (see Supplementary Information). These operations are nothing but the (semi-local) adiabatic transformations in a typical heat engine. \\ 

\noindent
(P4) It is interesting to note that the SLTOs, as well as the cSLTOs, are time-translation symmetric operations with respect to the time evolution generated by the Hamiltonian $H_{S_{12}}=H_{S_1} + H_{S_2}$. In other words, if there is a time translation of the system driven by unitary $V(t)=e^{-itH_{S_{12}}/\hbar}$ for any given time $t$, then 
\begin{align}\label{eq:ttsymtotalHam12}
 V(t) \left[\Lambda_{S_{12}}(\rho_{S_{12}})\right]V^\dag(t)= \Lambda_{S_{12}}\left[V(t) (\rho_{S_{12}})V^\dag(t)\right],
\end{align}
i.e., the order of the time translation operations and the SLTO commute. Because of this symmetric property, the SLTOs monotonically decrease the superpositions between different eigenbases of $H_{S_{12}}$ or, in other words, among the energy eigenbases. Note the SLTOs are also time-translation symmetric with respect to the system's weighted-Hamiltonian $H^{\beta_1 \beta_2}_{S_{12}}=\beta_1 H_{S_1} + \beta_2 H_{S_2}$, since $[H_{S_{12}}^{\beta_1 \beta_2}, H_{S_{12}}]=0$. 

The Eq.~\eqref{eq:ttsymtotalHam12} further implies that the SLTOs commute with the dephasing operations in the eigenbases of the Hamiltonian of the system $H_{S_{12}}=\sum_{i,j} (E^{S_1}_i + E^{S_2}_j) \proj{ij}$, i.e., 
\begin{align}
\Lambda_{S_{12}}\circ P_{S_{12}} (\rho_{S_{12}}) = P_{S_{12}} \circ \Lambda_{S_{12}} (\rho_{S_{12}}), \label{eq:dephaseGlob}
\end{align}
where $P_{S_{12}}(\rho_{S_{12}})= \sum_{ij} \braket{ij|\rho_{S_{12}}|ij} \ketbra{ij}{ij}$ is the dephasing operation. Note the dephasing operation can be achieved by time averaging time-translated state for a long enough time $T$,
\begin{align}
P_{S_{12}}(\rho_{S_{12}})=\frac{1}{T} \int_0^T V(t) \ (\rho_{S_{12}}) \ V(t) dt. \label{eqApp:DephasedStateWeightedHamil}
\end{align}
The Eq. \eqref{eq:dephaseGlob} signifies that the diagonal elements in the eigenbases of $H_{S_{12}}$, i.e., $P_{S_{12}}(\rho_{S_{12}})$, evolve independently of the off-diagonal elements. Further, the amount of asymmetry present in a state due to the superposition between different energy eigenbases monotonically decreases under the SLTOs. We use these properties to understand the free-entropy locking in superpositions and to add more conditions to supplement the second laws for state transformations (see Supplementary Information).

\subsection{More extractable work using cSLTOs}
Here we show how the semi-local character in the cSLTOs can lead to a higher amount of extractable work compared to the engine that operates using (local) thermal operations. Consider that the working system is initially in an uncorrelated state $\rho_{S_{12}}=\rho \otimes \sigma$ in an engine. Then the one-step engine transformation, given in Eq.~\eqref{eq:EngineTransformGen}, reduces to 
\begin{align}
\label{eq:EngCycleUnCorrmt}
 (\rho \otimes \sigma, \ H_{S_{12}})  \rightarrow (\sigma \otimes \rho, \ H_{S_{12}}^\prime).
\end{align} 
In this transformation, there are two sub-transformations happening simultaneously via a cSLTO; (i) forward sub-transformation, $(\rho, H_{S_1}) \to (\sigma, H_{S_2})$ in presence of the bath $B_1$ at inverse temperature $\beta_1$, and (ii) the reverse sub-transformation $(\sigma, H_{S_2}) \to (\rho, H_{S_1})$ while interacting with bath $B_2$ at inverse temperature $\beta_2$. 

For the uncorrelated state $\rho_{12}=\rho \otimes \sigma$, the $\alpha$-free-entropy becomes additive $S_\alpha (\rho \otimes \sigma, \gamma_{S_1}\otimes \gamma_{S_2})=S_\alpha (\rho, \gamma_{S_1})+S_\alpha (\sigma,  \gamma_{S_2})$, where $S_\alpha (\rho_{S_{x}}, \gamma_x)= D_\alpha \left(\rho_{S_{x}} \parallel \gamma_{S_x} \right) - \log Z_x$. The $S_\alpha (\rho_{S_{x}}, \gamma_x)/\beta_x $ is the accessible $\alpha$-free energy stored in the system $\rho_{S_{x}}$ and can be converted into work using a (local) thermal operation in presence of a bath at inverse temperature $\beta_x$ \cite{Brandao15}. The second laws, considering the sub-transformations (i) and (ii) simultaneously occur via a cSLTO, imply 
\begin{align}\label{eq:2ndLawsWEng}
\beta_1 W_1^{(\alpha)} + \beta_2  W_2^{(\alpha)} \geqslant 0, \ \ \forall \alpha \geqslant 0,
\end{align} 
where the $W_x^{(\alpha)}$ quantifies the change in $\alpha$-work due to the transformation in the presence of the bath at inverse temperature $\beta_x$. In terms of the $\alpha$-free energies \cite{Brandao15}, we express these $\alpha$-works as
\begin{align}
 & W_1^{(\alpha)} = \frac{1}{\beta_1} \left[ S_{\alpha} \left(\rho, \gamma_{S_{1}}  \right) - S_{\alpha} \left(\sigma, \gamma_{S_{1}}^\prime \right) \right], \\
 & W_2^{(\alpha)}= \frac{1}{\beta_2} \left[ S_{\alpha} \left(\sigma, \gamma_{S_{2}}  \right) - S_{\alpha} \left(\rho, \gamma_{S_{2}} ^\prime \right) \right],
\end{align}
where $\gamma_{S_{x}}=\frac{e^{-\beta_x H_{S_x}}}{\tr [e^{-\beta_x H_{S_x}}]}$ for $x=1,2$, $\gamma_{S_1}^\prime=\frac{e^{-\beta_1 H_{S_2}}}{\tr [e^{-\beta_1 H_{S_2}}]}$, and $\gamma_{S_2}^\prime=\frac{e^{-\beta_2 H_{S_1}}}{\tr [e^{-\beta_2 H_{S_1}}]}$. Given $\beta_1 < \beta_2$ and a spontaneous engine cycle, the Eq.~\eqref{eq:2ndLawsWEng} guarantees that $W_{ext}^{(\alpha)}=W_1^{(\alpha)} + W_2^{(\alpha)} > 0, \ \forall \alpha \geqslant 0$. 

Let us now show that the one-shot efficiency of the Carnot engine operating via cSLTOs is larger, in general, compared to the case considered in Figure \ref{fig:QHEng}, where the system locally interacts with individual baths at a time. Suppose that the system locally interacts with the baths using local thermal operations \cite{Brandao13, Horodecki13, Brandao15} and undergoes two sub-transformations (i) and (ii) in separate steps to complete the Carnot cycle, as discussed earlier. For these sub-transformations (i) and (ii), the one-shot extractable work and the work cost under local thermal operations, respectively, are
\begin{align}
\bar{W}_1=\inf_{\alpha \geqslant 0}[W_1^{(\alpha)}] \leqslant W_1, \ \ \ \mbox{and} \ \ \  
\bar{W}_2=\sup_{\alpha \geqslant 0}[W_2^{(\alpha)}] \geqslant W_2.
\end{align}
The net one-shot extracted work using local thermal operations is $\bar{W}_{ext}=\bar{W}_1 + \bar{W}_2$, where $\bar{W}_1 \geqslant 0$ and $\bar{W}_2 \leqslant 0$. It is easy to check that $W_{ext} \geqslant \bar{W}_{ext}$ which is satisfied for arbitrary engine cycle in general. \\


\noindent {\bf Acknowledgments:} The authors thank Andreas Winter for fruitful discussions, and also thank Markus P. M\"uller, Mischa P. Woods, and Raam Uzdin for useful comments. ML and MLB acknowledge supports from ERC AdG NOQIA, Spanish Ministry of Economy and Competitiveness (``Severo Ochoa'' program for Centres of Excellence in R \& D (CEX2019-000910-S), Plan National FIDEUA PID2019-106901GB-I00/10.13039 / 501100011033, FPI), Fundació Privada Cellex, Fundació Mir-Puig, and from Generalitat de Catalunya (AGAUR Grant No. 2017 SGR 1341, CERCA program, QuantumCAT U16-011424 , co-funded by ERDF Operational Program of Catalonia 2014-2020), MINECO-EU QUANTERA MAQS (funded by State Research Agency (AEI) PCI2019-111828-2 / 10.13039/501100011033), EU Horizon 2020 FET-OPEN OPTOLogic (Grant No 899794), the National Science Centre, Poland-Symfonia Grant No. 2016/20/W/ST4/00314, and the Spanish Ministry MINECO. MNB gratefully acknowledges financial supports from SERB-DST (CRG/2019/002199), Government of India. 




%

\clearpage

\onecolumngrid
 \begin{center}
   \textbf{\large \\Supplementary Information: \\ 
   	Attaining Carnot Efficiency with Quantum and Nano-scale Heat Engines}\\[.2cm]
   Mohit Lal Bera,$^{1}$ Maciej Lewenstein,$^{1,2}$ and Manabendra Nath Bera$^{3,*}$\\[.1cm]
   {\itshape ${}^1$ICFO -- Institut de Ci\`encies Fot\`oniques, The Barcelona Institute of Science and Technology, ES-08860 Castelldefels, Spain\\
   ${}^2$ICREA, Pg.~Lluis Companys 23, ES-08010 Barcelona, Spain\\
   ${}^3$Department of Physical Sciences, Indian Institute of Science Education and Research (IISER), Mohali, Punjab 140306, India\\}
   ${}^*$Electronic address: mnbera@gmail.com\\
 \end{center}

\setcounter{equation}{0}
\setcounter{figure}{0}
\setcounter{table}{0}
\setcounter{page}{1}
\setcounter{section}{0}
\renewcommand{\theequation}{S\arabic{equation}}
\renewcommand{\figurename}{Supplementary Figure}
\renewcommand{\tablename}{Supplementary Table}


\section{Background and motivations for resource theory of quantum heat engines \label{sec:Back}}
In the last decades, enormous efforts have been put forward to extend thermodynamics to the regimes where a system made up of a finite (typically moderate or small) number of quantum particles interacts with a single thermal bath at fixed temperature \cite{Gemmer09, Binder18}. This also includes the situation where one has access to repeated, simultaneous, and one-shot measurements on the particles. It leads to two major approaches to studying quantum thermodynamics. The first, which applies in fact to both the asymptotic regime and the many-shot finite-size regime, is based on fluctuation theorems (FT), exploiting statistical mechanics and open quantum systems dynamics \cite{Jarzynski97, Crooks99, Campisi11}. The other one is based on the quantum information theory \cite{Brandao13, Skrzypczyk14, Horodecki13, Aberg13, Brandao15, Lostaglio15, Cwiklinski15, Lostaglio15a, Bera16, Bera17, Sparaciari17, Uzdin18, Muller18}. Among others, the latter leads to a resource theory of quantum systems out of thermal equilibrium, which is commonly termed as the resource theory of quantum thermodynamics (RTQTh) \cite{Brandao13, Horodecki13, Brandao15}. The RTQTh is applicable to the asymptotic regime, and both many-shot and one-shot finite-size regimes. The RTQTh stands out among the other approaches as it exploits a rigorous mathematical framework similar to the resource theory of entanglement, where the latter was developed to characterize the role of entanglement in quantum information processing. There is also another formulation of a resource theory based on complete-passivity (CPTh) \cite{Sparaciari17, Bera17} that generalizes thermodynamics to the situation where system and baths become comparable in size. But, the CPTh is applicable to the asymptotic regime only. These different approaches can be classified in terms of their applications in different regimes, as given in Supplementary Table \ref{tab:classification}.

\begin{table}[h]
	\begin{center}
		\begin{tabular}{ |c | c | c| }
			\hline
			{\bf Regimes} & {\bf Asymptotic} ($N \to \infty$)  & {\bf Finite-size} ($N \ll \infty$) \\
			\hline
			{\bf Repeated} &  &  \\
			{\bf measurements} & STh, \ CPTh, \ FT, \ RTQTh & FT, \ RTQTh \\
			\hline
			{\bf One-shot} &  &  \\
			{\bf measurement } & FT, \ RTQTh & RTQTh \\
			\hline  
		\end{tabular}
		\caption{\label{tab:classification} The Supplementary table classifies various approaches based on their regime of applications. Here STh, FT, CPTh, and RTQTh represent the standard thermodynamics, the fluctuation theorem, the complete-passivity based resource theory of thermodynamics, and the resource theory of quantum thermodynamics respectively. The $N$ denotes the number of particles in a system interacting with a thermal bath. Note, RTQTh fully characterizes quantum thermodynamics only for the systems that are block-diagonal in the energy eigenbases \cite{Brandao15}.}
	\end{center}
\end{table}

The resource theoretic formulation reveals that thermodynamics in the one-shot finite-size regime is not reversible and one needs many second laws, associated with many free energies, to characterize the transformations among the states that are block-diagonal in energy eigenbases \cite{Horodecki13, Aberg13, Brandao15}. These second laws have been further studied for more general states having superpositions in energy eigenbases \cite{Lostaglio15, Cwiklinski15, Lostaglio15a, Gour2018}. Interestingly, in \cite{Muller18}, it has been shown that by allowing a non-vanishing amount of correlation all these many second laws can be reduced to a single one, based on standard Helmholtz free energy. Recently, the approaches based on fluctuation theory and resource theory have been inter-connected for some cases \cite{Alhambra16, Aberg18, Guarnieri19}. However, all these investigations are limited to the situations, where the quantum system is interacting with only \emph{one} thermal bath at a fixed temperature. Therefore it is a natural question to ask whether it is possible to formulate a \emph{resource theory} for heat engines operating in the \emph{one-shot finite-size regime}, where a system composed of few quantum particles is interacting with two or more thermal baths at different temperatures. Apart from some efforts to quantify extractable work and engine efficiency in few special cases, and to study the finite-size effects and the quantum signatures \cite{Alicki79, Kieu04, Verley14, Rosnagel14, Uzdin15, Tajima17, Ng17, Ito18, Woods2019, Woods19a}, there has been no major progress, so far, in formulating a resource theory for quantum heat engines. 

The goal of this work is to formulate such a resource theory. It is worth mentioning that much of the earlier works, applied to the one-shot finite-size regime, focus on how the a-thermal (non-equilibrium) property of a system can be converted into thermodynamic work, and, for that, one thermal bath is enough. On the contrary, here we develop a resource theory of quantum heat engines to address how, and to what extent, the heat can be converted into work in the one-shot finite-size regime. Therefore, the theory is fundamentally different from the one considered in the earlier works. The new formalism provides the foundation for a novel theoretical understanding of the one-shot conversion of heat into work and the role of inter-system correlations in such processes. At the same time, it opens up new avenues to explore physically realizable quantum heat engines that have higher efficiency and attain Carnot efficiency in the one-shot finite-size regime. 

For any resource theory, it is necessary to identify the resource-free states that possess zero thermodynamic resources, and the free operations that neither can create (or increase) resource in a state nor requires resource to implement. Further, for a resource theory of quantum heat engines, the resource should, in principle, be related to the thermodynamic work. In the resource theory of thermodynamics in the presence single bath at a fixed temperature \cite{Brandao13, Horodecki13, Brandao15}, the resource-free states are the ones that are in thermal equilibrium with the bath, often termed as the Gibbs states. The free operations are the (local) thermal operations, and the thermodynamic resource is quantified by the free energies. The framework mostly focuses on how the a-thermal (out-of-equilibrium) properties of a system transform into thermodynamic work.

The situation changes drastically once one considers thermodynamics in the presence of more than one thermal bath at different temperatures, which is the case for a heat engine. The first difficulty appears in defining the resource-free states. There does not exist a state that is simultaneously in equilibrium with all the baths. All states have some non-vanishing thermodynamic resources. Furthermore, it is not possible to define the free operations, as the free operations are supposed to map a resource-free state to a resource-free state. Therefore, one cannot formulate a resource theory for heat engines just by merely extending the one formulated for a single bath. Rather, to start with, it requires one to introduce a new class of thermodynamic operations that are allowed in a heat engine, a new form of states as the resource-free states, and to invoke new quantifiers of thermodynamic resources. This leads us to introduce an entirely new resource theory for quantum and nano-scale heat engines, below.

\section{Thermal baths and system-bath composites}
The goal of this section is to characterize the Hilbert spaces of considerably large bath(s) at certain temperature(s), small systems that are in and away from thermal equilibrium, and their composites.

\subsection{Some useful properties of baths}
There are several useful properties of a considerably large baths, compared to the systems they interact with. A bath is considered to be always in thermal equilibrium at a fixed temperature, even after it interacts with a system. Therefore it has to be reasonably large so that it almost does not change after the interaction and remain in equilibrium. So, a bath being large is an important assumption.

All the systems, we consider, have Hamiltonians bounded from below, i.e., the lowest energy is zero. Consider a bath $B_x$ the Hamiltonian $H_{B_x}$ which has the largest energy is $E^{max}_{B_x} \rightarrow \infty$. The heat bath always remains in a Gibbs state $\gamma_{B_x}=\frac{e^{-\beta H_{B_x}}}{\tr [e^{-\beta H_{B_x}}]}$ with inverse temperature $\beta_x$. Now say there two baths $B_1$ and $B_2$ with the Hamiltonians $H_{B_1}$ and $H_{B_2}$ and the inverse temperatures $\beta_1$ and $\beta_2$. The joint thermal state of the baths is expressed as
\begin{align}
	\gamma_{B_{12}}=\gamma_{B_1} \otimes \gamma_{B_2}.
\end{align}
There exists a set of energies $\mathcal{E}_{B_{12}}$ in which the baths jointly live with high probability. Mathematically, for the projector $P_{\mathcal{E}_{B_{12}}}$ that spans over the space with a set of the total energies $\mathcal{E}_{B_{12}}$, this is expressed as 
\begin{align}\label{eq:BathTypicality}
	\tr [P_{\mathcal{E}_{B_{12}}} \gamma_{B_{12}}] \geqslant 1- \delta,
\end{align}
where $\delta > 0$. Given this, the bath satisfies the following properties, (cf. \cite{Horodecki13}):
\begin{itemize}
	\item The energy $E_{B_{12}} \in \mathcal{E}_{B_{12}}$ is peaked around a mean value as $E_{B_{12}} \in \left\{ \braket{E_{B_{12}}} - O(\sqrt{E_{B_{12}}}), \ldots , \braket{E_{B_{12}}} + O(\sqrt{E_{B_{12}}})  \right\}$.  
	
	\item The degeneracies $g_B(E_{B_{12}})$ in the energies $E_{B_{12}}= E_{B_{1}} + E_{B_{2}}\in \mathcal{E}_{B_{12}}$ scale exponentially with $E_{B_{1}}$ and $E_{B_{1}}$, i.e., $g_B(E_{B_{12}}) \geqslant e^{x E_{B_{1}} + y E_{B_{2}}}$, where $x, \ y$ are constants. Here $E_{B_{1}}$ and $E_{B_{2}}$ are the energies of the baths $B_1$ and $B_2$ respectively.
	
	\item Consider any pair of three energies $(E_{B_{1}}, E_{S_{1}}$, $E_{S_{1}}^\prime)$ and $(E_{B_{2}}, E_{S_{2}}$, $E_{S_{2}}^\prime)$, so that $E_{B_{12}}= E_{B_{1}} + E_{B_{2}} \in \mathcal{E}_{B_{12}}$, $E_{S_1} \ll E_{B_1}$, and $E_{S_1}^\prime \ll E_{B_1}$, and similarly $E_{S_2} \ll E_{B_2}$, and $E_{S_2}^\prime \ll E_{B_2}$. Then there exists a $E_{B_{12}}^\prime= E_{B_{1}}^\prime + E_{B_{2}}^\prime \in \mathcal{E}_{B_{12}}$ so that $E_{B_{1}} + E_{S_{1}}=E_{B_{1}}^\prime + E_{S_{1}}^\prime$ and $E_{B_{2}} + E_{S_{2}}=E_{B_{2}}^\prime + E_{S_{2}}^\prime$.

	\item For an energy $ E_{B_{12}} \in \mathcal{E}_{B_{12}}$, the degeneracies satisfy $g_B(E_{B_{12}} + E_{S_{1}} + E_{S_{2}}) \approx g_B(E_{B_{12}}) e^{\beta_1 E_{S_{1}}+\beta_2 E_{S_{2}}}$.
\end{itemize}
These properties are instrumental in understanding the thermodynamics of quantum and nano-scale systems interacting with large baths.

\subsection{Two baths and two (sub-)systems \label{sec:2bath2sys}}
Without loss of generality we consider a bipartite system $S_{12}$ with two subsystems, $S_1$ and $S_2$, that are \emph{semi-locally} interacting with two baths $B_1$ and $B_2$ respectively where the baths are with the inverse temperatures $\beta_1$ and $\beta_2$. We skip the discussion on the notion of ``semi-local'' here. We elaborate on it later to characterize the thermodynamics operations that are applicable in a quantum heat engine.

\begin{figure}[h]
	\includegraphics[width=0.6\columnwidth]{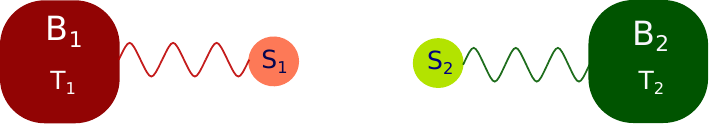}
	\caption{A schematic of a situation where system $S_1$ is (semi-locally) interacting with the bath $B_1$, and the system $S_2$ is semi-locally interacting with the bath $B_2$. Temperatures of the baths are $T_1=\frac{1}{\beta_1}$ and $T_2=\frac{1}{\beta_2}$.
		\label{fig:QHEcopy1}}
\end{figure} 

Say, the two considerably large baths $B_1$ and $B_2$ are with Hamiltonians $H_{B_1}$ and $H_{B_2}$ respectively. Further, the small systems $S_1$ and $S_2$ are the subsystems of a bipartite system $S_{12}$ with the Hilbert space $\mathcal{H}_{S_{12}}=\mathcal{H}_{S_1} \otimes \mathcal{H}_{S_2}$. The system $S_{12}$ possesses a non-interacting Hamiltonian $H_{S_{12}}=H_{S_1} + H_{S_2}$. We denote $E_{B_x}$ and $E_{S_x}$ as the energies of the bath $B_x$ and the subsystem $S_x$ respectively. The bath Hamiltonians $H_{B_{1/2}}$ are bounded from below and could have $E_{B_{1/2}}^{\max} \rightarrow \infty$. The system Hamiltonians $H_{S_{1/2}}$ are also bounded from below and satisfy $E_{S_{1/2}}^{\max} \ll E_{B_{1/2}}^{\max}$. We consider here non-degenerate system Hamiltonians $H_{S_{1/2}}$. Extension to degenerate cases can be done easily.

The underlying joint Hilbert space corresponding to the $S_{12}$, $B_1$, and $B_2$ is now $\mathcal{H}_{B_1} \otimes \mathcal{H}_{B_2} \otimes \mathcal{H}_{S_1} \otimes \mathcal{H}_{S_2}$. Here we assume the systems to interact as in Supplementary Figure \ref{fig:QHEcopy1}. It can be expressed as the Kronecker sums of constant total energy sub-spaces, i.e.,
\begin{align}
	\mathcal{H}_{B_1} \otimes \mathcal{H}_{B_2} \otimes \mathcal{H}_{S_1} \otimes \mathcal{H}_{S_2} =\bigoplus_{E_1 + E_2} \left(\bigoplus_{E_{S_1}+ E_{S_2}} \mathcal{H}^{E_1-E_{S_1}}_{B_1} \otimes \mathcal{H}^{E_2-E_{S_2}}_{B_2} \otimes \mathcal{H}_{S_1}^{E_{S_1}} \otimes \mathcal{H}_{S_2}^{E_{S_2}} \right), 
\end{align}
where the total energies are given by $E_1 + E_2$, and $E_1 = E_{B_1}+E_{S_1}$ and $E_2 = E_{B_2}+E_{S_2}$ are the energies corresponding to the composites $B_1S_1$ and $B_2S_2$ respectively. The total energies are the eigenvalues of the total Hamiltonian 
\begin{align}
	H_{B_1B_2S_{12}}=H_{B_1} + H_{S_1} + H_{B_2} + H_{S_2}.
\end{align}
It is important to notice that there are many combinations $E_1$ and $E_2$ for which $E_1 + E_2$ is identical. Consequently, any system-baths joint state can be written in terms of fixed total energy blocks. A system-baths state $\gamma_{B_1} \otimes \gamma_{B_2} \otimes \rho_{S_{12}}$, which is diagonal in the energy eigenbases, can be expressed as
\begin{align}\label{eq:WeightedEngBlocksState}
	\gamma_{B_1} \otimes \gamma_{B_2} \otimes \rho_{S_{12}}=\sum_{E_1 + E_2} P_{E_1 + E_2} \left( \gamma_{B_1} \otimes \gamma_{B_2} \otimes \rho_{S_{12}} \right) P_{E_1 + E_2}=\sum_{E_1 + E_2} p_{E_1 + E_2} \  \rho_{E_1 + E_2}^{B_1B_2S_{12}},
\end{align}
where $P_{E_1 + E_2}$s are the projectors with total energy $E_1 + E_2$ and $p_{E_1 + E_2}=\tr \left[ P_{E_1 + E_2} \left( \gamma_{B_1} \otimes \gamma_{B_2} \otimes \rho_{S_{12}} \right) \right]$ are the probabilities. Note, for a given value $E_1 + E_2=X$, the projector is expressed as $P_{E_1 + E_2}=\sum_{E_1^{i}, E_2^{j}} P_{E_1^{i}} \otimes P_{E_2^{j}}$ for $E_1^{i} + E_2^{j}=X$. 

Say the set of energies $\mathcal{E}_{12}=\mathcal{E}_{1}+\mathcal{E}_{2}$ in which the baths $B_1$ and $B_2$ jointly live with high probability and satisfy the properties mentioned for Eq.~\eqref{eq:BathTypicality}. Then, for $E_1 \in \mathcal{E}_1$ and $E_2 \in \mathcal{E}_2$, the normalized joint state $\rho^{E_1 + E_2}_{B_1B_2S_{12}}$ after the projection with the projector $P_{E_1 + E_2}$ is
\begin{align}
	\rho_{E_1 + E_2}^{B_1B_2S_{12}} & = \frac{1}{p_{E_1 + E_2}} P_{E_1 + E_2} \left( \gamma_{B_1} \otimes \gamma_{B_2} \otimes \rho_{S_{12}} \right) P_{E_1 + E_2}, \nonumber \\
	& \approx \bigoplus_{E_{S_1} + E_{S_2}} \eta_{E_1-E_{S_1} + E_2-E_{S_2}}^{B_1B_2} \otimes P_{E_{S_1} + E_{S_2}} (\rho_{S_{12}}) P_{E_{S_1} + E_{S_2}},
\end{align}
where $P_{E_{S_1} + E_{S_2}}$ are the projectors on the system ($S_{12}$) space spanning over the block given by the energy $E_{S_1} + E_{S_2}$, and 
\begin{align}
	\eta_{E_1-E_{S_1} + E_2-E_{S_2}}^{B_1B_2} = \eta_{E_1-E_{S_1}}^{B_1} \otimes \eta_{E_2-E_{S_2}}^{B_2}= \frac{\mathbb{I}_{E_1-E_{S_1}}^{B_1}}{g_{B_1}(E_1) e^{-\beta_1 E_{S_1}}} \otimes \frac{\mathbb{I}_{E_2-E_{S_2}}^{B_2}}{g_{B_2}(E_2) e^{-\beta_2 E_{S_2}}}.
\end{align}   
With these minimum implements in hand, we move on to characterize semi-local thermal operations below. 

\section{Characterization of semi-local thermal operations \label{sec:Character-slto}}

Recall the Stinespring dilation of the SLTOs given in the main text, 
\begin{align}
	\label{eq:slto-Stinespring}
	\Lambda_{S_{12}} \left(\rho_{S_{12}} \right)=\tr_{B_1 B_2} \left[ U (\gamma_{B_1} \otimes \gamma_{B_2} \otimes \rho_{S_{12}}) U^\dag \right],
\end{align} 
where the global unitary $U$ satisfies the commutation relations (R1) $\left[U, \ H_{B_1} + H_{B_1} + H_{B_2} + H_{S_2}  \right] =0$, and (R2) $\left[U, \ \beta_1 \ (H_{S_1} + H_{B_1}) + \beta_2 \ (H_{S_1} + H_{B_1})  \right] =0$. The $H_{S_x}$ and $H_{B_x}$ are the Hamiltonians of the subsystem $S_x$ and the bath $B_x$ respectively for $x=1,2$. The thermal states of the baths are denoted by $\gamma_{B_x}=\frac{e^{-\beta_xH_{B_x}}}{\tr [e^{-\beta_xH_{B_x}}]}$.

The unitary $U$ strictly conserves both the total energy and the total weighted-energy, as guaranteed by the commutation relations (R1) and (R2). It can be expressed in terms of the total energy blocks, as 
\begin{align}\label{eq:UniJoint}
	U= \sum_{E_1 + E_2} P_{E_1 + E_2} \left( U \right) P_{E_1 + E_2}=\bigoplus_{E_1 + E_2} U_{E_1 + E_2},
\end{align}
where $P_{E_1 + E_2}$ is the projector spanning the sub-space with total energy $E_1 + E_2$. The global unitary $U$ cannot transfer populations among the blocks with different total energies. The $U_{E_{1}+E_{2}}$ cannot be arbitrary. Within a block given by the fixed total energy $E_1 + E_2$, the $U_{E_1 + E_2}$ can implement population transfer among the states that have identical total weighted-energy $\beta_1 E_1 + \beta_2 E_2$. 

For a system state $\rho_{S_{12}}$, the $U_{E_1 + E_2}$ operates on the total energy block given by the normalized state system-baths composite $\rho_{E_1 + E_2}^{B_1B_2S_{12}}$, defined in the earlier section, that is
\begin{align}
	\rho_{E_1 + E_2}^{B_1B_2S_{12}} & = \frac{1}{p_{E_1 + E_2}} P_{E_1 + E_2} \left( \gamma_{B_1} \otimes \gamma_{B_2} \otimes \rho_{S_{12}} \right) P_{E_1 + E_2}, \nonumber \\
	& \approx \bigoplus_{E_1 + E_2} \frac{\mathbb{I}_{E_1-E_{S_1}}^{B_1}}{g_{B_1}(E_1) e^{-\beta_1 E_{S_1}}} \otimes \frac{\mathbb{I}_{E_2-E_{S_2}}^{B_2}}{g_{B_2}(E_2) e^{-\beta_2 E_{S_2}}} \otimes P_{E_{S_1} + E_{S_2}} (\rho_{S_{12}}) P_{E_{S_1} + E_{S_2}},
\end{align}
as discussed in Section \ref{sec:2bath2sys}. The $E_1$ and $E_2$ are the energies correspond to the composites $B_1S_1$ and $B_2S_2$. Recall, there are many combinations of $E_1$ and $E_2$ that led to same value of total energy $E_1 + E_2$ and total weighted-energy $\beta_1 E_1+ \beta_2 E_2$. 

\subsection{SLTOs are those that preserve semi-Gibbs states, and vice versa}
From the definition of SLTOs itself, it is clear that these operations preserve semi-Gibbs states. Let us consider the microscopic picture. Say the initial system state is in the semi-Gibbs state, given by
\begin{align}
	\rho_{S_{12}}=\gamma_{S_1} \otimes \gamma_{S_2}=\sum_i\frac{e^{-\beta_1 E^{S_1}_i}}{Z_{S_1}} \proj{E^{S_1}_i} \otimes \sum_j\frac{e^{-\beta_2 E^{S_2}_j}}{Z_{S_2}} \proj{E^{S_2}_j},
\end{align}
where $\ket{E^{S_{1}}_{i}}$ are the energy eigenstates of the Hamiltonian $H_{S_{1}}=\sum_{i} E^{S_{1}}_{i} \proj{E^{S_{1}}_{i}}$ of the subsystem $S_{1}$, and the $\ket{E^{S_{2}}_{j}}$ are the energy eigenstates of the Hamiltonian $H_{S_{2}}=\sum_{j} E^{S_{2}}_{j} \proj{E^{S_{2}}_{j}}$ of the subsystem $S_{2}$. The $Z_{S_{1}}$ and $Z_{S_{2}}$ are the partition functions. For each block with the total energy $E_1 + E_2$, the bath-system composite state, as discussed in Section \ref{sec:2bath2sys}, is 
\begin{align}
	\rho^{B_{1}B_{2}S_{12}}_{E_1 + E_2} &=  \bigoplus_{E_i^{S_1}+E_j^{S_2}} \frac{\mathbb{I}}{g_{B_1}(E_1)e^{-\beta_1 E_i^{S_1}}} \otimes \frac{\mathbb{I}}{g_{B_2}(E_2)e^{-\beta_2 E_j^{S_2}} } \otimes \left(\frac{e^{-\beta_1 E_i^{S_1}}}{Z_{S_1}} \proj{E^{S_1}_i} \otimes  \frac{e^{-\beta_2 E_j^{S_2}}}{Z_{S_2}} \proj{E^{S_2}_j}\right), \\
	& = \frac{\mathbb{I}}{g_{B_1}(E_1)Z_{S_1}} \otimes \frac{\mathbb{I}}{g_{B_2}(E_2)Z_{S_2} } \label{eq:MaxMixedSysBath}.
\end{align}
Clearly, the application of a unitary (with the form \eqref{eq:UniJoint}) on the joint system-bath composite that strictly conserves both the total energy and the total weighted-energy will not change the maximally mixed state \eqref{eq:MaxMixedSysBath} in every total energy block. Therefore, a semi-Gibbs state will not change upon the application of an SLTO. 

Now we consider the reverse statement that is if a semi-Gibbs state preserving operation can be implemented using an SLTO, placed in the Corollary below.

\begin{cor}[Semi-Gibbs preservation of semi-local thermal operations]\label{defi:slto2}
	Consider two non-interacting sub-systems $S_1$ and $S_2$, of a bipartite system $S_{12}$, that are semi-locally interacting with the baths $B_1$ and $B_2$, at inverse temperatures $\beta_1$ and $\beta_2$, respectively. Then the semi-local thermal operations are the ones that satisfy the semi-Gibbs preservation condition, 
	\begin{align}
		\Lambda_{S_{12}} \left( \gamma_{S_1} \otimes \gamma_{S_2} \right) = \gamma_{S_1}^\prime \otimes \gamma_{S_2}^\prime \in \mathcal{T}_{S_{12}}, \ \ \ \forall \ \gamma_{S_1} \otimes \gamma_{S_2} \in \mathcal{T}_{S_{12}}.
	\end{align}
\end{cor}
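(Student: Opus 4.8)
The corollary is a biconditional characterization: the semi-local thermal operations are \emph{exactly} the maps that send every semi-Gibbs product state to a semi-Gibbs product state. I would therefore prove the two inclusions separately, in both cases working in the total weighted-energy block decomposition set up in Eqs.~\eqref{eq:InState}--\eqref{eq:UniJoint}.

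\textbf{SLTO $\Rightarrow$ semi-Gibbs preservation.} This direction is essentially the block computation just performed. For any $\gamma_{S_1}\otimes\gamma_{S_2}\in\mathcal{T}_{S_{12}}$, the joint bath--system state in each total weighted-energy block collapses to the maximally mixed state of Eq.~\eqref{eq:MaxMixedSysBath}, namely $\mathbb{I}/(g_{B_1}(E_1)Z_{S_1})\otimes\mathbb{I}/(g_{B_2}(E_2)Z_{S_2})$. Since every admissible $U_{BS}$ is block-diagonal in total weighted-energy (Eq.~\eqref{eq:UniJoint}), it acts within each block as a unitary and leaves that block's maximally mixed state invariant. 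Re-summing the blocks with their weights $p_{E_{12}^{\beta_1\beta_2}}$ and tracing out $B_1B_2$ then returns a state of the form $\gamma_{S_1}^\prime\otimes\gamma_{S_2}^\prime\in\mathcal{T}_{S_{12}}$; the only point to spell out is that the partial trace of the invariant block state reproduces a semi-Gibbs marginal, which follows by reversing the derivation of Eq.~\eqref{eq:InState}.

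\textbf{Semi-Gibbs preservation $\Rightarrow$ SLTO.} This is the substantive content and parallels the single-bath characterization of Gibbs-preserving maps \cite{Horodecki13}. The plan is: (i) restrict to inputs block-diagonal in the weighted-energy eigenbasis, where a semi-Gibbs-preserving map is encoded by a stochastic matrix $T$ on the system weighted-energy levels that fixes the semi-Gibbs occupation vector with entries $(e^{-\beta_1 E^{S_1}_i}/Z_{S_1})(e^{-\beta_2 E^{S_2}_j}/Z_{S_2})$; and (ii) realize this $T$ by an explicit $U_{BS}$ of the block-diagonal form~\eqref{eq:UniJoint}, followed by $\tr_{B_1B_2}$. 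The enabling mechanism is the bath degeneracy relation $g_{B_x}(E_b+E_s)\approx g_{B_x}(E_b)e^{\beta_x E_s}$ recalled among the bath properties: it forces the number of bath microstates compatible with the system sitting at weighted level $\beta_1 E^{S_1}_i+\beta_2 E^{S_2}_j$, inside a fixed total block, to scale precisely as the semi-Gibbs weight. Hence a suitable permutation (more generally a partial isometry) of basis vectors within each block can shuttle any prescribed fraction of population between system levels while strictly conserving $E_{12}^{\beta_1\beta_2}$; patching these intra-block maps into a single $U_{BS}$ automatically satisfies the commutation constraint~\eqref{eq:slto-commutationTemp}, and tracing out the baths reproduces $T$ on $S_{12}$.

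The hard part is step (ii). For each admissible transition $(i,j)\to(k,l)$ of $T$ I expect to construct a partial isometry between equal-dimensional subspaces of the degenerate sectors $\mathcal{H}^{E_1-E^{S_1}_i}_{B_1}\otimes\mathcal{H}^{E_2-E^{S_2}_j}_{B_2}$ and $\mathcal{H}^{E_1-E^{S_1}_k}_{B_1}\otimes\mathcal{H}^{E_2-E^{S_2}_l}_{B_2}$; such subspaces exist because the degeneracy ratios match the semi-Gibbs weights, and it is precisely the semi-Gibbs-preservation hypothesis (that $T$ fixes the semi-Gibbs vector) that makes these dimension requirements mutually consistent across all transitions, so the pieces assemble into one unitary. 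The delicate bookkeeping is controlling the errors inherited from the ``$\approx$'' in the bath relations, governed by the typicality parameter $\delta$ of Eq.~\eqref{eq:BathTypicality}, and verifying that they vanish in the relevant large-bath limit so that the constructed SLTO reproduces the target map on the nose. I anticipate this approximation control, rather than the algebraic construction, to be the principal obstacle.
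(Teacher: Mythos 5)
Your proposal follows the paper's own proof essentially step for step: the forward direction is the same block computation showing that the weighted-energy-block-diagonal unitary of Eq.~\eqref{eq:UniJoint} fixes the maximally mixed block state of Eq.~\eqref{eq:MaxMixedSysBath}, and your converse --- realizing the semi-Gibbs-fixing stochastic matrix $T$ by permutations (partial isometries) between degenerate sectors whose dimensions $d_i = g_{B_1}(E_1)e^{-\beta_1 E^{S_1}_i}$, $d_j = g_{B_2}(E_2)e^{-\beta_2 E^{S_2}_j}$ track the semi-Gibbs weights --- is exactly the paper's ``transition current'' construction with $t_{ij\rightarrow mn}$ subject to $\sum_{ij} t_{ij\rightarrow mn}=d_m d_n$ and $\sum_{mn} t_{ij\rightarrow mn}=d_i d_j$, the consistency of which is what your dimension-matching observation expresses. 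The error control you flag as the principal obstacle (the ``$\approx$'' in the bath degeneracy relation, governed by the typicality parameter $\delta$) is indeed present but left implicit in the paper, which simply works in the large-bath regime $E_1\in\mathcal{E}_1$, $E_2\in\mathcal{E}_2$.
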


\begin{proof}
	Let us just consider the situation where the Hamiltonians of the sub-systems do not change, i.e., the situation where $\Lambda_{S_{12}} \left( \gamma_{S_1} \otimes \gamma_{S_2} \right)= \gamma_{S_1} \otimes \gamma_{S_2}$. Extension to the general cases can be simply followed. 
	
	Below we show that the semi-Gibbs preserving operations are precisely the semi-local thermal operations when they are applied on the block-diagonal states, i.e., $[\rho_{S_{12}},\ H_{S_{12}}]=0$. Let us consider that the Hamiltonian of the system $S_{12}$, 
	\begin{align}\label{Appeq:WeightedHamilS12}
		H_{S_{12}}=H_{S_1}+ H_{S_2},
	\end{align}
	is non-degenerate in system energy. The bath-system composite ($B_1B_2S_{12}$) in the total energy blocks $E_1 + E_2$ is expressed in Eq.~\eqref{eq:WeightedEngBlocksState}. We restrict ourselves within one total energy block $E_1 + E_2$. We show that, with the help of just permutation between bases, any arbitrary operation can be performed that preserves the corresponding \emph{semi-Gibbs} state within the total energy block. Once done with this, it will be easy to check that such an operation can be implemented in every other total energy blocks.  Note, the total energy  $E_1 + E_2 \in \mathcal{E}_{12}$ (see Section \ref{sec:2bath2sys}). 
	
	Within a block of total energy $E_1 + E_2$, there are sub-blocks corresponding to the system energies $E^{S_1}_i + E^{S_2}_j$.
	The permutations among the eigenbases respecting strict conservation of   total energy and total weighted-energy will result in transfer of eigenbases among the sub-blocks defined by $E^{S_1}_i + E^{S_2}_j$ (the energy of the system). Each sub-block also constitutes a degenerate subspace with the dimension $g_{B_1}(E_1) e^{-\beta_1 E^{S_1}_i} g_{B_2}(E_2) e^{-\beta_2 E^{S_2}_j}$ (see Section \ref{sec:2bath2sys}). All the eigenvalues of the system-bath composite in this sub-block become equal to
	\begin{align}
		\frac{p(E^{S_1}_i + E^{S_2}_j)}{g_{B_1}(E_1) e^{-\beta_1 E^{S_1}_i} g_{B_2}(E_2) e^{-\beta_2 E^{S_2}_j}},
	\end{align}
	after normalization, where $ p(E^{S_1}_i + E^{S_2}_j)=\tr [P_{E^{S_1}_i + E^{S_2}_j} \ \rho_{S_{12}}]$. For the notational simplicity, let us denote $p(E^{S_1}_i + E^{S_2}_j) \rightarrow p_{ij}$, where $i$ (and $j$) stands for the energy levels $E^{S_1}_i$ in system $S_1$ (and $E^{S_2}_j$ in system $S_2$), and $ g_{B_1}(E_1) e^{-\beta_1 E^{S_1}_i} \rightarrow d_i$ and $g_{B_2}(E_2) e^{-\beta_2 E^{S_2}_j} \rightarrow d_j $. 
	
	Now, we can introduce permutations among the states that preserves total energy as well as the total weighted-energy. Then, the 'transition current' between system energy sub-blocks is denoted by $t_{ij \rightarrow mn}$, which is equal to the number of eigenstates that are transferred from the $ij$-th sub-block (corresponding to the system energy $E^{S_1}_i + E^{S_2}_j$) to $mn$-th sub-block (corresponding to the system energy $E^{S_1}_m + E^{S_2}_n$). The transition current satisfies  
	\begin{align}
		&\sum_{ij}t_{ij\rightarrow mn}=d_md_n, \label{eq:TransitionCurrent1} \\
		&\sum_{mn}t_{ij\rightarrow mn}=d_id_j. \label{eq:TransitionCurrent2}
	\end{align}
	The permutations will lead to a modification in the probability distribution in the system part $\{p_{ij} \} \rightarrow \{q_{mn} \}$. The new probability distribution can be written in terms of the transition currents, satisfying \eqref{eq:TransitionCurrent1} and \eqref{eq:TransitionCurrent2}, becomes
	\begin{align}
		q_{mn}=\sum_{ij}  t_{ij\rightarrow mn} \ \frac{p_{ij}}{d_id_j}=\sum_{ij}  s_{ij\rightarrow mn} \ p_{ij},
	\end{align}
	where $s_{ij\rightarrow mn}=\frac{t_{ij\rightarrow mn}}{d_id_j}$ is the probability of the transition $ij\rightarrow mn$. The transition matrix $\{s_{ij\rightarrow mn}\}$ transforms a normalized probability distribution to another normalized probability distribution, as it satisfies the stochastic condition $\sum_{mn}  s_{ij\rightarrow mn}=1, \ \ \forall ij$. Along with the relation $\frac{d_id_j}{d_md_n}=\frac{e^{-\beta_1 E_i^{S_1} -\beta_2 E_i^{S_2} }}{e^{-\beta_1 E_m^{S_1} -\beta_2 E_n^{S_2} }}$, the stochastic condition implies that all transformations satisfying the constraints \eqref{eq:TransitionCurrent1} and \eqref{eq:TransitionCurrent2} guarantee the preservation of the semi-Gibbs state $\gamma_{S_1} \otimes \gamma_{S_2}$. With this, we prove that in a given total energy block, all possible operations that strictly conserve both the total energy and the total weighted-energy are semi-Gibbs preserving operations. 
	
	Note, for a given arbitrary semi-Gibbs preserving transformation on the system $S_{12}$ in a block-diagonal state $\rho_{S_{12}}$, a permutation among the system energy sub-blocks within a fixed total energy and total weighted-energy block, can be performed to result in the desired transformation. The resultant operation in the block strictly conserves total energy as well as total weighted-energy. Further for every such block, there exists permutation operations that lead to the same transformation on the system $S_{12}$ part. The combination of all these individual transformations, that are performed in different blocks leads to the implementation of the semi-Gibbs preserving operation on the initial state of the system $S_{12}$.
	
	It is clear from above that all semi-Gibbs preserving transformations can be performed with the help of permutations within each block having fixed total energy and total weighted-energy. These operations are unitary and strictly conserve the total energy as well as the total weighted-energy. Hence, these are nothing but the semi-local thermal operations. Therefore, an arbitrary semi-Gibbs preserving operation applied on system in a block-diagonal state can be simulated using semi-local thermal operation. 
\end{proof}

We shall re-consider this semi-Gibbs preserving property to characterize the state transformations under SLTOs in Section \ref{sec:StateTrasNoCat}, and in the context of majorization in the Theorem \ref{thm:MajorCond}. 

\subsection{Catalytic semi-local thermal operations (cSLTOs)}
The SLTOs can be further generalized with an access to a bipartite catalyst $C_{12}$ composed of two non-interacting subsystems $C_1$ and $C_2$ and the Hamiltonian $H_{C_{12}}=H_{C_1} + H_{C_2}$. The $C_1$ is clubbed with the subsystem $S_1$ to form the composite $S_1C_1$. Similarly, the $C_2$ is clubbed with the $S_2$ to form $S_2C_2$. Then, the composites $S_1C_1$ and $S_2C_2$ interacts with the baths $B_1$ and $B_2$ via semi-local thermal operations.  Such operations are called \emph{catalytic semi-local thermal operations} (cSLTOs) that satisfy 
\begin{align}
	\Lambda_{S_{12}C_{12}}(\rho_{S_{12}} \otimes\rho_{C_{12}} ) \rightarrow \sigma_{S_{12}} \otimes \rho_{C_{12}},
\end{align} 
where $\rho_{C_{12}} $ is a state of the catalyst. Note, the catalyst remains unchanged before and after the process. These catalytic operations form a larger set of thermodynamically allowed operations compared to SLTOs and respect all the properties satisfied by the SLTOs. The cSLTOs are the allowed thermodynamic operation in a quantum heat engine and constitute the free operation for the resource theory developed in this article. Several useful properties of these operations are outlined in the Appendix (also see Supplemental Information). It is interesting to note that the cSLTOs converge to the (local) thermal operations that are introduced in the resource theory of quantum states beyond thermal equilibrium presented in \cite{Brandao13, Horodecki13, Brandao15}, when both the baths are of the same temperature, i.e., for $\beta_1=\beta_2$. 

\subsection{SLTOs are time-translation symmetric operations}
It is interesting to note that the SLTOs, as well as the cSLTOs, are time-translation symmetric operations with respect to the time evolution generated by the Hamiltonian $H_{S_{12}}=H_{S_1} + H_{S_2}$. In other words, if there is a time translation of the system driven by unitary $V(t)=e^{-itH_{S_{12}}/\hbar}$ for any given time $t$, then 
\begin{align}\label{eq:ttsymtotalHam12}
	V(t) \left[\Lambda_{S_{12}}(\rho_{S_{12}})\right]V^\dag(t)= \Lambda_{S_{12}}\left[V(t) (\rho_{S_{12}})V^\dag(t)\right],
\end{align}
i.e., the order of the time translation operations and the SLTO commute. Because of this symmetric property, the SLTOs monotonically decrease the superpositions between different eigenbases of $H_{S_{12}}$ or, in other words, among the energy eigenbases. Note the SLTOs are also time-translation symmetric with respect to the system's weighted-Hamiltonian $H^{\beta_1 \beta_2}_{S_{12}}=\beta_1 H_{S_1} + \beta_2 H_{S_2}$, since $[H_{S_{12}}^{\beta_1 \beta_2}, H_{S_{12}}]=0$. 

The Eq.~\eqref{eq:ttsymtotalHam12} further implies that the SLTOs commute with the dephasing operations in the eigenbases of the Hamiltonian of the system $H_{S_{12}}=\sum_{i,j} (E^{S_1}_i + E^{S_2}_j) \proj{ij}$, i.e., 
\begin{align}
	\Lambda_{S_{12}}\circ P_{S_{12}} (\rho_{S_{12}}) = P_{S_{12}} \circ \Lambda_{S_{12}} (\rho_{S_{12}}), \label{eq:dephaseGlob}
\end{align}
where $P_{S_{12}}(\rho_{S_{12}})= \sum_{ij} \braket{ij|\rho_{S_{12}}|ij} \ketbra{ij}{ij}$ is the dephasing operation. Note the dephasing operation can be achieved by time averaging time-translated state for a long enough time $T$,
\begin{align}
	P_{S_{12}}(\rho_{S_{12}})=\frac{1}{T} \int_0^T V(t) \ (\rho_{S_{12}}) \ V(t) dt. \label{eqApp:DephasedStateWeightedHamil}
\end{align}
The Eq. \eqref{eq:dephaseGlob} signifies that the diagonal elements in the eigenbases of $H_{S_{12}}$, i.e., $P_{S_{12}}(\rho_{S_{12}})$, evolve independently of the off-diagonal elements. Further, the amount of asymmetry present in a state due to the superposition between different energy eigenbases monotonically decreases under the SLTOs. We use these properties to understand the free-entropy locking in superpositions and to add more conditions to supplement the second laws for state transformations below.

\section{Information theoretic notations and technical tools \label{sec:tools}}
In this section, we shall briefly outline the notations and tools that will be used to derive the conditions of state transformation under SLTOs. The interested readers are referred to \cite{Horodecki13, Brandao15} for more details.

\subsection{R\'enyi $\alpha$-entropies}
Given an $k$-dimensional probability distribution $p=\{p_i \}_{i=1}^k$,  the R\'enyi $\alpha$-entropies are defined as 
\begin{align}
	H_{\alpha} (p)=\frac{\mbox{sgn}(\alpha)}{1-\alpha} \log \sum_{i=1}^k p_i^\alpha, \ \ \forall \alpha \in \mathcal{R}\setminus \{0,1 \},
\end{align}
where $\mbox{sgn}(\alpha) =1$ for $\alpha \geqslant 0$ and $\mbox{sgn}(\alpha) = - 1$ for $\alpha < 0$. For $\alpha \in \{- \infty, 0, 1, \infty  \}$, the $H_{\alpha}$s can be computed using limits, and they are
\begin{align}
	H_{- \infty}(p)=\log p_{\min}, \ \ \ H_{0}(p)=\log \mbox{rank}(p), \ \ \ H_{1}(p)=-\sum_{i=1}^k p_i \log p_i, \ \ \mbox{and} \ \ H_{ \infty}(p)=- \log p_{\max}. 
\end{align}
The $\mbox{rank}(p) $ means the number of non-zero elements in $p$, and $p_{\min}$ and $p_{\max}$ are the elements with smallest and largest values in $p$. 

These R\'enyi entropies can also be defined for arbitrary quantum state $\rho$, where the $\{p_i \}$ would be the eigenvalues of the density matrix $\rho$. Note, at $\alpha \rightarrow 1$, the $ H_{1}(\rho)=-\tr \rho \log \rho$ becomes the von Neumann entropy of the state $\rho$. 

\subsection{R\'enyi $\alpha$-relative entropies}
For any two $k$-dimensional probability distributions $p=\{p_i \}_{i=1}^k$ and $q=\{q_i \}_{i=1}^k$, the R\'enyi $\alpha$-relative entropies are defined as
\begin{align}
	D_{\alpha}(p\parallel q)= \frac{\mbox{sgn}(\alpha)}{\alpha -1} \log \sum_{i=1}^k p_i^\alpha q_i^{1-\alpha},  \ \ \ \forall \alpha \in [-\infty, \infty].
\end{align}
For the cases $\alpha \in \{- \infty, 0, 1, \infty \} $, the $D_{\alpha}$s are calculated using limits, as
\begin{align}
	& D_{\infty}(p\parallel q)= \lim_{\alpha \rightarrow \infty} D_{\alpha}(p\parallel q)= \log \max_{i} \frac{p_i}{q_i}, \\
	& D_{ - \infty}(p\parallel q)= \lim_{\alpha \rightarrow -\infty} D_{\alpha}(p\parallel q) = D_{\infty}(q\parallel p), \\
	& D_{ 0}(p\parallel q)= \lim_{\alpha \rightarrow 0+} D_{\alpha}(p\parallel q) = -\log \sum_{i: p_i\neq 0}^k q_i, \\
	& D_{1}(p\parallel q)= \lim_{\alpha \rightarrow 1} D_{\alpha}(p\parallel q) = \sum_i p_i (\log p_i - \log q_i).
\end{align}
Here we use the conventions that $\frac{0}{0}=0$ and $\frac{x}{0}=\infty$ for $x>0$. The R\'enyi $\alpha$-relative entropies satisfy many interesting properties, and we shall mention few useful ones below. These entropies monotonically decrease under stochastic maps $\Lambda$, i.e.,
\begin{align}
	D_{\alpha}(p\parallel q) \geqslant D_{\alpha}(\Lambda(p) \parallel\Lambda(q)), \ \ \ \forall \alpha \in [-\infty, \infty].
\end{align}
The inequalities are also known as the data-processing inequality. Another important property is that, for  $\alpha \in [0, \infty] $,
\begin{align}
	D_{\alpha}(p\parallel q) \leqslant D_{\delta}(p\parallel q), \ \ \mbox{for} \ \ \alpha \leqslant \delta.
\end{align}

For a $k$-dimensional probability distribution $\{q_i\}_{i=1}^k$ with $0<q_i<1$ and $ \forall q_i \in \mathbb{Q}$, there exist a set of natural numbers $\{d_i\}_{i=1}^k$ such that $\sum_i d_i=N$ and $q_i=\frac{d_i}{N}$. Then a fine-grained, $N$-dimensional uniform probability can be written as
\begin{align}
	\Gamma(q)=\left\{ \underbrace{\frac{q_1}{d_1},\ldots,\frac{q_1}{d_1}}_{d_1},\ldots,\underbrace{\frac{q_k}{d_k},\ldots,\frac{q_k}{d_k}}_{d_k}   \right\} =\left\{\underbrace{\frac{1}{N},\ldots,\frac{1}{N}}_{N} \right\}. 
\end{align}
Similarly, any other probability distribution $\{ p_i\}_{i=1}^k$ can be fine grained to
\begin{align}
	\Gamma(p)=\left\{ \underbrace{\frac{p_1}{d_1},\ldots,\frac{p_1}{d_1}}_{d_1},\ldots,\underbrace{\frac{p_k}{d_k},\ldots,\frac{p_k}{d_k}}_{d_k}   \right\}.
\end{align}
Then, for $\alpha \in [-\infty, \infty]$, the R\'enyi $\alpha$-relative entropies are related to the R\'enyi $\alpha$-entropies as
\begin{align}
	D_\alpha \left(p|q \right)=D_\alpha\left(\Gamma(p)| \Gamma(q) \right)=\mbox{sgn}(\alpha)\log N -H_\alpha (p).
\end{align}
For the situations where $q_i \notin \mathbb{Q}$, we can relate the R\'enyi $\alpha$-relative entropies with the R\'enyi $\alpha$-entropies using following Lemma.
\begin{lem}[Ref. \cite{Brandao15}]
	Consider a non-increasingly ordered, $k$-dimensional probability distribution $q=\{q_i \}_{i=1}^k$ with $\forall q_i>0$, and $q_i$s may possibly assume irrational values. Then, for any $\epsilon >0$, there exists a probability distribution $q_\epsilon$ such that
	\item (i) $\parallel q-q_\epsilon \parallel < \epsilon$,
	\item (ii) Each probability in $q_\epsilon $ is rational so that $q_\epsilon =\left\{\frac{d_i}{N}  \right\}_{i=1}^k $, where $\forall d_i \in \mathbb{N}$ and $\sum_{i=1}^k d_i =N$.
	\item (iii) There exists a stochastic channel $\Lambda$ such that $\Lambda(q)=q_\epsilon$, and for any arbitrary probability distribution $r$, the channel satisfies $\parallel r- \Lambda(r) \parallel\leqslant O\left(\sqrt{\epsilon}\right)$.  
\end{lem}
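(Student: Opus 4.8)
The plan is to meet the two demands of the statement separately: first manufacture a rational distribution $q_\epsilon$ that is uniformly close to $q$ (this gives parts (i) and (ii)), and then exhibit a single stochastic channel that sends $q$ exactly to $q_\epsilon$ while moving every other input only slightly (part (iii)). Throughout I read $\|\cdot\|$ as the $\ell_1$ (variational) norm.

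For (i) and (ii) I would use a routine rounding (Diophantine) argument. Fix an integer $N$ with $N>k/\epsilon$, put $\tilde d_i=\lfloor Nq_i\rfloor$ so that $0\le N-\sum_i\tilde d_i<k$, and then add a single unit to each of the $N-\sum_i\tilde d_i$ largest entries to obtain integers $d_i$ with $\sum_i d_i=N$ and $|d_i-Nq_i|\le 1$; re-sorting if needed keeps the list non-increasing. Setting $q_\epsilon=\{d_i/N\}$ gives $|q_{\epsilon,i}-q_i|\le 1/N$ for every $i$, hence $\|q-q_\epsilon\|\le k/N<\epsilon$, which secures (i), while rationality together with $\sum_i d_i=N$ secures (ii).

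For (iii) I would build $\Lambda$ as an explicit near-identity channel. Take $\Lambda=(1-\delta)\,\mathrm{id}+\delta\,\Xi$, where $\Xi$ is the constant (``reset'') channel $\Xi(r)=p$ for a fixed distribution $p$ to be chosen; its matrix $p\,\mathbf 1^{T}$ is column-stochastic, so both $\Xi$ and the convex combination $\Lambda$ are legitimate stochastic channels. Imposing $\Lambda(q)=q_\epsilon$ forces
\begin{align}
 p = q + \tfrac{1}{\delta}\,(q_\epsilon-q),
\end{align}
and one checks at once that $\sum_i p_i=1$ since $q$ and $q_\epsilon$ are both normalised. For an arbitrary input $r$ the disturbance is then governed by the single parameter $\delta$, because $r-\Lambda(r)=\delta\,(r-p)$ and hence $\|r-\Lambda(r)\|=\delta\,\|r-p\|\le 2\delta$. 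Choosing $\delta=\sqrt{\epsilon}$ delivers the advertised $O(\sqrt{\epsilon})$ bound.

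I expect the genuine obstacle to be ensuring that the vector $p$ above is itself a valid probability distribution, i.e.\ $p_i\ge 0$ for all $i$, with $\delta$ no larger than $\sqrt{\epsilon}$. Since $p_i=q_i+\delta^{-1}(q_{\epsilon,i}-q_i)$, nonnegativity amounts to $q_i-q_{\epsilon,i}\le\delta\,q_i=\sqrt{\epsilon}\,q_i$: no coordinate may lose more than a $\sqrt{\epsilon}$-fraction of its mass under the rounding. This is automatic for the large coordinates but is exactly where the smallest probabilities become delicate, so I would enforce it in the rounding step by rounding the tiny coordinates (those with $q_i\lesssim 1/(N\sqrt{\epsilon})$) \emph{upward} rather than to the nearest multiple, and re-balancing the unit budget on the large coordinates. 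The main technical content is therefore the joint bookkeeping — simultaneously maintaining $\sum_i d_i=N$, the closeness $\|q-q_\epsilon\|<\epsilon$, and the relative-decrease bound — with the small-mass coordinates as the crux; it is precisely this relative-error constraint, reminiscent of gentle-measurement estimates, that upgrades the naive $O(\epsilon)$ into the robust $O(\sqrt{\epsilon})$ disturbance claimed in (iii).
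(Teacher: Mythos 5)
The paper never proves this lemma: it is imported verbatim from Ref.~\cite{Brandao15}, so there is no in-paper argument to compare against, and your proposal has to stand on its own. It does. Your construction is in fact the standard one used for this statement in the cited reference: mix the identity with a reset channel, $\Lambda=(1-\delta)\,\mathrm{id}+\delta\,\Xi_p$, solve $\Lambda(q)=q_\epsilon$ for the reset target $p=q+\delta^{-1}(q_\epsilon-q)$, and observe that the whole content of the lemma is the positivity condition $q_i-q_{\epsilon,i}\leqslant\delta\,q_i$, i.e.\ that the rounding must not remove more than a $\delta$-fraction of any coordinate's mass. You isolate exactly this as the crux, and your fix (round the small coordinates upward, rebalance the unit budget on the large ones) is sound. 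Two details would make the sketch airtight. First, the rebalancing is safe because the largest coordinate always carries $q_1\geqslant 1/k$, so it can absorb the at most $k$ surplus units created by upward rounding, but only if $N$ is taken somewhat larger than your initial $k/\epsilon$; any $N\geqslant\max\left\{(2k+1)/\epsilon,\ k(k+1)/\sqrt{\epsilon}\right\}$ simultaneously secures $\sum_i d_i=N$, the closeness $\|q-q_\epsilon\|<\epsilon$, and the relative-loss bound with $\delta=\sqrt{\epsilon}$. Second, the upward rounding of tiny coordinates also guarantees $d_i\geqslant 1$ for every $i$, which part (ii) (and the full-rank property needed where the lemma is applied) requires; your first-pass floor-and-distribute rounding alone would permit $d_i=0$. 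Finally, note that since $N$ is unconstrained from above, one may even demand $1/N\leqslant\epsilon\,\min_i q_i$ and run your argument with $\delta=\epsilon$, improving the disturbance to $O(\epsilon)$; the $O(\sqrt{\epsilon})$ of the statement is simply what the reference establishes, and your choice $\delta=\sqrt{\epsilon}$ meets it.
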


The R\'enyi $\alpha$-relative entropies can be extended to two arbitrary quantum states $\rho$ and $\sigma$. For this work, we shall restrict to the cases where $[\rho, \sigma]=0$, and $\mbox{supp}[\rho] \subseteq \mbox{supp}[\sigma]$. Then the R\'enyi $\alpha$-relative entropies are defined as 
\begin{align}
	D_{\alpha}(\rho \parallel \sigma)=\frac{\mbox{sgn}(\alpha)}{\alpha -1} \log \tr [\rho^\alpha \sigma^{1-\alpha}], \ \ \ \forall \alpha \in [-\infty, \infty].
\end{align}
For $\alpha\rightarrow0$, it becomes the min-relative entropy, 
\begin{align}
	D_{\min}(\rho \parallel \sigma)=D_{0}(\rho \parallel \sigma)=-\log \tr[\Pi_{\rho}\sigma],                                                    
\end{align}
where $\Pi_{\rho} $ is the projector onto the support of the state $\rho$. For the $\alpha \rightarrow 1$, it reduces to the von Neumann relative entropy as
\begin{align}
	D_{1}(\rho \parallel \sigma)= \tr[\rho(\log\rho- \log\sigma].                                                    
\end{align}
For the case $ \alpha\rightarrow \infty$, it results in the max-entropy given by
\begin{align}
	D_{\max}(\rho \parallel \sigma)=D_{\infty}(\rho \parallel \sigma)=\log \min\{\lambda: \rho \leqslant \lambda \sigma \}.                                                    
\end{align}
The R\'enyi $\alpha$-relative entropies are known to satisfy the monotonicity relation under completely positive maps, for $\alpha \in [0, 2]$,
\begin{align}
	D_{\alpha}(\rho\parallel \sigma) \geqslant D_{\alpha}(\Lambda(\rho) \parallel\Lambda(\sigma)).
\end{align}
For other values of $\alpha$, validity of the monotonicity is still an open question.

\subsection{Majorization and catalytic majorization (trumping) \label{sec:MajorizationSchur}}
The majorization relations are useful to introduce partial orders between arbitrary probability distributions \cite{Marshall11}. For any two probability distributions $p=\{ p_i \}_{i=1}^k$ and $p^\prime=\{ p_i^\prime \}_{i=1}^k$, we say that $p$ majorizes $p^\prime$, i.e., $p \succcurlyeq  p^\prime$, if for all $l=1, \ldots, k$,
\begin{align}\label{eq:majorization}
	\sum_{i=1}^l p_i^{\downarrow} \geqslant \sum_{i=1}^l p_i^{\prime \downarrow} \ \ \ \mbox{and} \ \ \ \sum_{i=1}^k p_i^{\downarrow} = \sum_{i=1}^k p_i^{\prime \downarrow}=1,
\end{align}
where the $p^{\downarrow}$ is obtained by rearranging $p$ in the non-increasing order so that $p_1^{\downarrow}\geqslant p_2^{\downarrow}\geqslant \ldots \geqslant p_k^{\downarrow} $, and similarly we obtain $p^{\prime \downarrow}$ by rearranging $p^\prime$. When two probability distributions are partial ordered through a majorization relation, these satisfy the following properties: 

\noindent (i) Two probability distributions $p$ and $p^\prime$ satisfy a majorization relation, $p \succcurlyeq  p^\prime$ if and only if there exist a channel $\Lambda$ such that $p^\prime=\Lambda(p)$ and $\Lambda$ satisfies $\Lambda(\eta) =\eta$, where $\eta$ is the uniform distribution. The channels $\Lambda$ are called bi-stochastic channels, and these can be implemented using random unitary operations.

\noindent (ii) If two probability distributions $p$ and $p^\prime$ satisfy a majorization relation, $p \succcurlyeq  p^\prime$, then
\begin{align}
	f(p) \leqslant f(p^\prime),
\end{align}
where $f$ are all Schur-concave functions. Note, the R\'enyi $\alpha$-entropies $H_{\alpha}$s are Schur-concave functions for $\alpha \in [-\infty, \infty]$.

Let us now discuss catalytic majorization or trumping. There are situations where $p$ and $p^\prime$ cannot be partially ordered in terms of majorization, but in presence of an additional probability distribution $x$, it satisfies $p \otimes x \succcurlyeq  p^\prime \otimes x $. This is termed as ``$p$ trumps $p^\prime$'' or $p \succcurlyeq_T  p^\prime $. For two given probability distribution it is often very difficult to find the additional probability distribution $x$ to check if the former are related through trumping. However, if the two probability distributions satisfy following two Lemmas, then one could ensure the existence of at least one $x$.
\begin{lem}[Ref. \cite{Brandao15}]
	Let us consider two probability distributions $p$ and $p^\prime$ that do not contain any element equal to zero. Then,  $p$ trumps $p^\prime$, i.e., $p \succcurlyeq_T  p^\prime $ if, and only if, the R\'enyi $\alpha$-entropies satisfy
	\begin{align}
		H_{\alpha}(p) \leqslant H_{\alpha}(p^\prime), \ \ \ \forall \alpha \in (-\infty, \infty).
	\end{align}
\end{lem}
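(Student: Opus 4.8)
The plan is to prove the two implications separately, with the forward implication ($p \succcurlyeq_T p' \Rightarrow$ entropy inequalities) being short and the converse being the substantial part. Throughout I would lean on two structural facts about the R\'enyi entropies from Section~\ref{sec:tools}: they are \emph{Schur-concave} for every $\alpha\in[-\infty,\infty]$, and they are \emph{additive} under tensor products, $H_\alpha(p\otimes x)=H_\alpha(p)+H_\alpha(x)$, which follows at once from $\sum_{ij}(p_i x_j)^\alpha=\bigl(\sum_i p_i^\alpha\bigr)\bigl(\sum_j x_j^\alpha\bigr)$ after multiplying by the common prefactor $\tfrac{\mbox{sgn}(\alpha)}{1-\alpha}$.

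For the forward direction I would suppose $p\succcurlyeq_T p'$, so there is a full-support catalyst $x$ with $p\otimes x\succcurlyeq p'\otimes x$. By property (ii) of majorization every Schur-concave function is ordered oppositely, hence $H_\alpha(p\otimes x)\leqslant H_\alpha(p'\otimes x)$ for all $\alpha$. Additivity then gives $H_\alpha(p)+H_\alpha(x)\leqslant H_\alpha(p')+H_\alpha(x)$, and since $x$ has full support the term $H_\alpha(x)$ is finite for every $\alpha\in(-\infty,\infty)$ and cancels, leaving $H_\alpha(p)\leqslant H_\alpha(p')$. The full-support requirement on the catalyst is exactly what keeps the negative-$\alpha$ inequalities meaningful, so I would first argue that any working catalyst may be replaced by a nearby full-support one without destroying the strict majorization.

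The converse is the hard direction, the content of the Klimesh--Turgut theorem. Here I would first reduce to the \emph{strict} case: assuming $H_\alpha(p)<H_\alpha(p')$ for all finite $\alpha$ together with the boundary conditions $p_{\max}\geqslant p'_{\max}$ and $p_{\min}\leqslant p'_{\min}$, I would construct a catalyst directly, then recover the equality cases by a continuity/perturbation argument, approximating $p$ by distributions that strictly satisfy the inequalities and passing to the limit (the rational-approximation Lemma immediately preceding this one is convenient for keeping all distributions rational along the way). For the strict case the catalyst would be built from the moment profile $s\mapsto M_p(s)=\sum_i p_i^{\,s}$: the family $H_\alpha(p)\leqslant H_\alpha(p')$ is equivalent to a sign condition on $\log M_p(s)-\log M_{p'}(s)$ across all real $s$, and one exploits the slack in the strict inequalities to exhibit, via a large tensor power combined with a suitable seed distribution, an $x$ for which the cumulative partial-sum conditions~\eqref{eq:majorization} for $p\otimes x$ against $p'\otimes x$ can be verified.

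I expect the main obstacle to be precisely this explicit catalyst construction in the converse: translating the global ordering of the R\'enyi profiles into the \emph{local}, ordered partial-sum inequalities that define majorization is delicate, because majorization is sensitive to the sorted arrangement of $p\otimes x$ whereas the R\'enyi entropies only see symmetric functions of the entries. The two most delicate points are ensuring the catalyst has full support, so that all $\alpha<0$ conditions remain informative, and the boundary analysis at $\alpha\to 0,1,\pm\infty$ together with the passage from strict to non-strict inequalities, where a careless limit can destroy the catalyst; both are controlled by the compactness and continuity of the moment profiles and by the rational-approximation Lemma stated above.
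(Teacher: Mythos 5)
The first thing to say is that the paper contains no proof of this lemma to compare against: it is imported verbatim from Ref.~\cite{Brandao15} as a black-box tool, where it in turn rests on the Klimesh--Turgut characterization of catalytic majorization. So the only question is whether your argument stands on its own, and it does not. Your forward direction is essentially correct: Schur-concavity of $H_\alpha$ applied to $p\otimes x \succcurlyeq p'\otimes x$, plus additivity and cancellation of the finite term $H_\alpha(x)$, yields the inequalities. The one blemish is your treatment of a catalyst with zero entries: perturbing $x$ to a nearby full-support distribution can destroy majorization when some of the partial-sum inequalities are tight. The clean fix is simply to delete the zero entries of $x$; they contribute identical blocks of zeros to $p\otimes x$ and $p'\otimes x$, and removing them from both sides preserves the majorization relation.

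The genuine gap is the converse. Your key sentence --- that one ``exploits the slack in the strict inequalities to exhibit, via a large tensor power combined with a suitable seed distribution, an $x$ for which the cumulative partial-sum conditions can be verified'' --- is not an argument; it is a paraphrase of the theorem to be proved. Producing that catalyst is the entire content of Klimesh's and Turgut's work, and nothing in this paper's toolbox (the rational-approximation lemma, the $d$-majorization lemmas) supplies it. Moreover, your proposed passage from strict to non-strict inequalities ``by continuity'' cannot be repaired: exact tramping is not a closed relation, since the catalyst dimension is unbounded along a limiting sequence, and in fact when $p^{\downarrow}\neq p'^{\downarrow}$ but $H_{\alpha_0}(p)=H_{\alpha_0}(p')$ for some $\alpha_0\neq 0$, exact tramping provably fails: strict Schur-concavity of $H_{\alpha_0}$ together with $p\otimes x \succcurlyeq p'\otimes x$ and additivity would force $(p\otimes x)^{\downarrow}=(p'\otimes x)^{\downarrow}$, and comparing power sums $\sum_{ij}(p_i x_j)^s=\sum_{ij}(p'_i x_j)^s$ then gives $p^{\downarrow}=p'^{\downarrow}$, a contradiction. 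Equality cases survive only in the $\epsilon$-approximate sense --- which is precisely why the lemma immediately following this one in the paper is phrased in terms of an approximating full-rank distribution $p'_\epsilon$. So the hard direction of your proposal is missing, and the limiting step you rely on to finish it is false for exact tramping as stated.
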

Obviously, there are situations where the $p$ and $p^\prime$ are not of full-ranks (i.e., with all non-zero elements). In this situation, the Lemma below holds.
\begin{lem}[Ref. \cite{Brandao15}]
	Let us consider two arbitrary probability distributions $p$ and $p^\prime$. Then the following are equivalent: 
	
	\noindent (i) For an arbitrary $\epsilon >0$, there exists a full rank probability distribution $p^\prime_{\epsilon}$ such that $ \parallel p^\prime -  p^\prime_{\epsilon} \parallel \leqslant \epsilon$ and $p$ trumps $p^\prime_{\epsilon}$ (i.e., $p \succcurlyeq_T p^\prime_{\epsilon}$).
	\noindent (ii) The inequalities are satisfied, $ H_{\alpha}(p) \leqslant H_{\alpha}(p^\prime), \ \ \ \forall \alpha \in (-\infty, \infty)$. 
\end{lem}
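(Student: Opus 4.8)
The plan is to derive both implications from the full-rank trumping criterion stated in the preceding Lemma, using on one side the Schur-concavity and additivity of the R\'enyi entropies together with a continuity argument, and on the other an explicit full-rank perturbation combined with the transitivity of trumping. Throughout I would use that each $H_\alpha$ is Schur-concave for all $\alpha$ (as recorded above) and additive under tensor products, $H_\alpha(a\otimes b)=H_\alpha(a)+H_\alpha(b)$.

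First I would prove (i) $\Rightarrow$ (ii). Fix $\epsilon>0$ and the associated full-rank $p'_\epsilon$ with $p\succcurlyeq_T p'_\epsilon$. By the definition of trumping there is a catalyst $x$, which I would take to be full rank without loss of generality, such that $p\otimes x\succcurlyeq p'_\epsilon\otimes x$. Schur-concavity applied to this majorization, together with additivity, gives $H_\alpha(p)+H_\alpha(x)\leqslant H_\alpha(p'_\epsilon)+H_\alpha(x)$, and cancelling the finite quantity $H_\alpha(x)$ yields $H_\alpha(p)\leqslant H_\alpha(p'_\epsilon)$ for every $\alpha$. Letting $\epsilon\to 0$ and using the continuity of $H_\alpha$ in the distribution for finite $\alpha$ (with the convention that $H_\alpha\to-\infty$ at a vanishing entry when $\alpha<0$) propagates the inequality to $H_\alpha(p)\leqslant H_\alpha(p')$, which is (ii).

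For (ii) $\Rightarrow$ (i) the key construction is to mix $p'$ toward the uniform distribution $u$: set $p'_\eta=(1-\eta)p'+\eta u$, which is full rank, obeys $\parallel p'-p'_\eta\parallel\leqslant\eta\parallel p'-u\parallel$, and is majorized by $p'$ (a convex combination of $p'$ with a distribution it majorizes is obtained from $p'$ by a doubly stochastic map). Hence $H_\alpha(p'_\eta)\geqslant H_\alpha(p')\geqslant H_\alpha(p)$ for all $\alpha$. If $p$ is full rank, the preceding Lemma applied to the full-rank pair $(p,p'_\eta)$ gives $p\succcurlyeq_T p'_\eta$ directly, and choosing $\eta$ small enough establishes (i). If $p$ is not full rank, I would also smooth $p$ to $p_\delta=(1-\delta)p+\delta u$; since $p\succcurlyeq p_\delta$ (hence $p\succcurlyeq_T p_\delta$), it suffices to show $p_\delta\succcurlyeq_T p'_\eta$ for the full-rank pair and then invoke transitivity of trumping. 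By the preceding Lemma this reduces to verifying $H_\alpha(p_\delta)\leqslant H_\alpha(p'_\eta)$ for all $\alpha\in(-\infty,\infty)$.

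The main obstacle is exactly this last uniform-in-$\alpha$ comparison when $p$ has zeros, where the danger lies in the regime $\alpha<0$: smoothing $p$ to full rank lifts $H_\alpha(p_\delta)$ from $-\infty$ to a finite value that must still be dominated by $H_\alpha(p'_\eta)$. The resolution I would exploit is that mixing $p'$ toward uniform opens an unbounded gap at negative exponents, since $H_{-\infty}(p'_\eta)=\log(\eta/k)$ is finite while $H_{-\infty}(p')=-\infty$; choosing $\delta$ sufficiently small relative to $\eta$ then forces $H_\alpha(p_\delta)\leqslant H_\alpha(p'_\eta)$ throughout the extreme negative range, after which the remaining compact range of $\alpha$ is controlled by continuity and by the strict slack that the mixing supplies on top of $H_\alpha(p)\leqslant H_\alpha(p')$. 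I would also have to justify that the catalyst in the forward direction may be taken full rank so that the cancellation of $H_\alpha(x)$ is legitimate for $\alpha<0$; this I would handle either by a separate smoothing of the catalyst or by passing to the data-processing form $D_\alpha(\,\cdot\,\parallel u)=\mathrm{sgn}(\alpha)\log k-H_\alpha(\,\cdot\,)$ under the bistochastic channel that implements the majorization.
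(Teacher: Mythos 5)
First, a point of order: the paper itself contains no proof of this statement --- it is imported verbatim from Ref.~\cite{Brandao15} --- so your argument cannot be checked against an in-paper proof and must stand on its own. Your direction (i)~$\Rightarrow$~(ii) does stand. The catalyst can indeed be taken full rank, and the cleanest justification is simpler than the ones you offer: deleting the zero entries of $x$ changes neither the nonzero entries of $p\otimes x$ nor those of $p'_\epsilon\otimes x$, so the majorization $p\otimes x\succcurlyeq p'_\epsilon\otimes x$ is unaffected; after that, Schur-concavity, additivity, cancellation of the finite $H_\alpha(x)$, and your limiting convention for $\alpha<0$ (which correctly forces $H_\alpha(p)=-\infty$ when $p'$ has a vanishing entry) give (ii). Your Case~1 of (ii)~$\Rightarrow$~(i) ($p$ full rank) is also fine; note that the $\alpha<0$ conditions in (ii) then force $p'$ to be full rank as well, so one may even take $p'_\epsilon=p'$ and skip the smoothing of the target.

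The genuine gap is in Case~2 ($p$ has a zero entry), at the step $H_\alpha(p_\delta)\leqslant H_\alpha(p'_\eta)$ for all $\alpha\in(-\infty,\infty)$. You locate the danger exclusively at $\alpha<0$ and assert that ``the remaining compact range of $\alpha$ is controlled by continuity and by the strict slack that the mixing supplies.'' That is not correct: the same difficulty occurs at positive $\alpha$ near $0$, and the tools you name cannot resolve it. For every fixed $\delta>0$,
\begin{align}
\lim_{\alpha\to 0^{+}}\left[H_\alpha(p_\delta)-H_\alpha(p)\right]=\log k-\log \mathrm{rank}(p)>0,
\end{align}
so the entropy gain caused by smoothing $p$ is \emph{not} uniformly small in $\alpha$, no matter how small $\delta$ is. Moreover, at $\alpha=0$ there is no slack at all, since $H_0(p_\delta)=H_0(p'_\eta)=\log k$ exactly (both smoothed distributions are full rank). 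Hence near $\alpha=0$ (from either side) the required ordering is decided by derivative-level terms of $\alpha\mapsto\log\sum_i q_i^{\alpha}$: to first order it reduces to the geometric-mean comparison $\prod_i (p_\delta)_i\leqslant\prod_i (p'_\eta)_i$, which indeed holds for $\delta\lesssim\eta^{k}$, but the window of $\alpha$ on which a first-order expansion is trustworthy shrinks like $1/\log(1/\delta)$, because the second derivative of $\log\sum_i(p_\delta)_i^{\alpha}$ near $\alpha=0$ is of order $(\log\delta)^{2}$. So a compactness-plus-continuity argument with ``strict slack'' provably cannot close this step: the slack vanishes at the pinned point $\alpha=0$ and the good windows shrink exactly as fast as $\delta$ is made small. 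I believe the inequality you need is true, but establishing it requires genuine quantitative work (e.g.\ tilted-measure variance bounds for the convex function $\alpha\mapsto\log\sum_i q_i^{\alpha}$, with $\delta$ tied polynomially to $\eta$); this delicate treatment of the degenerate (zero-entry) case is precisely why Ref.~\cite{Brandao15} handles it with a dedicated argument, and it is absent from your sketch.
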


The notion of majorization and trumping can also be extended to quantum states, say between $\rho$ and $\rho^\prime$. Then, the majorization relation $\rho \succcurlyeq  \rho^\prime$ implies the relations \eqref{eq:majorization} where the $p^{\downarrow}$ and $p^{\prime \downarrow}$ are the non-increasingly ordered eigenvalues of $\rho$ and $ \rho^\prime $ respectively. 

\subsection{$d$-majorization and catalytic $d$-majorization}
Not only for two probability distributions, but the majorization-like partial ordering can also be drawn between two pairs of probability distributions. Consider two pairs of probability distributions $(p,q)$ and $(p^\prime, q^{\prime})$. The $(p,q)$ $d$-majorizes $(p^\prime, q^{\prime})$ if and only if
\begin{align}
	\sum_i q_i f\left(\frac{p_i}{q_i}\right) \leqslant \sum_i q_i^\prime f\left(\frac{p_i^\prime}{q_i^\prime}\right),
\end{align}
for any arbitrary concave function $f$. This $d$-majorization based pre-ordering is then denoted as $d(p|q) \succcurlyeq d(p^\prime | q^{\prime})$. Given this definition of $d$-majorization, we present the following Lemma.

\begin{lem}[Ref. \cite{Brandao15}]
	Let us consider four probability distributions $p, p^\prime,  q$, and $ q^{\prime}$. Then the following statements are equivalent: 
	
	\noindent (i) The pair $(p,q)$ $d$-majorizes the pair $(p^\prime,q^\prime)$, i.e., $d(p|q) \succcurlyeq d(p^\prime | q^{\prime}) $. (ii) There exists a stochastic channel $\Lambda$ that satisfies $\Lambda(p)=p^\prime$ and $\Lambda(q)=q^\prime$.
\end{lem}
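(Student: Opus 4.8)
The plan is to prove the two implications (i) $\Leftrightarrow$ (ii) separately, with (ii) $\Rightarrow$ (i) being the routine ``data-processing'' direction and (i) $\Rightarrow$ (ii) the substantive one requiring a duality or separation argument. Throughout I identify a stochastic channel $\Lambda$ with a column-stochastic matrix $T$, i.e.\ entries $T_{ji} \geq 0$ with $\sum_j T_{ji} = 1$, acting as $p \mapsto Tp$.

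For (ii) $\Rightarrow$ (i), assume such a $T$ with $p' = Tp$ and $q' = Tq$, and fix an arbitrary concave $f$. The key observation is that for each $j$ with $q'_j > 0$,
\[
\frac{p'_j}{q'_j} = \sum_i \frac{T_{ji}\, q_i}{q'_j}\, \frac{p_i}{q_i},
\]
which is a \emph{convex} combination of the numbers $p_i/q_i$, since the weights $w_{ji} = T_{ji} q_i / q'_j$ are nonnegative and sum to one. Applying Jensen's inequality to the concave $f$ gives $f(p'_j/q'_j) \geq \sum_i w_{ji} f(p_i/q_i)$; multiplying by $q'_j$, summing over $j$, and using $\sum_j T_{ji} = 1$ collapses the right-hand side to $\sum_i q_i f(p_i/q_i)$. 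This is exactly the inequality defining $d(p|q) \succcurlyeq d(p'|q')$.

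For (i) $\Rightarrow$ (ii), I would first reduce the test over \emph{all} concave $f$ to a one-parameter family. Every concave function is, up to an affine part, a nonnegative mixture of the ``angle'' functions $x \mapsto -(x-t)_+$; the affine part contributes only equalities, because $\sum_i p_i = \sum_j p'_j$ and $\sum_i q_i = \sum_j q'_j = 1$. Hence (i) is equivalent to the relative-Lorenz inequalities
\[
\sum_i (p_i - t\, q_i)_+ \;\geq\; \sum_j (p'_j - t\, q'_j)_+, \qquad \forall\, t \geq 0,
\]
which state precisely that the relative Lorenz curve of $(p,q)$ --- the concave piecewise-linear curve through the points $\left(\sum_{i \leq l} q_i,\ \sum_{i \leq l} p_i\right)$, $l = 0,\dots,k$, after relabelling so that $p_1/q_1 \geq \cdots \geq p_k/q_k$ --- lies above that of $(p',q')$. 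The core of the argument is then to manufacture $T$ from this curve domination. I would pose the feasibility question as a linear program, seeking $T_{ji} \geq 0$ with $\sum_j T_{ji} = 1$, $Tp = p'$, and $Tq = q'$, and argue by LP duality (Farkas' lemma) that infeasibility would produce dual multipliers which, reassembled, yield a concave function violating (i), a contradiction; equivalently, one builds $T$ constructively by a ``water-filling'' transport that matches the two Lorenz curves segment by segment.

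The main obstacle is this last construction: converting the geometric Lorenz-curve domination into an explicit nonnegative column-stochastic matrix that satisfies \emph{both} linear constraints $Tp = p'$ and $Tq = q'$ simultaneously (the relative Blackwell/Hardy--Littlewood--P\'olya form of majorization). The delicate subsidiary points are (a) making the reduction from arbitrary concave $f$ to the angle functions fully rigorous, including the normalization bookkeeping that annihilates the affine terms, and (b) handling non-full-rank distributions --- zeros in $q$, $q'$, $p$, or $p'$, where ratios $p_i/q_i$ are ill-defined. The latter I would dispatch by an $\epsilon$-perturbation of the distributions to full rank, exactly as in the preceding approximation lemmas, followed by a compactness argument on the (compact, convex) polytope of admissible $T$ to pass to the limit.
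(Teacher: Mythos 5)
This lemma has no proof in the paper to compare against: it is imported verbatim from Ref.~\cite{Brandao15} as a technical tool (it is the classical Blackwell/Hardy--Littlewood--P\'olya characterization of relative majorization), so your attempt must stand on its own. Your direction (ii) $\Rightarrow$ (i) does stand: the decomposition of $p'_j/q'_j$ as a convex combination with weights $w_{ji}=T_{ji}q_i/q'_j$, Jensen's inequality for concave $f$, and the collapse via $\sum_j T_{ji}=1$ are exactly right, and they reproduce the inequality with the orientation the paper's definition uses, namely $\sum_i q_i f\left(p_i/q_i\right) \leqslant \sum_j q'_j f\left(p'_j/q'_j\right)$.

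The genuine gap is (i) $\Rightarrow$ (ii). What you give there is a plan, and you yourself flag its core step --- converting Lorenz-curve domination into a column-stochastic $T$ satisfying both $Tp=p'$ and $Tq=q'$ --- as ``the main obstacle''; as written, the substantive implication is not established. The plan is sound, however, and the Farkas branch of it closes in a few lines, with no need for the angle-function/Lorenz reduction at all: if no feasible $T$ exists, Farkas' lemma supplies multipliers $(c_i)$, $(a_j)$, $(b_j)$ with $c_i + a_j p_i + b_j q_i \leqslant 0$ for all $i,j$ and $\sum_i c_i + \sum_j \left(a_j p'_j + b_j q'_j\right) > 0$. Define the concave piecewise-linear function $f(x) := -\max_j\left(a_j x + b_j\right)$. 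Then $c_i \leqslant -\max_j\left(a_j p_i + b_j q_i\right) = q_i f\left(p_i/q_i\right)$ and $a_j p'_j + b_j q'_j \leqslant -q'_j f\left(p'_j/q'_j\right)$, so the strict inequality gives $\sum_i q_i f\left(p_i/q_i\right) > \sum_j q'_j f\left(p'_j/q'_j\right)$, contradicting (i). Writing this out --- together with your $\epsilon$-perturbation to full rank to handle zero entries of $q,q'$, which is the right device and matches the conventions of the surrounding lemmas --- would complete the proof; without it, you have proved the lemma in only one direction.
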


A catalytic $d$-majorization can also be introduced as in the following.
\begin{lem}[Ref. \cite{Brandao15}] \label{lem:CatDmajor}
	For two pairs of probability distributions, $(p,q)$ and $(p^\prime, q^{\prime})$ with the constraints that $q$ and $q^\prime$ are of full rank, the following conditions are equivalent:
	\item (i) The R\'enyi $\alpha$-relative entropies satisfy  $D_\alpha(p \parallel q) \geqslant D_\alpha(p^\prime \parallel q^\prime)$, $\forall \alpha \in [-\infty, \infty]$.
	\item (ii) For $\epsilon >0$, there exists full-rank probability distributions $r$, $s$, and $p_\epsilon^\prime$, and a stochastic channel $\Lambda$ such that
	\subitem (a) $\Lambda (p \otimes r)=p_\epsilon^\prime \otimes r$,
	\subitem (b) $\Lambda (q \otimes s)=q^\prime \otimes s$, moreover $s$ can be a uniform distribution $\eta$ onto the support of $r$,
	\subitem (c) $\parallel p^\prime - p_\epsilon^\prime \parallel \ \leqslant \epsilon$.
\end{lem}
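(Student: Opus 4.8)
The plan is to establish the equivalence by reducing catalytic $d$-majorization to the ordinary catalytic-majorization (trumping) characterization already recorded in the two preceding Lemmas, exploiting the fine-graining embedding $\Gamma$ under which a reference distribution is flattened to the uniform distribution. The single computational fact I lean on throughout is the identity stated above, $D_\alpha(p\parallel q)=\mathrm{sgn}(\alpha)\log N-H_\alpha(\Gamma(p))$, which converts every R\'enyi relative-entropy condition into a R\'enyi-entropy condition on embedded distributions, so that the hypotheses of the trumping Lemmas can be invoked verbatim.

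I would dispatch the easy implication (ii)$\Rightarrow$(i) first. Taking $s=\eta$, I apply the data-processing inequality for $D_\alpha$ (valid for all $\alpha\in[-\infty,\infty]$ in the classical case) to the single channel $\Lambda$ on the two inputs $p\otimes r$ and $q\otimes\eta$, and use additivity of $D_\alpha$ over tensor products to obtain
\begin{align}
D_\alpha(p\parallel q)+D_\alpha(r\parallel\eta)\;\geqslant\;D_\alpha(p_\epsilon^\prime\parallel q^\prime)+D_\alpha(r\parallel\eta).
\end{align}
Since $r$ is full rank the catalyst term $D_\alpha(r\parallel\eta)$ is finite and cancels, leaving $D_\alpha(p\parallel q)\geqslant D_\alpha(p_\epsilon^\prime\parallel q^\prime)$; letting $\epsilon\to0$ and using continuity of $D_\alpha$ in its first argument (with $q^\prime$ full rank) yields (i) for every $\alpha$.

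For the hard implication (i)$\Rightarrow$(ii) I would first replace $q^\prime$ (and, if needed, $q$) by a rational distribution using the earlier approximation Lemma, absorbing the deviation into the $\epsilon$ and the approximant $p_\epsilon^\prime$. Refining to a common denominator $M$ so that $q_i=D_i/M$ and $q_j^\prime=D_j^\prime/M$, I embed $p\mapsto\Gamma_q(p)$ and $p^\prime\mapsto\Gamma_{q^\prime}(p^\prime)$ into the common dimension $M$, where $\Gamma_q(q)=\Gamma_{q^\prime}(q^\prime)=\eta_M$. Because both embeddings land in the same dimension $M$, the identity above makes condition (i) \emph{exactly} equivalent to $H_\alpha(\Gamma_q(p))\leqslant H_\alpha(\Gamma_{q^\prime}(p^\prime))$ for all $\alpha\in(-\infty,\infty)$. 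These are precisely the hypotheses of the catalytic-majorization Lemmas; invoking the non-full-rank version produces a catalyst $R$ and a full-rank distribution $\tilde p$ with $\parallel\tilde p-\Gamma_{q^\prime}(p^\prime)\parallel$ small and $\Gamma_q(p)\otimes R\succcurlyeq\tilde p\otimes R$, which is the source of both the catalyst and the $\epsilon$-approximation demanded in (ii).

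It then remains to pull this embedded-space majorization back to a genuine stochastic channel $\Lambda$ on the physical outcomes. The relation $\Gamma_q(p)\otimes R\succcurlyeq\tilde p\otimes R$ supplies a bistochastic $B$ fixing $\eta_M\otimes R$ and mapping $\Gamma_q(p)\otimes R$ to $\tilde p\otimes R$; composing $B$ with the embedding $\Gamma_q$ on the input and a coarse-graining relative to $q^\prime$ on the output, together with a physical catalyst $r$ coarse-grained from $R$ and $s=\eta$, should yield $\Lambda(q\otimes\eta)=q^\prime\otimes\eta$ and $\Lambda(p\otimes r)=p_\epsilon^\prime\otimes r$, with (c) following from the trumping approximation bound. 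I expect this last step to be the crux: the input embedding is taken relative to $q$ while the output coarse-graining is relative to $q^\prime$, so the delicate point is to align the two references and the catalyst so that $B$ descends to a well-defined stochastic map obeying the reference identity (b) and the state identity (a) \emph{simultaneously}, while keeping every catalyst full rank and $s$ uniform on the support of $r$.
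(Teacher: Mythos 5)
The paper never proves this lemma: it is imported verbatim, with citation, from Ref.~\cite{Brandao15}, so the only available comparison is with that source's argument, and your outline reconstructs essentially that canonical route (data processing plus additivity for the easy direction; fine-graining $\Gamma$ relative to $q$ and $q'$ into a common dimension $M$, the trumping lemmas, and a descent to a physical channel for the hard direction). One reassurance first: the step you single out as the crux is actually the unproblematic one. Taking $\Lambda=(\mathcal{C}_{q'}\otimes\id)\circ B\circ(\Gamma_q\otimes\id)$, where $\Gamma_q$ splits outcome $i$ uniformly into $d_i$ fine-grained outcomes ($q_i=d_i/M$) and $\mathcal{C}_{q'}$ merges blocks of sizes $d_j'$, both identities hold simultaneously by construction: $\Gamma_q(q)=\eta_M$, the bistochastic $B$ supplied by the majorization relation fixes $\eta_M\otimes\eta$, and $\mathcal{C}_{q'}(\eta_M)=q'$, which is (b) with $s=\eta$; condition (a) holds with $p_\epsilon^\prime=\mathcal{C}_{q'}(\tilde p)$, which is full rank, and (c) follows because $\mathcal{C}_{q'}$ is a contraction in total variation and $\mathcal{C}_{q'}(\Gamma_{q'}(p'))=p'$. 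No coarse-graining of the catalyst is needed: the embedding and merging act on the system factor only, so you may take $r=R$ exactly as delivered by the trumping lemma.

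The genuine soft spots are elsewhere. First, in (ii)$\Rightarrow$(i), ``continuity of $D_\alpha$ in its first argument'' fails precisely where it matters: $D_0$ depends only on the support, so $D_0(p_\epsilon^\prime\parallel q')=0$ for every full-rank approximant while $D_0(p'\parallel q')$ may be strictly positive; similarly the endpoints $\alpha=\pm\infty$ and the case $\alpha<0$ with $p'$ rank-deficient are not continuity statements. The repair is routine but must be said: use genuine continuity only for $\alpha\in(0,1)\cup(1,\infty)$; for $\alpha<0$ use lower semicontinuity of $p\mapsto D_\alpha(p\parallel q')$, which gives $D_\alpha(p\parallel q)\geqslant\lim_{\epsilon\to0}D_\alpha(p_\epsilon^\prime\parallel q')=D_\alpha(p'\parallel q')$ even when the right side is $+\infty$; and recover $\alpha\in\{0,\pm\infty\}$ from the defining limits in $\alpha$ together with monotonicity of $\alpha\mapsto D_\alpha$. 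Second, your handling of irrational $q,q'$ --- ``absorbing the deviation into the $\epsilon$'' --- as written destroys condition (b), which is an \emph{exact} identity: a channel constructed for rational approximants $q_\delta,q_\delta'$ satisfies $q_\delta\otimes s\mapsto q_\delta'\otimes s$, not $q\otimes s\mapsto q'\otimes s$, and no amount of slack in $p_\epsilon^\prime$ repairs a reference that is only approximately preserved. You must sandwich the rational-reference construction between channels acting \emph{exactly} on the references: on the input side the approximation lemma's channel with $\Lambda(q)=q_\delta$ and $\parallel r-\Lambda(r)\parallel\leqslant O(\sqrt{\delta})$ for arbitrary $r$, and on the output side a map of the form $M(x)=(1-\delta')x+\delta'v$ with $v$ chosen so that $M(q_\delta')=q'$ exactly (possible because $q'$ is full rank and $q_\delta'$ is close to it), whose $O(\delta')$ disturbance of the system state is what legitimately goes into $\epsilon$. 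With these two repairs your outline is a correct reconstruction of the proof in \cite{Brandao15}.
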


We shall use this Lemma for the derivations of the second laws for the state transformations under cSLTOs.

\section{Second laws for transformations between states block-diagonal in energy eigenbases \label{sec:2ndLawsDiagStates}}
In this section, we present the necessary and sufficient conditions for state transformations, i.e., the second laws of state transformations, under (catalytic) semi-local thermal operations. All the necessary and sufficient conditions derived here are based on the assumption that the initial states are block-diagonal in the energy eigenbases of the system. However, the conditions still apply to initial non-block-diagonal states, but only as the necessary conditions. 

We start with a bipartite system $S_{12}$, with non-interacting Hamiltonian $H_{S_{12}}=H_{S_1} + H_{S_2}$, and the initial state $\rho_{S_{12}}$. The subsystem $S_1$ ($S_2$) is semi-locally interacting with a bath $B_1$ ($B_2$) at inverse temperature $\beta_1$ ($\beta_2$). After a transformation, the final state of the system becomes $\sigma_{S_{12}}$. At this stage, we assume that the system Hamiltonian remain unchanged before and after the transformation. Our goal is to find necessary and sufficient conditions with which we can ascertain that the transformation  
\begin{align}
	(\rho_{S_{12}}, H_{S_{12}}) \rightarrow (\sigma_{S_{12}}, H_{S_{12}})
\end{align} 
is possible via a semi-local thermal operation $\Lambda$, and vice versa.  

\subsection{State transformation in absence of a catalyst \label{sec:StateTrasNoCat}}
Suppose, we do not have access to a catalyst. Following the discussion made in Sections \ref{sec:2bath2sys} and \ref{sec:Character-slto}, we rewrite the initial system-bath composite as   
\begin{align}
	\gamma_{B_1} \otimes \gamma_{B_2} \otimes \rho_{S_{12}} =\sum_{E_{12}} P_{ E_{12}} \left( \gamma_{B_1} \otimes \gamma_{B_2} \otimes \rho_{S_{12}}  \right) P_{E_{12}} 
	=\sum_{E_{12}} p_{E_{12}} \ \rho^{B_{1}B_2S_{12}}_{E_{12}}.
\end{align}
Here $E_{12}=E_1 + E_2$ the total energy and the projector $ P_{ E_{12}}$ spans the system-bath joint space with the same value of total energy $E_{12}$. The probabilities $p_{E_{12}}=\tr[P_{E_{12}} (\gamma_{B_1} \otimes \gamma_{B_2} \otimes \rho_{S_{12}})] $. Now for $E_1, E_{B_1} \in \mathcal{E}_1$ and $E_2, E_{B_2} \in \mathcal{E}_2$, the normalized state of the bath-system composites, in a block of total energy $E_{12}=E_1 + E_2$, is given by
\begin{align}\label{eq:InState1}
	\rho^{B_{1}B_2S_{12}}_{E_{12}}=\bigoplus_{E_{S_{12}}} \frac{\mathbb{I}}{g_{B_1}(E_1)e^{-\beta_1 E_{S_1}}} \otimes \frac{\mathbb{I}}{g_{B_2}(E_2)e^{-\beta_2 E_{S_2}} } \otimes P_{E_{S_{12}}} (\rho_{S_{12}}) P_{E_{S_{12}}}
	=\bigoplus_{E_{S_{12}}} \eta^{B_1}_{E_{1}-E_{S_{1}}} \otimes \eta^{B_2}_{E_{2}-E_{S_{2}}} \otimes P_{E_{S_{12}}} (\rho_{S_{12}}) P_{E_{S_{12}}},
\end{align}
where $ E_{S_{12}}=E_{S_1} + E_{S_2}$ is the system energy and $P_{E_{S_{12}}}$ is the projector correspond to the system energy $E_{S_{12}}$. Similarly, the final joint state of the bath and the system composite can be expressed as
\begin{align}
	\gamma_{B_1} \otimes \gamma_{B_2} \otimes \sigma_{S_{12}} =\sum_{E_{12}} P_{ E_{12}} \left( \gamma_{B_1} \otimes \gamma_{B_2} \otimes \sigma_{S_{12}}  \right) P_{E_{12}}
	=\sum_{E_{12}} q_{E_{12}} \ \sigma^{B_{1}B_2S_{12}}_{E_{12}},
\end{align}
where $q_{E_{12}}=\tr[P_{E_{12}} (\gamma_{B_1} \otimes \gamma_{B_2} \otimes \sigma_{S_{12}})]$. For a total energy $E_{12}=E_1 + E_2$ block, the normalized state becomes
\begin{align}\label{eq:InState2}
	\sigma^{B_{1}B_2S_{12}}_{E_{12}}=\bigoplus_{E_{S_{12}}} \eta^{B_1}_{E_{1}-E_{S_{1}}} \otimes \eta^{B_2}_{E_{2}-E_{S_{2}}} \otimes P_{E_{S_{12}}} (\sigma_{S_{12}}) P_{E_{S_{12}}}.
\end{align}
With these structures of the initial and final states of the system-bath composites, we put forward the necessary and sufficient conditions for the transformations of block-diagonal states under semi-local thermal operations.  

\subsubsection{Majorization condition}
The conditions are derived in terms of majorization (see Section \ref{sec:MajorizationSchur}) in the following theorem. 

\begin{thm}[Majorization condition for state transformations]\label{thm:MajorCond}
	Consider two states $\rho_{S_{12}}$ and $\sigma_{S_{12}}$ that are block-diagonal in the eigenbases of the Hamiltonian of the system  $H_{S_{12}}=H_{S_1} + H_{S_2}$. Then the transformation $( \rho_{S_{12}}, H_{S_{12}}) \rightarrow (\sigma_{S_{12}}, H_{S_{12}})$ by means of semi-local thermal operation is possible if, and only if, the initial and final states of the system-bath composites satisfy the majorization relation 
	\begin{align}\label{eq:MajorCond1}
		\bigoplus_{E_{S_{1}}+E_{S_2}} \eta^{B_1}_{E_{1}-E_{S_{1}}} \otimes \eta^{B_2}_{E_{2}-E_{S_{2}}} \otimes P_{E_{S_{12}}} (\rho_{S_{12}}) P_{E_{S_{12}}} \succcurlyeq \bigoplus_{E_{S_{1}}+E_{S_2}} \eta^{B_1}_{E_{1}-E_{S_{1}}} \otimes \eta^{B_2}_{E_{2}-E_{S_{2}}} \otimes P_{E_{S_{12}}} (\sigma_{S_{12}}) P_{E_{S_{12}}} 
	\end{align}
	for large enough $E_1$ and $E_2$. 
	
	Moreover, for the cases where the initial system state is not block-diagonal in the eigenbases of $ H_{S_{12}}$, the necessary condition for the transformation $( \rho_{S_{12}}^\prime, H_{S_{12}}) \rightarrow (\sigma_{S_{12}}^\prime, H_{S_{12}})$ is
	\begin{align}\label{eq:MajorCond2}
		\bigoplus_{E_{S_{1}}+E_{S_2}} \eta^{B_1}_{E_{1}-E_{S_{1}}} \otimes \eta^{B_2}_{E_{2}-E_{S_{2}}} \otimes P_{E_{S_{12}}} (\rho_{S_{12}}^\prime) P_{E_{S_{12}}} \succcurlyeq \bigoplus_{E_{S_{1}}+E_{S_2}} \eta^{B_1}_{E_{1}-E_{S_{1}}} \otimes \eta^{B_2}_{E_{2}-E_{S_{2}}} \otimes P_{E_{S_{12}}} (\sigma_{S_{12}}^\prime) P_{E_{S_{12}}}.
	\end{align}
\end{thm}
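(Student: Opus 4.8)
The plan is to recast the two-bath, weighted-energy transformation as an ordinary Nielsen-type majorization problem inside each fixed total weighted-energy block, paralleling the single-bath treatment of \cite{Horodecki13} with the weighted energy $E_{12}^{\beta_1\beta_2}=\beta_1 E_1+\beta_2 E_2$ taking the role of the single conserved charge. The raw material is the block decomposition in Eqs.~\eqref{eq:InState1} and \eqref{eq:InState2}: within a block, each composite state is a direct sum over system weighted-energy sub-blocks of a maximally mixed bath factor of dimension $d_i d_j=g_{B_1}(E_1)e^{-\beta_1 E_i^{S_1}}\,g_{B_2}(E_2)e^{-\beta_2 E_j^{S_2}}$ tensored with the system populations $p_{ij}$ (respectively $q_{mn}$). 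The structural fact I would lean on throughout is Eq.~\eqref{eq:UniJoint}: every SLTO unitary is block-diagonal in the total weighted-energy and otherwise unconstrained within a block.

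The core of the argument is to show that the composite-state majorization \eqref{eq:MajorCond1} is equivalent to the existence of a semi-Gibbs-preserving stochastic channel on the system that sends $\{p_{ij}\}\to\{q_{mn}\}$. The $\eta$-factors in \eqref{eq:InState1}--\eqref{eq:InState2} are exactly the fine-graining that converts the $d$-majorization of the pairs $(\{p_{ij}\},\text{semi-Gibbs})$ and $(\{q_{mn}\},\text{semi-Gibbs})$ into ordinary majorization of the composite spectra, so I would invoke the stochastic-channel characterization of $d$-majorization recalled earlier to pass between the two. For sufficiency I would then realize such a semi-Gibbs-preserving channel as an honest SLTO using the transition-current construction from the semi-Gibbs-preservation proof: a majorizing (bistochastic, hence by Birkhoff a permutation-mixture) map on the sub-block probabilities is implemented by permutations acting strictly within a total weighted-energy block, which commute with the weighted Hamiltonian and so satisfy \eqref{eq:slto-commutationTemp}; the classical mixture is dilated into one global unitary using the bath's large degenerate registers.

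For necessity I would use that a block-diagonal $U_{BS}$ preserves the spectrum in each weighted-energy block and, by the semi-Gibbs-preservation Corollary, induces a semi-Gibbs-preserving stochastic map on the system; this map together with the $d$-majorization characterization forces $\rho^{B_{12}S_{12}}_{E}\succcurlyeq\sigma^{B_{12}S_{12}}_{E}$ in the large bath-energy limit, where the $\approx$ in the block decomposition becomes controlled by bath typicality. For the final statement, when $\rho_{S_{12}}$ carries coherence in the weighted-energy eigenbasis, I would use that SLTOs commute with dephasing there, Eq.~\eqref{eq:dephaseGlob}: feasibility of $\rho\to\sigma$ implies feasibility of the dephased transformation, whose composites are precisely the projected operators $P_{E_{S_{12}}^{\beta_1\beta_2}}(\cdot)P_{E_{S_{12}}^{\beta_1\beta_2}}$ appearing in \eqref{eq:MajorCond2}, yielding the stated necessary condition while coherence obstructs sufficiency.

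I expect the principal obstacle to be the sufficiency realization: turning an abstract majorizing map on the sub-block populations into a genuine weighted-energy-conserving permutation of joint system-bath levels. The two distinct temperatures are what make this nontrivial, since one must check that the product degeneracy $d_i d_j$ reproduces the correct semi-Gibbs weights and that conservation of the \emph{weighted} energy --- rather than of each composite's energy separately --- supplies enough room to permute; this hinges on the bath-typicality properties (exponential growth of $g_{B_x}(E_x)$ and availability of energy-shifted levels in $\mathcal{E}_1,\mathcal{E}_2$) holding simultaneously for both baths and large enough $E_1,E_2$.
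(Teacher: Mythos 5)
Your proposal is correct, and it reaches the theorem by a genuinely different route than the paper's own proof. The paper works directly on the joint system--bath space: within a fixed total weighted-energy block it (i) splits off an exponentially large maximally mixed factor of the bath, $\eta_K^{B_1B_2}$, and uses it as a control register to implement an arbitrary unital (random-unitary) channel whose reduced action produces $\sigma_{S_{12}}$, and (ii) applies a twirling operation (itself a random unitary inside the block) to erase residual system--bath correlations so that the final composite regains the product form of Eq.~\eqref{eq:InState2}; it then invokes the equivalence between majorization and convertibility by random unitary operations for both directions, and uses the size of the baths to argue that the same reduced transformation can be induced in every block, so the relation \eqref{eq:MajorCond1} need not be checked blockwise. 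You instead classicalize the problem: the flat $\eta$-factors fine-grain the populations $\{p_{ij}\}$, $\{q_{ij}\}$ into the composite spectra, so the composite majorization becomes $d$-majorization of $(\{p_{ij}\},\text{semi-Gibbs})$ versus $(\{q_{ij}\},\text{semi-Gibbs})$, which by Lemma~\ref{lem:CatDmajor}-type reasoning is the existence of a semi-Gibbs-preserving stochastic map $p\to q$; sufficiency then comes from realizing that map via the transition-current/permutation construction in the proof of Corollary~\ref{defi:slto2} (Birkhoff plus a bath register to dilate the permutation mixture into a single unitary obeying \eqref{eq:slto-commutationTemp}), and necessity from the observation that any SLTO induces a semi-Gibbs-preserving stochastic map on the weighted-energy populations. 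What each approach buys: the paper's construction is an explicit physical protocol on the quantum joint state (hence its need for the twirling step), whereas yours stays classical wherever possible --- no twirling is needed because diagonal distributions map to diagonal distributions --- reuses machinery already established in the appendix, and gives an arguably cleaner necessity argument. Your treatment of the non-block-diagonal case via Eq.~\eqref{eq:dephaseGlob} is identical to the paper's, and both proofs share the same caveat (which you flag) that the degeneracy factors $d_id_j$ are only approximately integral dimensions, controlled by bath typicality at large $E_1,E_2$.
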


\begin{proof}
	For the system-bath composite state in a block with fixed total energy $E_{12}=E_1+E_2$, the allowed operations are any unitary operations that also strictly conserve total weighted-energy $E_{12}^{\beta_1 \beta_2}=\beta_1 E_1 + \beta_2 E_2$. Because, such a unitary operation does not alter the total energy as well as the total weighted-energy of the block. However, as we shall show below, we not only can apply all such unitary operations but also implement random unitary operations as long as population remain in this fixed total energy and total weighted-energy block. Note these are the only possible operations that are allowed on the systems-baths joint space as they strictly conserve the total energy and total weighted-energy.
	
	We start with the first part of the theorem and prove it using a protocol involving the following steps, where the initial state of the system-bath composite is with total energy $E_{12}$ and total weighted-energy $E_{12}^{\beta_1 \beta_2}$:
	
	(i) {\it Implementing random unitary operations:} We assume that the $E_1$ and the $E_2$, corresponding to the block with energy $E_{12}$, are reasonably large. This in turn implies that the dimension of the maximally mixed state $\eta^{B_1}_{E_{1}-E_{S_{1}}} \otimes \eta^{B_2}_{E_{2}-E_{S_{2}}}$ is exponentially large. Therefore we can divide the state in two normalized sub-parts, as  
	\begin{align}
		\eta^{B_1}_{E_{1}-E_{S_{1}}} \otimes \eta^{B_2}_{E_{2}-E_{S_{2}}}= \left(\eta^{B_1}_{E_{1}^{\prime \prime}} \otimes \eta^{B_2}_{E_{2}^{\prime \prime}}\right) \otimes \left(\eta^{B_1}_{E_{1}^{\prime}-E_{S_{1}}} \otimes \eta^{B_2}_{E_{2}^{\prime}-E_{S_{2}}} \right)
	\end{align}
	where $E_1=E_1^{\prime \prime} + E_1^{\prime}$ and $E_2=E_2^{\prime \prime} + E_2^{\prime}$, and each of the sub-parts are in maximally mixed states with exponentially large dimensions. We further assume that $\eta^{B_1}_{E_{1}-E_{S_{1}}} \otimes \eta^{B_2}_{E_{2}-E_{S_{2}}}$ is so large that it hardly differ from   
	$\eta^{B_1}_{E_{1}^{\prime}-E_{S_{1}}} \otimes \eta^{B_2}_{E_{2}^{\prime}-E_{S_{2}}}$. The initial state of the system-bath composite with total energy $E_{12}$, is then
	\begin{align}
		\left(\eta^{B_1}_{E_{1}^{\prime \prime}} \otimes \eta^{B_2}_{E_{2}^{\prime \prime}}\right) \otimes \left(\bigoplus_{E_{S_{12}}} \eta^{B_1}_{E_{1}^{\prime}-E_{S_{1}}} \otimes \eta^{B_2}_{E_{2}^{\prime}-E_{S_{2}}} \otimes P_{E_{S_{12}}} (\rho_{S_{12}}) P_{E_{S_{12}}} \right).
	\end{align}
	Now, the state $\eta^{B_1}_{E_{1}^{\prime \prime}} \otimes \eta^{B_2}_{E_{2}^{\prime \prime}}$ can be used as a control state to implement random unitary operations (i.e., a unital channel) on the rest of the system-bath composite, by using a global unitary on the entire system-bath composite that strictly conserves both total energy and total weighted-energy. Lets say, we implement such a global unitary $U_{E_1+E_2}$ (see Eq.~\eqref{eq:UniJoint}) so that the resultant system state becomes $\sigma_{S_{12}}$ after tracing out the baths. Note the joint state of the system-bath composite can still have correlations among them.
	
	(ii) {\it Destroying unwanted correlations in the system-bath composite state:} Now, we can destroy the unnecessary correlations that may possibly present between the sub-systems and baths after step (i). That is done by a ``twirling" operation which is itself a random unitary operation within the total energy block. For each system state with the energy $E_{S_{12}}=E_{S_1}+E_{S_2}$, we apply twirling operation on the bath part $B_1B_2$ while applying identity operation on the system part, such that the transformed final system-bath state becomes classically correlated as
	\begin{align}
		\bigoplus_{E_{S_{1}}+E_{S_2}} \eta^{B_1}_{E_{1}-E_{S_{1}}} \otimes \eta^{B_2}_{E_{2}-E_{S_{2}}} \otimes P_{E_{S_{12}}} (\sigma_{S_{12}}) P_{E_{S_{12}}}.
	\end{align}
	
	Both the operations applied in steps (i) and (ii) are random unitary operations that respect strict conservation of total energy and total weighted-energy, and these are the precursors of semi-local thermal operations in the joint space of systems and baths (see Section \ref{sec:Character-slto}). Further, if two states are related through random unitary operations, then the states satisfy pre-ordering by a majorization relation, and this is a necessary and sufficient condition. Consequently, the transformation is possible if and only if the initial system-bath state majorizes the final one, within a block of total energy $E_{12}$. Now since baths are considerably large in energy compared to the systems, we can implement the random unitary operations in the other blocks, similar to the steps (i) and (ii), such that the reduced system state becomes exactly the same. As a result, we do not need to check the majorization relation for every block with fixed total total energy. Therefore, the transition $( \rho_{S_{12}}, H_{S_{12}}) \rightarrow (\sigma_{S_{12}}, H_{S_{12}})$ is possible if, and only if, the majorization relation \eqref{eq:MajorCond1} is satisfied. With this, we complete the proof of the first part.   
	
	For the second part, we recall that the reduced operations on the system part, as the result of global unitary operations on the systems-baths composite, respect time-translation symmetry with respect to time translation by the Hamiltonian of the system  $H_{S_{12}}$ (see main text). The block-diagonal elements (also known as the 'zero' mode elements) of the system density matrix ($\rho_{S_{12}}^\prime$) with respect to the eigenbases of $H_{S_{12}}$ evolve independently of the off-diagonal elements ('non-zero' modes) under this time-translation symmetric operations. For any transformation connecting the initial state ($\rho_{S_{12}}^\prime$) to a final state ($\sigma_{S_{12}}^\prime$), it is necessary that the corresponding block-diagonal states must satisfy the majorization relation \eqref{eq:MajorCond2}. This relation cannot provide the sufficient condition as it does not encode information related to the non-block-diagonal elements.
\end{proof}

So far, we have derived the necessary and sufficient condition for transformations between block-diagonal states under semi-local thermal operations. However, the condition requires us a take into account both the system and the bath parameters simultaneously, which is not always practical.

\subsubsection{Thermo-majorization condition}
Here we aim to derive necessary and sufficient conditions that exclusively depend on the system parameters, based on thermo-majorization. Consider two quantum states $\rho_{S_{12}}$ and $\sigma_{S_{12}}$, block-diagonal in the eigenbases of the Hamiltonian $H_{S_{12}}= H_{S_1} + H_{S_2}$, with the probabilities $\{ p_{ij} \}$ and $\{ q_{ij} \}$ respectively. Here the Hamiltonian of the system is written as $H_{12}=\sum_{ij} (E_i^{S_1}+E_j^{S_2}) \proj{ij}$, the probabilities are given by $p_{ij}=\bra{ij} \rho_{S_{12}} \ket{ij}$ and $q_{ij}=\bra{ij} \sigma_{S_{12}} \ket{ij}$. A pre-ordering is done by a non-increasing ordering of the quantities $\{p_{ij} \ e^{(\beta_1 E_i^{S_1} + \beta_2 E_j^{S_2})}\}$ and relabeled, so that
\begin{align}
	p_{11}  \ e^{(\beta_1 E_1^{S_1} + \beta_2 E_1^{S_2})} \geqslant p_{12}  \ e^{(\beta_1 E_1^{S_1} + \beta_2 E_2^{S_2})} \geqslant p_{21}  \ e^{(\beta_1 E_2^{S_1} + \beta_2 E_1^{S_2})} \geqslant p_{22}  \ e^{(\beta_1 E_2^{S_1} + \beta_2 E_2^{S_2})} \geqslant \ldots .
\end{align}
This determines the sequence in $\{p_{ij}\}$ which may or may not satisfy a non-decreasing order. We denote the set of the ordered probability distribution as $\{p_{ij}^\downarrow \}$, where $p_{11}^\downarrow $ is the $p_{ij}$ corresponding to the largest $p_{ij} \ e^{(\beta_1 E_i^{S_1} + \beta_2 E_j^{S_2})}$ value and so on. A similar pre-ordering is also done for $\{ q_{ij} \} \rightarrow \{ q_{ij}^\downarrow \}$. Now we construct a Lorentz curve with the points correspond to the pairs 
\begin{align*}
	\{(x,y)\}= &\left\{ (0, \ \ 0), \right. \\
	& \ (p_{11}, \ \ e^{-(\beta_1 E_1^{S_1} + \beta_2 E_1^{S_2})}), \\ 
	& \ (p_{11} + p_{12}, \ \ e^{-(\beta_1 E_1^{S_1} + \beta_2 E_1^{S_2})}+e^{-(\beta_1 E_1^{S_1} + \beta_2 E_2^{S_2})}), \\
	& \ (p_{11} + p_{12} + p_{21}, \ \ e^{-(\beta_1 E_1^{S_1} + \beta_2 E_1^{S_2})}+e^{-(\beta_1 E_1^{S_1} + \beta_2 E_2^{S_2})}+ e^{-(\beta_1 E_2^{S_1} + \beta_2 E_1^{S_2})}), \\
	& \ (p_{11} + p_{12} + p_{21} +p_{22}, \ \ e^{-(\beta_1 E_1^{S_1} + \beta_2 E_1^{S_2})}+e^{-(\beta_1 E_1^{S_1} + \beta_2 E_2^{S_2})}+ e^{-(\beta_1 E_2^{S_1} + \beta_2 E_1^{S_2})} + e^{-(\beta_1 E_2^{S_1} + \beta_2 E_2^{S_2})}), \\
	& \ \ \ \ \ \ \ \ \vdots \\
	& \ \left. (1, \ \ Z_1 Z_2) \right\}
\end{align*}
Plotting these points gives a function $f_p(x)$ corresponding to the state $\rho_{S_{12}}$. A similar function is also derived for $\{ q_{ij} \}$, and that is $f_q(x)$ for $\sigma_{S_{12}}$. 
\begin{thm}[Thermo-majorization condition for state transformations]\label{thm:ThermoMajorCond}
	A transition $( \rho_{S_{12}}, H_{S_{12}}) \rightarrow (\sigma_{S_{12}}, H_{S_{12}})$ can occur under semi-local thermal operation if, and only if, the spectra of $\rho_{S_{12}}$ thermo-majorizes the spectra of $\sigma_{S_{12}}$, i.e., 
	\begin{align}
		f_p(x) \geqslant f_q(x), \ \ \forall x \in [0, Z_1Z_2].
	\end{align}
\end{thm}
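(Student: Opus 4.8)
The plan is to obtain the statement directly from the majorization condition already established in Theorem~\ref{thm:MajorCond}, by translating the majorization relation that lives on the large system--bath composite into a condition phrased purely in terms of the system probabilities $\{p_{ij}\}$, $\{q_{ij}\}$ and the weighted Gibbs weights. First I would fix a single total weighted-energy block $E_{12}^{\beta_1\beta_2}=\beta_1 E_1+\beta_2 E_2$ with $E_1,E_2$ large, and read off the spectrum of the composite state $\rho^{B_{12}S_{12}}_{E_{12}^{\beta_1\beta_2}}$ from Eq.~\eqref{eq:InState1}. In the sub-block labelled by the system eigenstate $\ket{ij}$ the composite is maximally mixed, with every eigenvalue equal to $p_{ij}\,[\,g_{B_1}(E_1)e^{-\beta_1 E_i^{S_1}}\,g_{B_2}(E_2)e^{-\beta_2 E_j^{S_2}}\,]^{-1}$ and with multiplicity $g_{B_1}(E_1)e^{-\beta_1 E_i^{S_1}}g_{B_2}(E_2)e^{-\beta_2 E_j^{S_2}}$. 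Pulling out the common factor $[\,g_{B_1}(E_1)g_{B_2}(E_2)\,]^{-1}$, each eigenvalue is proportional to $p_{ij}\,e^{\beta_1 E_i^{S_1}+\beta_2 E_j^{S_2}}$, while its multiplicity is proportional to $e^{-\beta_1 E_i^{S_1}-\beta_2 E_j^{S_2}}$.

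Second, I would observe that ordering the composite eigenvalues non-increasingly is exactly the pre-ordering of the $\{p_{ij}\}$ by the quantities $p_{ij}\,e^{\beta_1 E_i^{S_1}+\beta_2 E_j^{S_2}}$ that defines the thermo-majorization curve $f_p$. Consequently the Lorenz curve of the composite state---cumulative eigenvalue mass plotted against cumulative dimension---is piecewise linear, flat across each sub-block, and its breakpoints are precisely the points $(\sum e^{-\beta_1 E_i^{S_1}-\beta_2 E_j^{S_2}},\ \sum p_{ij})$ that build $f_p$, ending at $(Z_1Z_2,1)$. The common bath factor $g_{B_1}(E_1)g_{B_2}(E_2)$ rescales the dimension axis identically for $\rho$ and $\sigma$ and hence cancels in the comparison. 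Thus the Lorenz curve of $\rho^{B_{12}S_{12}}_{E_{12}^{\beta_1\beta_2}}$ coincides, up to this common rescaling, with $f_p(x)$, and that of $\sigma^{B_{12}S_{12}}_{E_{12}^{\beta_1\beta_2}}$ with $f_q(x)$.

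Third, I would invoke the standard equivalence recalled in Section~\ref{sec:MajorizationSchur} that majorization of two states is equivalent to the Lorenz curve of one lying everywhere above that of the other. Since both composite Lorenz curves are concave and piecewise linear, pointwise domination is equivalent to domination of all cumulative partial sums, i.e.\ to the majorization relation \eqref{eq:MajorCond1}. Therefore the majorization condition of Theorem~\ref{thm:MajorCond} holds if and only if $f_p(x)\geqslant f_q(x)$ for all $x\in[0,Z_1Z_2]$, which is the claimed thermo-majorization condition; and because the conclusion of Theorem~\ref{thm:MajorCond} is already independent of the chosen block for large $E_1,E_2$, the same equivalence holds simultaneously in every block and the proof closes.

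The main obstacle, I expect, lies in the two approximations hidden in this reduction and in making the ``large enough $E_1,E_2$'' hypothesis carry its weight. One must ensure that the bath degeneracy identity $g_{B_x}(E_x-E_{S_x})\approx g_{B_x}(E_x)e^{-\beta_x E_{S_x}}$ used to read off the eigenvalue and multiplicity structure becomes sharp in the relevant limit, so that the composite Lorenz breakpoints converge to the exact thermo-majorization points with total weight $Z_1Z_2$. A second and more delicate point is that the breakpoints of $f_p$ and $f_q$ sit at different abscissae, since the two thermo-majorization orderings differ; passing from ``partial sums dominate at the breakpoints of each curve'' to ``the curves dominate pointwise on all of $[0,Z_1Z_2]$'' is exactly where the concavity of the Lorenz (thermo-majorization) curves must be used carefully, as in the single-bath treatment of \cite{Horodecki13}.
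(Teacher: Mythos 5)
Your proposal is correct, but note that the paper itself never proves Theorem~\ref{thm:ThermoMajorCond}: the theorem is asserted immediately after the Lorenz-curve construction, with no argument supplied, so there is no ``paper proof'' to compare against. Your route --- reading off the composite spectrum in a fixed total weighted-energy block from Eq.~\eqref{eq:InState1}, observing that eigenvalues $\propto p_{ij}\,e^{\beta_1 E_i^{S_1}+\beta_2 E_j^{S_2}}$ with multiplicities $\propto e^{-\beta_1 E_i^{S_1}-\beta_2 E_j^{S_2}}$ reproduce the thermo-majorization ordering, and then converting the majorization condition of Theorem~\ref{thm:MajorCond} into pointwise Lorenz-curve domination --- is exactly the intended argument, i.e.\ the two-bath adaptation of the embedding proof of thermo-majorization in \cite{Horodecki13}. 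Two further points in your favor: your breakpoints $\bigl(\sum e^{-\beta_1 E_i^{S_1}-\beta_2 E_j^{S_2}},\,\sum p_{ij}\bigr)$ silently correct an inconsistency in the paper, whose displayed curve uses positive exponents and swapped coordinates even though the theorem's domain is $[0,Z_1Z_2]$; and the two caveats you flag --- sharpness of the degeneracy approximation $g_{B_x}(E_x-E_{S_x})\approx g_{B_x}(E_x)e^{-\beta_x E_{S_x}}$ for large $E_x$, and the use of concavity to pass from domination at the (differently located) breakpoints to pointwise domination --- are precisely the steps that a complete write-up must make explicit.
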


In this sub-section, the derived necessary and sufficient conditions based on majorization are very handy. This is in the sense that they are easy to check, in particular, the one based on thermo-majorization. However, as we have mentioned in Section \ref{sec:tools}, there are probability distribution that do not satisfy majorization relation as such, but can still possess a majorization based pre-ordering by having access to another probability distribution as a catalyst. It is not easy to check whether there exists a probability distribution which can act as a catalyst in order satisfy a majorization relation. We can still find necessary and sufficient condition(s) based on Re\'nyi relative entropies to ensure their existence, which we shall consider in the next sub-section.

\subsection{Catalytic state transformation}
Before we propose the necessary and sufficient conditions for the transformation under catalytic semi-local thermal operations, let us introduce the definition of $\alpha$-free-entropy ($S_{\alpha}$) in terms of the R\'enyi $\alpha$-relative entropy ($D_\alpha$).

\subsubsection{$\alpha$-free-entropies}

Consider a state $\rho_{S_{12}}$ of a bipartite system $S_{12}$ with Hamiltonian $H_{S_{12}}=H_{S_1} + H_{S_2}$, where the sub-system $S_{1}$ ($S_{2}$) is semi-locally interacting with the bath $B_1$ ($B_2$) at inverse temperature $\beta_1$ ($\beta_2$). Also, the state $\rho_{S_{12}}$ is block-diagonal in the eigenbases of the Hamiltonian $H_{S_{12}}$. Then, the R\'enyi  $\alpha$-relative entropy between the system state and its corresponding semi-Gibbs state, for $\alpha \in [-\infty, \infty]$, is given by
\begin{align}
	D_{\alpha} (\rho_{S_{12}} \parallel \gamma_{S_1} \otimes\gamma_{S_2}  )=\frac{\mbox{sgn}(\alpha)}{\alpha -1} \log \tr [(\rho_{S_{12}})^{\alpha} \ (\gamma_{S_1} \otimes\gamma_{S_2})^{1-\alpha}],
\end{align}
where the thermal states are $\gamma_{S_i}=\frac{e^{-\beta_i H_{S_i}}}{Z_i}$, with $ Z_i=\tr[e^{-\beta_i H_{S_i}}]$ and $i=1,2$. Now the $\alpha$-free-entropy of the state $\rho_{S_{12}}$ is defined as 
\begin{align}\label{eq:free-ent}
	S_\alpha (\rho_{S_{12}}, \gamma_{S_1}\otimes \gamma_{S_2})= D_\alpha \left(\rho_{S_{12}} \parallel \gamma_{S_1}\otimes \gamma_{S_2} \right) - \log Z_1Z_2, \ \ \forall \alpha \in [-\infty, \infty]. 
\end{align}
The name 'free-entropy' is justified by fact that it quantifies the work potential stored in a system in terms of entropy, which we shall discuss in the Section \ref{sec:fed}. For $\alpha \rightarrow 1$, the $S_\alpha$ reduces to the Helmholtz free-entropy as 
\begin{align}
	S_1 (\rho_{S_{12}}, \gamma_{S_1}\otimes \gamma_{S_2}) = \beta_1 E_{S_1} + \beta_2 E_{S_2}-S(\rho_{S_{12}}), 
\end{align}
where $S(\rho_{S_{12}})=-\tr \rho_{S_{12}} \ln \rho_{S_{12}}$ is the von Neumann entropy and $E_{S_x}=\tr [\rho_{S_{x}}H_{S_x}]$ is the average energy of the sub-system $S_x$, with $x=1,2$. 
For the cases where the state $\rho_{S_{12}}$ is uncorrelated, $\rho_{12}=\rho_{S_1} \otimes \rho_{S_2}$, the $\alpha$-free-entropy becomes additive $S_\alpha (\rho_{S_{12}}, \gamma_1\otimes \gamma_2)=S_\alpha (\rho_{S_{1}}, \gamma_1)+S_\alpha (\rho_{S_{2}},  \gamma_2)$.

\subsubsection{Second laws in terms of $\alpha$-free-entropies}
Now with the notion of $\alpha$-free-entropy, we go on to propose the necessary and sufficient conditions for catalytic semi-local thermal operation, in the following theorem.
\begin{thm}\label{thm:2ndLaws}
	Consider two states $\rho_{S_{12}}$ and  $\sigma_{S_{12}}$ that are block-diagonal in the eigenbases of a system Hamiltonian $H_{S_{12}}=H_{S_1} + H_{S_2}$. Then, a transformation $ (\rho_{S_{12}}, H_{S_{12}}) \rightarrow (\sigma_{S_{12}}, H_{S_{12}})$ is possible under semi-local thermal operation if, and only if, for all $\alpha \in (-\infty, \infty)$,
	\begin{align}\label{eq:Cat2ndLaws}
		S_\alpha(\rho_{S_{12}}, \gamma_{S_1} \otimes \gamma_{S_2}) \geqslant S_\alpha(\sigma_{S_{12}}, \gamma_{S_1} \otimes \gamma_{S_2}).
	\end{align}
\end{thm}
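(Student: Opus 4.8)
The plan is to strip the quantum statement down to a statement about classical probability distributions --- which is legitimate because $\rho_{S_{12}}$, $\sigma_{S_{12}}$ and the reference $\gamma_{S_1}\otimes\gamma_{S_2}$ are \emph{simultaneously} diagonal in the weighted-energy eigenbasis $\{\ket{ij}\}$ of $H_{S_{12}}^{\beta_1\beta_2}$ --- and then to invoke the catalytic $d$-majorization Lemma~\ref{lem:CatDmajor}. Writing $p=\{p_{ij}\}$, $q=\{q_{ij}\}$ for the eigenvalue distributions of $\rho_{S_{12}}$, $\sigma_{S_{12}}$, and $g=\{g_{ij}\}$ with $g_{ij}=e^{-\beta_1 E_i^{S_1}}e^{-\beta_2 E_j^{S_2}}/(Z_1Z_2)$ for the semi-Gibbs distribution, the additive constant $-\log Z_1Z_2$ in \eqref{eq:free-ent} cancels on both sides, so that \eqref{eq:Cat2ndLaws} is exactly the family of conditions $D_\alpha(p\parallel g)\geqslant D_\alpha(q\parallel g)$ for all $\alpha$. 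The theorem is thus equivalent to the assertion that the transition is achievable by a cSLTO if and only if the pair $(p,g)$ catalytically $d$-majorizes $(q,g)$.

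For the forward (``only if'') direction I would use the data-processing inequality for the classical R\'enyi relative entropies, which as recorded in Section~\ref{sec:tools} holds for \emph{every} $\alpha\in[-\infty,\infty]$ (this is precisely why the block-diagonal hypothesis matters: it reduces everything to commuting distributions, avoiding the $\alpha\in[0,2]$ restriction of the quantum version). A cSLTO acts as $\rho_{S_{12}}\otimes\rho_{C_{12}}\mapsto\sigma_{S_{12}}\otimes\rho_{C_{12}}$ and, by Corollary~\ref{defi:slto2}, preserves the joint semi-Gibbs state $\gamma_{S_1}\otimes\gamma_{S_2}\otimes\gamma_{C_1}\otimes\gamma_{C_2}$. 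Monotonicity then gives $D_\alpha(\rho_{S_{12}}\otimes\rho_{C_{12}}\parallel\gamma_S\otimes\gamma_C)\geqslant D_\alpha(\sigma_{S_{12}}\otimes\rho_{C_{12}}\parallel\gamma_S\otimes\gamma_C)$, and additivity of $D_\alpha$ over tensor products lets me cancel the finite catalyst contribution $D_\alpha(\rho_{C_{12}}\parallel\gamma_C)$ from both sides, yielding $D_\alpha(\rho_{S_{12}}\parallel\gamma_S)\geqslant D_\alpha(\sigma_{S_{12}}\parallel\gamma_S)$, i.e. \eqref{eq:Cat2ndLaws}.

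For the reverse (``if'') direction I would feed the hypotheses $D_\alpha(p\parallel g)\geqslant D_\alpha(q\parallel g)$ into Lemma~\ref{lem:CatDmajor}; its full-rank requirement on the reference is met because the semi-Gibbs state is strictly positive. The lemma returns, for every $\epsilon>0$, full-rank catalyst distributions $r$ and $s$ (with $s$ uniform on the support of $r$) and a stochastic channel $\Lambda$ satisfying $\Lambda(p\otimes r)=q_\epsilon^\prime\otimes r$, $\Lambda(g\otimes s)=g\otimes s$ and $\|q-q_\epsilon^\prime\|\leqslant\epsilon$. It then remains to realise this abstract, semi-Gibbs-preserving $\Lambda$ as a genuine cSLTO: I identify $r$ with the eigenvalue distribution of a catalyst state $\rho_{C_{12}}$ equipped with a flat (degenerate) Hamiltonian, so that its semi-Gibbs state is exactly the uniform $s$, and then invoke the microscopic construction of Section~\ref{sec:Character-slto} --- embedding system-plus-catalyst into the two baths via the block decomposition \eqref{eq:InState}, so that any semi-Gibbs-preserving stochastic map on the system is realised by permutations (random unitaries) acting within fixed total weighted-energy blocks, exactly as in the proof of Corollary~\ref{defi:slto2}. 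Sending $\epsilon\to0$ then converts the $\epsilon$-approximate map into the exact transformation $\rho_{S_{12}}\to\sigma_{S_{12}}$.

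The main obstacle is this final realisation step, since Lemma~\ref{lem:CatDmajor} delivers only an \emph{abstract} stochastic channel and only up to an $\epsilon$-error. The work lies in (i) certifying that the class of system maps inducible by weighted-energy-conserving unitaries on the system--bath--catalyst composite is \emph{exactly} the semi-Gibbs-preserving stochastic maps --- here the bath typicality properties and the maximally mixed bath blocks of \eqref{eq:InState} are what turn block-internal permutations into the required non-unitary system dynamics --- and (ii) packaging $(r,s)$ into a legitimate catalyst that is returned unchanged and uncorrelated. The endpoints $\alpha\in\{-\infty,+\infty\}$ demanded by Lemma~\ref{lem:CatDmajor} but excluded from the theorem's open interval require only a continuity remark, as $D_{\pm\infty}(\cdot\parallel g)$ are limits of the interior values for a full-rank reference $g$.
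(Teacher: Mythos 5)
Your proposal is correct and follows essentially the same route as the paper's own proof: both directions rest on the catalytic $d$-majorization Lemma~\ref{lem:CatDmajor} with the identifications $p\to\rho_{S_{12}}$, $p'\to\sigma_{S_{12}}$, $q=q'\to\gamma_{S_1}\otimes\gamma_{S_2}$, a catalyst with trivial Hamiltonian so that $s$ is uniform, and Corollary~\ref{defi:slto2} to promote the resulting semi-Gibbs-preserving stochastic map to a genuine (catalytic) semi-local thermal operation, with the same $\epsilon$-approximation and $\alpha=\pm\infty$ continuity remarks. The only cosmetic difference is that for the ``only if'' direction you argue via data processing plus additivity of $D_\alpha$ to cancel the catalyst term, whereas the paper simply invokes the converse implication of Lemma~\ref{lem:CatDmajor}; these are the same argument in substance.
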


\noindent {\bf Remark:} {\it The condition for $\alpha=-\infty, \infty$ can be included by continuity.}

\begin{proof}
	Since the initial and final states are block-diagonal in the energy eigenbases, the theorem above can be proved using catalytic $d$-majorization shown in the Lemma \ref{lem:CatDmajor}. Replacing the probability distributions, in Lemma \ref{lem:CatDmajor}, with the eigenvalues of the states block-diagonal in energy, as
	\begin{align*}
		& p \rightarrow \rho_{S_{12}}; \ \ \mbox{initial state of the system}, \\
		& p^\prime \rightarrow \sigma_{S_{12}}; \ \ \mbox{final state of the system}, \\
		& q=q^\prime \rightarrow \gamma_{S_1} \otimes \gamma_{S_2}; \ \ \mbox{semi-Gibbs state of the system}, \\
		& r \rightarrow \rho_{C_{12}}; \ \ \mbox{a catalyst} \\
		& s \rightarrow \gamma_{C_1} \otimes \gamma_{C_2}; \ \ \mbox{semi-Gibbs state of the catalyst}.
	\end{align*}
	Note, if we consider that the Hamiltonian of the catalyst $H_{C_{12}}=H_{C_1} + H_{C_2}$ is trivial (i.e., $H_{C_{12}}=\mathbb{I}$), then  
	\begin{align*}
		s=\gamma_{C_1} \otimes \gamma_{C_2} \rightarrow \eta;  \ \ \mbox{a uniform distribution}.
	\end{align*}
	
	Let us first assume that the catalyst possesses a trivial Hamiltonian $H_{C_{12}}=\mathbb{I}$ and the conditions \eqref{eq:Cat2ndLaws} are satisfied. The latter, in terms of R\'enyi $\alpha$-relative entropy, means
	\begin{align}\label{eq:Cat2ndLawsD}
		D_\alpha(\rho_{S_{12}} \parallel \gamma_{S_1} \otimes \gamma_{S_2}) \geqslant D_\alpha(\sigma_{S_{12}} \parallel \gamma_{S_1} \otimes \gamma_{S_2}), \ \ \forall \alpha \in (-\infty, \infty).
	\end{align}
	Then, as the Lemma \ref{lem:CatDmajor} implies, there exists a catalyst $\rho_{C_{12}}$ and a channel $\Lambda$ that (i) preserves the semi-Gibbas state $\gamma_{S_1} \otimes \gamma_{S_2} \otimes \gamma_{C_1} \otimes \gamma_{C_2}$, as 
	\begin{align}
		\Lambda \left(\gamma_{S_1} \otimes \gamma_{S_2} \otimes \gamma_{C_1} \otimes \gamma_{C_2} \right) =\gamma_{S_1} \otimes \gamma_{S_2} \otimes \gamma_{C_1} \otimes \gamma_{C_2},
	\end{align}
	and (ii) transforms the initial state as
	\begin{align}
		\Lambda \left(\rho_{S_{12}} \otimes \rho_{C_{12}}  \right) = \rho_{S_{12}}^{o} \otimes \rho_{C_{12}},
	\end{align}
	where $\parallel \rho_{S_{12}}^{o} - \sigma_{S_{12}} \parallel \ \leqslant \epsilon$. As the operations that preserve semi-Gibbs states are also semi-local thermal operations (see Corollary \ref{defi:slto2}), it implies that such a transformation using catalytic semi-local thermal transformation is possible.
	
	Let us consider the converse now. Suppose there is a catalytic semi-local thermal channel $\Lambda$ that transforms 
	\begin{align}
		\Lambda \left(\rho_{S_{12}} \otimes \rho_{C_{12}}  \right) = \rho_{S_{12}}^{o} \otimes \rho_{C_{12}},
	\end{align}
	where $\parallel \rho_{S_{12}}^{o} - \sigma_{S_{12}} \parallel \ \leqslant \epsilon$. Then the Lemma \ref{lem:CatDmajor} implies that the conditions \eqref{eq:Cat2ndLawsD} are satisfied. This completes the proof.
\end{proof}

\subsubsection{For block-diagonal input state of a system, a block-diagonal catalyst is enough}
Note, for a block-diagonal input state of a system, a block-diagonal catalyst is enough. This can be seen from the fact that the semi-local thermal operations are time-translation symmetric with respect to the Hamiltonian of the systems $H_{S_{12}}$ (see main text). This is also true in the presence of catalysts. Therefore, the catalytic semi-local thermal operations are time-translation symmetric with respect to the joint Hamiltonian of the system and the catalyst, $H_{S_{12}} + H_{C_{12}}$. Mathematically, for the catalytic semi-local thermal transformation  $\rho_{S_{12}} \otimes \rho_{C_{12}} \rightarrow \Lambda(\rho_{S_{12}} \otimes \rho_{C_{12}})= \sigma_{S_{12}} \otimes \rho_{C_{12}}$, it means 
\begin{align}
	\Lambda \left(e^{- i t (H_{S_{12}} + H_{C_{12}})} \ \rho_{S_{12}} \otimes \rho_{C_{12}} \ e^{ i t (H_{S_{12}} + H_{C_{12}})} \right) =e^{- i t (H_{S_{12}} + H_{C_{12}})} \ \Lambda \left( \rho_{S_{12}}\otimes \rho_{C_{12}} \right) \ e^{ i t (H_{S_{12}} + H_{C_{12}})}.
\end{align}
Due to this time-translation symmetry, the block diagonal elements of the $S_{12}C_{12}$ composite evolve fully independently of the off-diagonal elements. Further, these block-diagonal elements can be expressed as the tensor-products of block-diagonal elements arising from the system $S_{12}$ and the catalyst $C_{12}$ corresponding to the energy eigenbases of their own Hamiltonians. For an initial state $\rho_{S_{12}}$ block-diagonal in energy bases, the block-diagonal part of the catalyst state only participates during the transformation. Therefore, a catalyst in a state block-diagonal in its energy eigenbases is enough.

\subsubsection{Avoiding negative $\alpha$}
The second laws for the transformations between system states, that are block-diagonal in energy, are based on the conditions \eqref{eq:Cat2ndLaws} for $\alpha \in [-\infty, \infty]$. However, we can get rid of the negative $\alpha$ in the conditions by borrowing an ancillary system in a pure state. The only condition is that, after the transformation, we return it with good fidelity. Even a two-qubit system in a pure state is enough to lift all the conditions involving negative $\alpha$.

\begin{thm} \label{thm:Cat2ndLawsPosiAlpha}
	Consider two states $\rho_{S_{12}}$ and  $\sigma_{S_{12}}$ that are block-diagonal in the energy eigenbases, with the associated Hamiltonian $H_{S_{12}}=H_{S_1} + H_{S_2}$. Additionally, we are allowed to borrow a two-qubit system $A_{12}$ with a trivial Hamiltonian and in a pure state $\proj{0}_{A_1} \otimes \proj{0}_{A_2}$, and then return it with a good fidelity. Then a transformation $ (\rho_{S_{12}}, H_{S_{12}}) \rightarrow (\sigma_{S_{12}}, H_{S_{12}})$ is possible under semi-local thermal operation if, and only if,
	\begin{align}\label{eq:Cat2ndLawsPosiAlpha}
		S_\alpha(\rho_{S_{12}}, \gamma_{S_1} \otimes \gamma_{S_2}) \geqslant S_\alpha(\sigma_{S_{12}}, \gamma_{S_1} \otimes \gamma_{S_2}), \ \ \forall \ \alpha \geqslant 0.
	\end{align}
\end{thm}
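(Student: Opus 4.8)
The plan is to reduce the statement to Theorem \ref{thm:2ndLaws} applied to the \emph{enlarged} working system $S_{12}A_{12}$, and to exploit the fact that the borrowed ancilla contributes asymmetrically to the $\alpha$-free-entropies on the two sectors $\alpha>0$ and $\alpha<0$. Concretely, I would regard the borrowing-and-returning protocol as the block-diagonal transition
\[
\rho_{S_{12}}\otimes\proj{00}_{A_{12}} \longrightarrow \sigma_{S_{12}}\otimes\omega_{A_{12}},
\]
where $\omega_{A_{12}}$ is the returned ancilla state with fidelity $\geqslant 1-\delta$ to $\proj{00}$. Since $A_{12}$ has trivial Hamiltonian, its weighted Hamiltonian vanishes, every ancilla state is block-diagonal, and its semi-Gibbs state is the maximally mixed $\mathbb{I}/4$; hence the enlarged states are block-diagonal in the weighted-energy eigenbasis and Theorem \ref{thm:2ndLaws} (equivalently Lemma \ref{lem:CatDmajor} together with Corollary \ref{defi:slto2}) applies verbatim. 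Writing $S_\alpha(\cdot)$ for the $\alpha$-free-entropy relative to the appropriate semi-Gibbs state and using additivity on products, the (catalytic) transition exists if and only if
\[
S_\alpha(\rho_{S_{12}})+S_\alpha(\proj{00}) \geqslant S_\alpha(\sigma_{S_{12}})+S_\alpha(\omega_{A_{12}}),\qquad\forall\,\alpha\in(-\infty,\infty).
\]

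The heart of the argument is the asymmetry of the ancilla free-entropies. For $\alpha\geqslant0$ a pure ancilla is free, $S_\alpha(\proj{00})=0$, while any returned state obeys $S_\alpha(\omega_{A_{12}})=-H_\alpha(\lambda)\leqslant0$, where $\lambda$ are the eigenvalues of $\omega_{A_{12}}$. For $\alpha<0$, however, the convention $x/0=\infty$ (the same one giving $D_{-\infty}(p\parallel q)=\log\max_i q_i/p_i$) forces $D_\alpha(\tau\parallel\mathbb{I}/4)=+\infty$ for \emph{every} rank-deficient $\tau$; in particular the whole input $\rho_{S_{12}}\otimes\proj{00}$ is rank-deficient on the ancilla, so $S_\alpha\big(\rho_{S_{12}}\otimes\proj{00}\big)=+\infty$ for every $\alpha<0$, irrespective of $\rho_{S_{12}}$. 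This is precisely the lever that deactivates the negative-$\alpha$ laws: an arbitrarily small imperfection in the return is paid to buy out the entire $\alpha<0$ sector, to which the positive-$\alpha$ monotones are insensitive.

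For sufficiency I assume the inequalities for $\alpha\geqslant0$ and choose $\omega_{A_{12}}$ to be \emph{full rank} yet $\delta$-close to $\proj{00}$, e.g.\ $\omega_{A_{12}}=(1-\delta)\proj{00}+\tfrac{\delta}{3}(\mathbb{I}-\proj{00})$. Then for $\alpha<0$ the enlarged condition reads $+\infty\geqslant S_\alpha(\sigma_{S_{12}}\otimes\omega_{A_{12}})$, which holds automatically; it is essential that $\omega_{A_{12}}$ be full rank so that the \emph{output} side stays finite rather than also being forced to $+\infty$, which is exactly why the ancilla is returned with good but imperfect fidelity and not perfectly. For $\alpha\geqslant0$ the condition becomes $S_\alpha(\rho_{S_{12}})\geqslant S_\alpha(\sigma_{S_{12}})+S_\alpha(\omega_{A_{12}})$, and since $S_\alpha(\omega_{A_{12}})\leqslant0$ while $S_\alpha(\rho_{S_{12}})\geqslant S_\alpha(\sigma_{S_{12}})$ by hypothesis, it is satisfied; Theorem \ref{thm:2ndLaws} then produces the cSLTO. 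Necessity is the easy direction: given the protocol, the monotonicity built into Theorem \ref{thm:2ndLaws} yields $S_\alpha(\rho_{S_{12}})\geqslant S_\alpha(\sigma_{S_{12}})+S_\alpha(\omega_{A_{12}})$ for $\alpha>0$, and letting the fidelity tend to $1$ sends $H_\alpha(\lambda)\to0$, hence $S_\alpha(\omega_{A_{12}})\to0$ for each fixed $\alpha>0$, giving $S_\alpha(\rho_{S_{12}})\geqslant S_\alpha(\sigma_{S_{12}})$ for all $\alpha>0$ and then for $\alpha=0$ by continuity of $\alpha\mapsto S_\alpha$.

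The main obstacle is the careful handling of the $\alpha<0$ sector and of the two boundary points. First, one must justify that a rank-deficient input sends every negative-$\alpha$ free-entropy to $+\infty$ and that this furnishes a genuine, non-vacuous certificate in Theorem \ref{thm:2ndLaws}: the clean case is $\sigma_{S_{12}}$ of full rank, where the output stays finite and the inequality is strict, whereas for rank-deficient $\sigma_{S_{12}}$ one meets an indeterminate ``$\infty\geqslant\infty$'' and must instead run the argument on a full-rank $\epsilon$-approximant $\sigma_\epsilon$, permissible because the catalytic $d$-majorization of Lemma \ref{lem:CatDmajor} is already formulated up to an arbitrarily small error in the target. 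Second, the point $\alpha=0$ is delicate because $S_0$ jumps with the rank, so I avoid evaluating necessity there directly and recover the $\alpha=0$ condition from the $\alpha>0$ ones by continuity. Finally, I would make explicit that ``returned with good fidelity'' means that for every $\delta>0$ such a protocol exists (legitimizing the $\delta\to0$ limit in necessity), and note that splitting the ancilla as one qubit per side, $A_1$ travelling with $(S_1,B_1)$ and $A_2$ with $(S_2,B_2)$, is what keeps the construction compatible with the semi-local structure of the SLTOs.
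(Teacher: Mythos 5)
Your proof is correct and follows essentially the same route as the paper's: reduce to Theorem \ref{thm:2ndLaws} on the enlarged system $S_{12}A_{12}$, use additivity of $S_\alpha$ over the product, exploit the fact that the pure ancilla renders every negative-$\alpha$ free-entropy of the input infinite, and return the ancilla in a full-rank state arbitrarily close to $\proj{0}_{A_1}\otimes\proj{0}_{A_2}$ so that only the $\alpha\geqslant 0$ conditions carry content. Your handling of the boundary cases (a rank-deficient $\sigma_{S_{12}}$ via a full-rank approximant, recovery of $\alpha=0$ by the limit $\alpha\to 0^{+}$, and necessity under merely approximate return with $\delta\to 0$) is in fact more careful than the paper's own argument, which assumes exact return of the ancilla in the necessity direction and does not discuss the rank-deficient target.
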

\begin{proof}
	Let us first assume that the transformation 
	\begin{align}
		\rho_{S_{12}} \otimes \proj{0}_{A_1} \otimes \proj{0}_{A_2} \rightarrow  \sigma_{S_{12}} \otimes \proj{0}_{A_1} \otimes \proj{0}_{A_2} 
	\end{align}
	is possible by means of a catalytic semi-local thermal operation. Then, using Theorem \ref{thm:2ndLaws} and noticing that the R\'enyi $\alpha$-relative entropies ($D_{\alpha}$) are finite for the state $\proj{0}_{A_1} \otimes \proj{0}_{A_2}$ only for $\alpha \geqslant 0$, we have  
	\begin{align}\label{eq:Cat2ndLawsDPosiAlpha}
		D_\alpha(\rho_{S_{12}} \otimes \proj{0}_{A_1} \otimes \proj{0}_{A_2} \parallel \gamma_{S_1} \otimes \gamma_{S_2} \otimes \gamma_{A_1} \otimes \gamma_{A_2}) \geqslant D_\alpha(\sigma_{S_{12}} \otimes \proj{0}_{A_1} \otimes \proj{0}_{A_2} \parallel \gamma_{S_1} \otimes \gamma_{S_2} \otimes \gamma_{A_1} \otimes \gamma_{A_2}), \ \ \forall \alpha \geqslant 0.
	\end{align}
	Moreover, $D_\alpha(\rho_{S_{12}} \otimes \proj{0}_{A_1} \otimes \proj{0}_{A_2} \parallel \gamma_{S_1} \otimes \gamma_{S_2} \otimes \gamma_{A_1} \otimes \gamma_{A_2})= D_\alpha(\rho_{S_{12}} \parallel \gamma_{S_1} \otimes \gamma_{S_2}) + D_\alpha(\proj{0}_{A_1} \otimes \proj{0}_{A_2} \parallel \gamma_{A_1} \otimes \gamma_{A_2})$. Thus the conditions \eqref{eq:Cat2ndLawsDPosiAlpha}, in turn, imply the conditions \eqref{eq:Cat2ndLawsPosiAlpha}. 
	
	Conversely, let us consider that the conditions \eqref{eq:Cat2ndLawsPosiAlpha} (as well as the conditions \eqref{eq:Cat2ndLawsDPosiAlpha}) are satisfied. As we have indicated earlier, the $D_{\alpha}$s become infinite with $\proj{0}_{A_1} \otimes \proj{0}_{A_2}$ for $\alpha <0$. However, we may allow that the final state of the ancillary system $A_{12}$ is returned in the full-rank state but arbitrarily close to the original state. Then the left-hand side of \eqref{eq:Cat2ndLawsDPosiAlpha} remains infinite but the right-hand side becomes finite. Thus, we are led to  
	\begin{align}
		D_\alpha(\rho_{S_{12}} \parallel \gamma_{S_1} \otimes \gamma_{S_2}) \geqslant D_\alpha(\sigma_{S_{12}} \parallel \gamma_{S_1} \otimes \gamma_{S_2}), \ \ \forall \alpha \in \mathbb{R}.
	\end{align}
	Now by Theorem \ref{thm:2ndLaws}, we say that the state $\rho_{S_{12}} \otimes \proj{0}_{A_1} \otimes \proj{0}_{A_2} $ can be transformed arbitrarily close to the state $\sigma_{S_{12}} \otimes \proj{0}_{A_1} \otimes \proj{0}_{A_2} $.
\end{proof}

\subsection{State transformation with time dependent Hamiltonians \label{sec:StateTransTimeDepHamil}}
So far we have restricted ourselves to the cases where the system Hamiltonian remains unchanged before and after the transformations. However, in real situations this restriction is not often respected. To include all these scenarios, we consider the cases where such changes in Hamiltonians are allowed.  

Consider a situation where the non-interacting Hamiltonian $H_{S_{12}}=H_{S_1}+H_{S_2}$ of the system $S_{12}$ changes to $H_{S_{12}}^\prime=H_{S_1}^\prime+H_{S_2}^\prime$, along with the state transformation $\rho_{12} \rightarrow \sigma_{12}^\prime$. Here the $(^\prime)$ indicates the state with modified Hamiltonian. Such a change in Hamiltonian often happens due to some time dependencies of the joint Hamiltonian. Then the second laws that incorporate such situations are given in the theorem below.
\begin{thm}[Second law for block-diagonal states] \label{thm:2ndLawsAll}
	Under a catalytic semi-local thermal operation, a transformation $\left( \rho_{S_{12}}, H_{S_{12}} \right) \longrightarrow \left( \sigma_{S_{12}}^\prime, H_{S_{12}}^\prime \right)$ that leads to changes both in system states and the non-interacting system Hamiltonians is possible if, and only if,
	\begin{align}
		S_{\alpha}\left(\rho_{S_{12}}, \gamma_{S_{1}} \otimes \gamma_{S_{2}} \right) \geqslant S_{\alpha}\left(\sigma_{S_{12}}^\prime, \gamma_{S_{1}}^\prime \otimes \gamma_{S_{2}}^\prime \right), \ \ \forall \alpha \geqslant0,
	\end{align}
	where $\gamma_{S_i}=\frac{e^{-\beta_i H_{S_i}}}{\tr [e^{-\beta_i H_{S_i}}]}$ and $\gamma_{S_i}^\prime=\frac{e^{-\beta_i H_{S_i}^\prime}}{\tr [e^{-\beta_i H_{S_i}^\prime}]}$, for $i=1,2$.
\end{thm}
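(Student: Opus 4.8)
The plan is to reduce the Hamiltonian-changing transformation to a fixed-Hamiltonian one by introducing a switch (clock) register, and then to invoke Theorem~\ref{thm:Cat2ndLawsPosiAlpha}. First I would append to each subsystem a two-level switch: a register $R_1$ tagged to $S_1$ (hence coupled to bath $B_1$) and a register $R_2$ tagged to $S_2$ (coupled to $B_2$), both with trivial Hamiltonian. On the enlarged subsystems $\tilde S_1=S_1R_1$ and $\tilde S_2=S_2R_2$ I define the controlled, time-independent Hamiltonians $\tilde H_{S_1}=H_{S_1}\otimes\proj{0}_{R_1}+H_{S_1}^\prime\otimes\proj{1}_{R_1}$ and $\tilde H_{S_2}=H_{S_2}\otimes\proj{0}_{R_2}+H_{S_2}^\prime\otimes\proj{1}_{R_2}$, so that the register flag selects which system Hamiltonian is in force. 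The genuine transformation $(\rho_{S_{12}},H_{S_{12}})\to(\sigma_{S_{12}}^\prime,H_{S_{12}}^\prime)$ is then realised exactly as the fixed-Hamiltonian transformation $\rho_{S_{12}}\otimes\proj{00}_{R_{12}}\to\sigma_{S_{12}}^\prime\otimes\proj{11}_{R_{12}}$ under the fixed Hamiltonian $\tilde H_{S_{12}}=\tilde H_{S_1}\otimes\mathbb{I}+\mathbb{I}\otimes\tilde H_{S_2}$. Because the flags $\proj{0}$ and $\proj{1}$ are orthogonal and each system block is block-diagonal in the respective weighted Hamiltonian, both embedded states are block-diagonal in the combined weighted Hamiltonian $\tilde H_{S_{12}}^{\beta_1\beta_2}$, so Theorem~\ref{thm:Cat2ndLawsPosiAlpha} applies verbatim to the enlarged system.

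Next I would apply Theorem~\ref{thm:Cat2ndLawsPosiAlpha} to this enlarged, fixed-Hamiltonian problem: the transformation is possible under cSLTOs if and only if $S_\alpha(\rho_{S_{12}}\otimes\proj{00},\tilde\gamma_{\tilde S_1}\otimes\tilde\gamma_{\tilde S_2})\geq S_\alpha(\sigma_{S_{12}}^\prime\otimes\proj{11},\tilde\gamma_{\tilde S_1}\otimes\tilde\gamma_{\tilde S_2})$ for all $\alpha\geq0$, where $\tilde\gamma_{\tilde S_x}=e^{-\beta_x\tilde H_{S_x}}/\tilde Z_x$ with $\tilde Z_x=Z_x+Z_x^\prime$ (trivial register energies). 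The crux is then a short computation of these enlarged free-entropies. Since $\tilde\gamma_{\tilde S_1}\otimes\tilde\gamma_{\tilde S_2}$ is block-diagonal over the registers, restricting it to the $\proj{00}$ flag block produces the weight $(Z_1/\tilde Z_1)(Z_2/\tilde Z_2)$ multiplying $\gamma_{S_1}\otimes\gamma_{S_2}$; hence for $\alpha\geq0$ one gets $D_\alpha(\rho_{S_{12}}\otimes\proj{00}\parallel\tilde\gamma)=D_\alpha(\rho_{S_{12}}\parallel\gamma_{S_1}\otimes\gamma_{S_2})-\log\!\big(Z_1Z_2/(\tilde Z_1\tilde Z_2)\big)$, and subtracting $\log\tilde Z_1\tilde Z_2$ as in the definition~\eqref{eq:free-ent} makes the $\tilde Z$ factors cancel, so that $S_\alpha(\rho_{S_{12}}\otimes\proj{00},\tilde\gamma)=S_\alpha(\rho_{S_{12}},\gamma_{S_1}\otimes\gamma_{S_2})$. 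The identical computation on the $\proj{11}$ block yields $S_\alpha(\sigma_{S_{12}}^\prime\otimes\proj{11},\tilde\gamma)=S_\alpha(\sigma_{S_{12}}^\prime,\gamma_{S_1}^\prime\otimes\gamma_{S_2}^\prime)$. Substituting these two identities reduces the enlarged condition to exactly the claimed inequalities $S_\alpha(\rho_{S_{12}},\gamma_{S_1}\otimes\gamma_{S_2})\geq S_\alpha(\sigma_{S_{12}}^\prime,\gamma_{S_1}^\prime\otimes\gamma_{S_2}^\prime)$ for all $\alpha\geq0$.

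I expect the main obstacle to be the first step: rigorously establishing that the switch embedding is a faithful two-way correspondence inside the cSLTO framework. One must verify that the controlled Hamiltonian is a legitimate non-interacting system Hamiltonian that preserves semi-locality (each register piece $R_x$ is coupled to the correct bath $B_x$), that the clock construction of Appendix~\ref{sec:StateTransTimeDepHamil} indeed implements the Hamiltonian switch using only allowed operations, and that every cSLTO realising the original Hamiltonian change lifts to one acting on the enlarged fixed-Hamiltonian system, and conversely. The choice of a trivial register Hamiltonian is precisely what makes the partition-function bookkeeping of the second step collapse cleanly; with non-trivial register energies $\epsilon_x^0,\epsilon_x^1$ the $\alpha$-independent offsets would cancel only under the weighted-energy balance $\beta_1\epsilon_1^0+\beta_2\epsilon_2^0=\beta_1\epsilon_1^1+\beta_2\epsilon_2^1$, which the trivial choice satisfies automatically. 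Finally, the restriction to $\alpha\geq0$ is inherited directly from Theorem~\ref{thm:Cat2ndLawsPosiAlpha} through its borrowed pure ancilla returned with good fidelity, so no separate argument for avoiding negative $\alpha$ is needed here.
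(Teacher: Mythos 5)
Your proposal is correct and takes essentially the same route as the paper's own proof: both introduce a two-level switch (clock) register per subsystem carrying a controlled, time-independent Hamiltonian $H_{S_x}\otimes\proj{0}+H_{S_x}^\prime\otimes\proj{1}$, recast the Hamiltonian-changing transformation as the fixed-Hamiltonian transformation $\rho_{S_{12}}\otimes\proj{00}\to\sigma_{S_{12}}^\prime\otimes\proj{11}$, and invoke Theorem~\ref{thm:Cat2ndLawsPosiAlpha}. Your explicit partition-function bookkeeping, showing $S_\alpha(\rho_{S_{12}}\otimes\proj{00},\tilde\gamma)=S_\alpha(\rho_{S_{12}},\gamma_{S_1}\otimes\gamma_{S_2})$ term by term, is precisely the identity the paper asserts (stated there as an equality of free-entropy differences) without spelling out the cancellation.
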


\begin{proof}
	Let us assume the total Hamiltonian of the universe is time-independent. Any change happening in the system Hamiltonian can then be understood due to the presence of a clock system $X$. The joint Hamiltonian of the system and clock system is given by 
	\begin{align}
		\sum_{t=t_{i}}^{t_{f}} H(t) \otimes \proj{t}_X,
	\end{align}
	where $\ket{t}_X$ are some orthonormal basis. Then a change in Hamiltonian $H(t_i) \rightarrow H(t_f)$, along with a transformation in the system state $\rho \rightarrow \sigma$, is equivalent to the change in a joint clock-system state as 
	\begin{align}
		\rho(t_i) \otimes \proj{t_{i}}_X \longrightarrow \sigma(t_f) \otimes \proj{t_f}_X. 
	\end{align}
	We exploit this argument to change the joint non-interacting Hamiltonian of the system, so that
	\begin{align}
		H_{S_{12}}=H_{S_1}+H_{S_2}  \longrightarrow H_{S_{12}}^\prime=H_{S_1}^\prime+H_{S_2}^\prime.
	\end{align}
	We consider a bipartite clock system $X_{12}$ that is responsible for the change in Hamiltonian of the system $S_{12}$. Since the transition does not depend on the intermediate times, we simply assume the initial state to be $\ket{t_i}_{X_{1/2}}=\ket{0}_{X_{1/2}}$ and the final state to be $\ket{t_f}_{X_{1/2}}=\ket{1}_{X_{1/2}}$. Therefore, the time-independent joint Hamiltonian of the system and the clock can be written as 
	\begin{align}
		H_{S_{12}X_{12}}=\left(H_{S_{1}} \otimes \proj{0}_{X_{1}} + H_{S_{1}}^\prime \otimes \proj{1}_{X_{1}} \right) + \left(H_{S_{2}} \otimes \proj{0}_{X_{2}} + H_{S_{2}}^\prime \otimes \proj{1}_{X_{2}} \right)= H_{S_1X_1} +  H_{S_2X_2}.
	\end{align}
	Here $X_{12}$ plays the role of a switch and it controls the Hamiltonian on the system by choosing its state $\proj{ij}_{X_{12}}=\proj{i}_{X_1} \otimes \proj{j}_{X_2}$. For examples, when the switch $X_{12}$ is in the state $\proj{00}_{X_{12}}$, it implements the Hamiltonian $H_{S_{12}}$ on the system $S_{12}$. On the other hand, when the switch is in the state $\proj{11}_{X_{12}}$, it switches the system Hamiltonian to $H_{S_{12}}^\prime$. Now consider a catalytic semi-local thermal operation that leads to the transformation 
	\begin{align}
		\left(\rho_{S_{12}}, H_{S_{12}}  \right) \longrightarrow \left(\sigma_{S_{12}}^\prime, H_{S_{12}}^\prime  \right). 
	\end{align}
	This is equivalent to the transformation between the joint system-switch states block-diagonal in energy, and with the fixed Hamiltonian $H_{S_{12}X_{12}}$,
	\begin{align}
		\left(\rho_{S_{12}}\otimes \proj{00}_{X_{12}}, H_{S_{12}X_{12}}  \right) \longrightarrow \left(\sigma_{S_{12}}^\prime \otimes \proj{11}_{X_{12}}, H_{S_{12}X_{12}}  \right) 
	\end{align}
	under a catalytic semi-local thermal operation applied jointly on the system and the clock. Now following the Theorem \ref{thm:Cat2ndLawsPosiAlpha}, we cast the necessary and sufficient conditions for the transformation, as
	\begin{align}
		S_{\alpha}\left(\rho_{S_{12}}\otimes \proj{00}_{X_{12}}, \gamma_{S_{1}X_{1}} \otimes \gamma_{S_{2}X_{2}} \right) 
		\geqslant S_{\alpha}\left(\sigma_{S_{12}}^\prime \otimes \proj{11}_{X_{12}}, \gamma_{S_{1}X_{1}} \otimes \gamma_{S_{2}X_{2}} \right), \ \ \forall \alpha\geqslant 0,
	\end{align}
	where $\gamma_{S_{i}X_{i}}=\frac{e^{-\beta_i H_{S_iX_i}}}{\tr [e^{-\beta_i H_{S_iX_i}}]}$, for $i=1,2$, are the thermal states correspond to the Hamiltonians $H_{S_iX_i}$ and inverse temperatures $\beta_i$. Moreover, we notice that 
	\begin{align}
		S_{\alpha}\left(\rho_{S_{12}}\otimes \proj{00}_{X_{12}}, \gamma_{S_{1}X_{1}} \otimes \gamma_{S_{2}X_{2}} \right) -  S_{\alpha}\left(\sigma_{S_{12}}^\prime \otimes \proj{11}_{X_{12}},  \gamma_{S_{1}X_{1}} \otimes \gamma_{S_{2}X_{2}} \right) = S_{\alpha}\left(\rho_{S_{12}}, \gamma_{S_{1}} \otimes \gamma_{S_{2}} \right).
	\end{align} 
	As a result, the necessary and sufficient conditions for the transformation reduce to
	\begin{align}
		S_{\alpha}\left(\rho_{S_{12}}, \gamma_{S_{1}} \otimes \gamma_{S_{2}} \right) \geqslant S_{\alpha}\left(\sigma_{S_{12}}^\prime, \gamma_{S_{1}}^\prime \otimes \gamma_{S_{2}}^\prime \right), \ \ \forall \alpha  \geqslant 0.
	\end{align}
\end{proof}

\section{Free-entropy distance, thermodynamic work and fundamentally irreversibility \label{sec:fed}}
In this section, we relate $\alpha$-free-entropies, introduced in the previous section, with thermodynamic works. A formal connection and equivalence between work and purity have been established in \cite{Horodecki13, Brandao15}. Here we briefly outline the approach presented in \cite{Horodecki13}. A thermalization or work extraction process leads an arbitrary system-bath state to a more mixed (or less pure) system-bath state for a given block with fixed total energy. These processes are nothing but randomization (noisy) processes and extensively studied in context purity resource theory \cite{Horodecki13a}. However, there is a subtlety we encounter here, compared to purity resource theory. Nevertheless, one may claim that the thermodynamics is nothing but a purity resource theory constrained by the temperature of the bath and the Hamiltonian of the system \cite{Brandao13, Horodecki13, Brandao15}.

Consider a bipartite system $S_{12}$, with initial non-interacting Hamiltonian $H_{S_{12}}=H_{S_1}+H_{S_2}$, in an initial state $\rho_{S_{12}}$. After a catalytic semi-local thermal operation, the state and the Hamiltonian for the system are changed to $\sigma_{S_{12}}^\prime$ and  $H_{S_{12}}^\prime=H_{S_1}^\prime + H_{S_2}^\prime$ respectively, i.e.,
\begin{align}\label{eq:GenTrans}
	\left(\rho_{S_{12}}, H_{S_{12}} \right) \longrightarrow \left( \sigma_{S_{12}}^\prime, H_{S_{12}}^\prime  \right). 
\end{align}
For this transformation we may need a catalyst $C_{12}$, with the Hamiltonian $H_{C_{12}}=H_{C_1}+ H_{C_2}$. However, for simplicity, we consider the catalyst as a part of the system.

Our aim is to exploit this transformation to extract free-entropy and thermodynamic work. For that, we also introduce a battery that stores or expends work. We may think that the bipartite battery $S_{W_{12}}$ is composed of sub-systems $S_{W_1}$ and $S_{W_2}$ with the Hamiltonian $H_{S_{W_{12}}}=H_{S_{W_1}}+H_{S_{W_2}}$, where the two-level Hamiltonians are $H_{S_{W_1}}=W_1\proj{W_1}$ and $H_{S_{W_2}}=W_2\proj{W_2}$. The battery sub-system $S_{W_1}$ ($S_{W_2}$) is semi-locally interacting with the bath $B_1$ ($B_2$). Note, these two battery sub-systems can in principle exchange energy, i.e., work, as this operation is allowed by the catalytic semi-local thermal processes. When the battery sub-systems are thermalized to the temperatures of $B_1$ and $B_2$, the corresponding semi-Gibbs state becomes $\gamma_{S_{W_1}} \otimes \gamma_{S_{W_2}}$, where
\begin{align}
	\gamma_{S_{W_i}}=\frac{1}{1+e^{-\beta_i W_i}} \left(\proj{0} + e^{-\beta_i W_i} \proj{1}  \right),
\end{align}
with $i=1,2$, and $\beta_i$ is the inverse temperature of the bath $B_i$.
Now, let us consider the transformation of the system and the battery together
\begin{align}
	\left( \rho_{S_{12}} \otimes \proj{00}_{S_{W_{12}}}, H_{S_{12}}+ H_{S_{W_{12}}} \right) \longrightarrow \left( \sigma_{S_{12}}^\prime \otimes \proj{00}_{S_{W_{12}}}, H_{S_{12}}^\prime + H_{S_{W_{12}}} \right),
\end{align}
where we denote $\proj{W_1W_2}_{S_{W_{12}}}=\proj{W_1}_{S_{W_{1}}} \otimes \proj{W_2}_{S_{W_{2}}}$. Then second laws, i.e., Theorem \ref{thm:2ndLawsAll}, ensure the conditions 
\begin{align}
	S_\alpha(\rho_{S_{12}} \otimes \proj{00}_{S_{W_{12}}}, \ \gamma_{S_1} \otimes \gamma_{S_2} \otimes \gamma_{S_{W_1}} \otimes \gamma_{S_{W_2}}) \geqslant S_\alpha(\sigma_{S_{12}}^\prime \otimes \proj{W_1W_2}_{S_{W_{12}}}, \ \gamma_{S_1}^\prime \otimes \gamma_{S_2}^\prime \otimes \gamma_{S_{W_1}} \otimes \gamma_{S_{W_2}})
\end{align}
to be satisfied for all $\alpha \geqslant 0$. Given the initial, the final and the thermal states of the battery, we can derive the bound on the $\alpha$-free-entropy stored in the battery, as
\begin{align}
	S_\alpha(\rho_{S_{12}},\gamma_{S_1} \otimes \gamma_{S_2})-S_\alpha(\sigma_{S_{12}}^\prime,\gamma_{S_1}^\prime \otimes \gamma_{S_2}^\prime)\geqslant \beta_1 W_1 + \beta_2 W_2,
\end{align}
where $W_1$ ($W_2$) is amount of work stored in the battery sub-system via the transformation in its state $\proj{0}_{W_1} \to \proj{W_1}_{W_1}$ ($\proj{0}_{W_2} \to \proj{W_2}_{W_2}$). Due to total energy and total-weighted energy conservation of the global process on the system-battery-baths composite, the battery $S_{W_{12}}$ can only increase its energy. This implies 
\begin{align}
	W_1+W_2 \geqslant 0.
\end{align}
However, note that the quantities $W_1$ and $W_2$ depend on the cSLTO that executes the transformation. Therefore, it is important to quantify the guaranteed amount of free-entropy or the works involved in a state transformation irrespective to the operations that execute it. This quantification is done in terms of the free-entropy distance, given below. 
\begin{thm}[Free-entropy distance]\label{thm:fed}
	For a catalytic semi-local thermal operation, leading to a transition $\left(\rho_{S_{12}}, H_{S_{12}} \right) \longrightarrow \left( \sigma_{S_{12}}^\prime, H_{S_{12}}^\prime  \right)$, the free-entropy distance between the initial and final block-diagonal states is given by
	\begin{align}
		S_d \left(\rho_{12} \rightarrow \sigma_{12}^\prime \right)= \inf_{\alpha \geqslant 0}\left[S_\alpha(\rho_{S_{12}},\gamma_{S_1} \otimes \gamma_{S_2}) -  S_\alpha(\sigma_{S_{12}}^\prime,\gamma_{S_1}^\prime \otimes \gamma_{S_2}^\prime) \right]= \beta_1 W_1 + \beta_2 W_2 \geqslant 0. \label{eq:free-ent-dist}
	\end{align}
\end{thm}
Note, the guaranteed extracted work is $W^{ext}=W_1 + W_2$. The Theorem \ref{thm:fed} leads to several interesting results. Now in terms of the free-entropy distance, we can quantify these quantities, as in the following. 

\begin{cor}[Extractable free-entropy and free-entropy cost]\label{Coro:fed}
	For a transformation between the block-diagonal states, $\left(\rho_{S_{12}}, H_{S_{12}} \right) \longrightarrow \left( \sigma_{S_{12}}^\prime, H_{S_{12}}^\prime  \right)$, under catalytic semi-local thermal operations, the extractable free-entropy $S_{ext}$, and the free-entropy cost $S_{cost}$ for the reverse the process, are given by
	\begin{align}
		&S_{ext} \left(\rho_{12} \rightarrow \sigma_{12}^\prime \right)=\inf_{\alpha \geqslant 0}\left[S_\alpha(\rho_{S_{12}},\gamma_{S_1} \otimes \gamma_{S_2}) -  S_\alpha(\sigma_{S_{12}}^\prime,\gamma_{S_1}^\prime \otimes \gamma_{S_2}^\prime) \right]=S_d \left(\rho_{12} \rightarrow \sigma_{12}^\prime \right), \nonumber \\
		&S_{cost} \left(\sigma_{12}^\prime \rightarrow \rho_{12}\right)=-\sup_{\alpha \geqslant 0}\left[S_\alpha(\rho_{S_{12}},\gamma_{S_1} \otimes \gamma_{S_2}) -  S_\alpha(\sigma_{S_{12}}^\prime,\gamma_{S_1}^\prime \otimes \gamma_{S_2}^\prime) \right].
	\end{align}
\end{cor}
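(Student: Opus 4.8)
The plan is to read off the free-entropy distance as the largest weighted work $\beta_1 W_1 + \beta_2 W_2$ that can be deterministically stored in the battery, and to pin it down by bounding it from both sides with the second laws of Theorem~\ref{thm:2ndLawsAll} applied to the joint system--catalyst--battery transformation. First I would isolate the one crucial ingredient: the $\alpha$-free-entropy of the battery. Since the battery is prepared and returned in energy eigenstates of $H_{W_{12}}$, for any rank-one $\proj{W}$ one has $\proj{W}^\alpha=\proj{W}$ for $\alpha>0$, so $\tr[\proj{W}^\alpha \gamma_W^{1-\alpha}]=(e^{-\beta W}/Z_W)^{1-\alpha}$ and hence $S_\alpha(\proj{W},\gamma_W)=\beta W$ independently of $\alpha\geqslant0$. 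By additivity of $S_\alpha$ on tensor products this yields $S_\alpha(\proj{W_1W_2}_{W_{12}},\gamma_{W_1}\otimes\gamma_{W_2})=\beta_1 W_1+\beta_2 W_2$ and $S_\alpha(\proj{00}_{W_{12}},\gamma_{W_1}\otimes\gamma_{W_2})=0$, both flat in $\alpha$. This flatness is exactly what converts an $\alpha$-indexed family of constraints into a single number.

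For the upper bound I would apply Theorem~\ref{thm:2ndLawsAll} to the block-diagonal transformation $\rho_{S_{12}}\otimes\rho_{C_{12}}\otimes\proj{00}_{W_{12}} \to \sigma^\prime_{S_{12}}\otimes\rho_{C_{12}}\otimes\proj{W_1W_2}_{W_{12}}$. Because the catalyst enters as a product factor with unchanged state and Hamiltonian on both sides, additivity makes its term $S_\alpha(\rho_{C_{12}},\gamma_{C_1}\otimes\gamma_{C_2})$ cancel, and the battery contributes the flat $\beta_1 W_1+\beta_2 W_2$. The necessary-and-sufficient conditions then collapse to $S_\alpha(\rho_{S_{12}},\gamma_{S_1}\otimes\gamma_{S_2}) - S_\alpha(\sigma^\prime_{S_{12}},\gamma^\prime_{S_1}\otimes\gamma^\prime_{S_2}) \geqslant \beta_1 W_1+\beta_2 W_2$ for all $\alpha\geqslant0$, and taking the infimum over $\alpha$ gives $\beta_1 W_1+\beta_2 W_2 \leqslant \inf_{\alpha\geqslant0}[\,S_\alpha(\rho_{S_{12}},\gamma_{S_1}\otimes\gamma_{S_2}) - S_\alpha(\sigma^\prime_{S_{12}},\gamma^\prime_{S_1}\otimes\gamma^\prime_{S_2})\,]$.

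For achievability I would choose the battery level spacings so that $\beta_1 W_1+\beta_2 W_2$ equals this infimum; since only the weighted combination appears, the free choice of the two level gaps $W_1,W_2$ lets us realize any real value. With this choice each constraint holds by the very definition of the infimum, so Theorem~\ref{thm:2ndLawsAll} guarantees that the joint transformation exists. The upper bound is therefore saturated, giving $S_d(\rho_{12}\to\sigma^\prime_{12}) = \inf_{\alpha\geqslant0}[\,\cdots\,] = \beta_1 W_1+\beta_2 W_2$ as claimed.

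The main obstacle I expect is making the achievability fully rigorous: one must verify that the infimum over $\alpha\in[0,\infty]$ is genuinely \emph{attained} — so that equality, not merely an approaching sequence, holds — which I would argue from continuity of $\alpha\mapsto S_\alpha$ on the compact extended interval, together with the observation that tensoring with the pure, energy-diagonal battery and product catalyst preserves the block-diagonality hypotheses needed by Theorem~\ref{thm:2ndLawsAll}. A secondary care point is the bookkeeping of the changing system Hamiltonian $H_{S_{12}}\to H^\prime_{S_{12}}$; this is already absorbed into Theorem~\ref{thm:2ndLawsAll} via the clock construction, so the primed thermal references $\gamma^\prime_{S_1}\otimes\gamma^\prime_{S_2}$ appear automatically and no extra argument is required there.
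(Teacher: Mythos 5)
Your proposal is correct and follows essentially the same route as the paper: the paper likewise appends the two-level battery pair, applies Theorem~\ref{thm:2ndLawsAll} to the joint system--catalyst--battery transformation, and uses additivity of $S_\alpha$ together with the $\alpha$-independent battery free-entropy $S_\alpha\left(\proj{W},\gamma_W\right)=\beta W$ to collapse the family of second laws into the single bound $\beta_1 W_1+\beta_2 W_2\leqslant \inf_{\alpha\geqslant 0}\left[S_\alpha(\rho_{S_{12}},\gamma_{S_1}\otimes\gamma_{S_2})-S_\alpha(\sigma_{S_{12}}^\prime,\gamma_{S_1}^\prime\otimes\gamma_{S_2}^\prime)\right]$, saturated by tuning the battery gaps, with the cost formula $S_{cost}=-\sup_{\alpha\geqslant 0}\left[\cdot\right]$ following from the identical argument applied to the reverse transformation. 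Your worry about attainment of the infimum is unnecessary: the set of weighted works compatible with all the $\alpha$-constraints is the closed half-line $\left(-\infty,\ \inf_{\alpha\geqslant 0}\Delta S_\alpha\right]$, so its endpoint is feasible regardless of whether the infimum is attained at any particular $\alpha$.
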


It is clear that the free-entropy that we can extract from the process is in general lower than the free-entropy to be expended to reverse the process. To see this, let us consider the transformation \eqref{eq:GenTrans} and assume $S_{ext} (\rho_{S_{12}} \to \sigma_{S_{12}}^\prime) >0$. Then, 
\begin{align}
	S_{ext} (\rho_{S_{12}} \to \sigma_{S_{12}}^\prime) \leqslant - \ S_{cost} (\sigma_{S_{12}}^\prime \to \rho_{S_{12}}).
\end{align}
Therefore, thermodynamics in this regime is fundamentally irreversible, analogous to the cases with single bath \cite{Horodecki13}. With the help of free-entropy distance, we are also able to compute the distillable free-entropy and free-entropy of formation for a state. 

In the situation where the $\sigma_{S_{12}}=\gamma_{S_1} \otimes \gamma_{S_2}$ and the system Hamiltonian does not change, the  Corollary \ref{Coro:fed} leads us to quantify distillable free-entropy $S_{dist}$ for the process  $ \rho_{S_{12}} \rightarrow \gamma_{S_1} \otimes \gamma_{S_2}$, and the free-entropy of formation $S_{form}$ for the process $ \gamma_{S_1} \otimes \gamma_{S_2} \rightarrow  \rho_{S_{12}}$, as
\begin{align}
	&S_{dist}(\rho_{S_{12}})=D_0\left(\rho_{S_{12}} \parallel \gamma_{S_1} \otimes \gamma_{S_2} \right) = -\log \tr \left( \Pi_{\rho_{S_{12}}} \gamma_{S_1} \otimes \gamma_{S_2} \right), \\
	&S_{form}(\rho_{S_{12}})=D_\infty\left(\rho_{S_{12}} \parallel \gamma_{S_1} \otimes \gamma_{S_2} \right) = \log \min \left\{\lambda: \rho_{S_{12}} \leqslant \lambda (\gamma_{S_1} \otimes \gamma_{S_2}) \right\}.
\end{align}

\section{Superposition and free-entropy locking}
The cSLTOs are time translation symmetric with respect to the system Hamiltonian $H_{S_{12}}$. Consider a quantum state $\rho_{S_{12}}$ that has superposition in the eigenbases of $H^{\beta_1 \beta_2}_{S_{12}}$. Then the state can be expressed as
\begin{align}
	\rho_{S_{12}}= \rho_{S_{12}}^{d} + \rho_{S_{12}}^{o}, 
\end{align}
where $\rho_{S_{12}}^{d}$ is the block-diagonal part of the state and $\rho_{S_{12}}^{o}$ is the off-diagonal part of the state when it is expressed in the energy eigenbases of the system. The off-diagonal part $\rho_{S_{12}}^{o}$ evolve independently of the block-diagonal part $\rho_{S_{12}}^{d}$ under cSLTOs, and the cSLTOs can only access the block-diagonal elements for free-entropy extraction. That is why the only accessible free-entropy in the one-shot finite-size regime is the one corresponding to the dephased state of the original one.

Therefore, there is a free-entropy locking in the presence of such superposition as the free-entropy stored in the superposition cannot be accessed. Note, in presence of quantum correlation, e.g., entanglement, such superposition is inevitably present in the state, and there would be free-entropy locking.  However, in the asymptotic regime, where the number of particles in the system becomes considerably large, this free-entropy can be unlocked and fully accessed via cSLTOs (see below). As a result, one can fully extract free-entropy from the quantum correlations present in the system in the asymptotic regime.

\section{Second laws for the non-block-diagonal states \label{secApp:2ndLawsAssymetry}}
We have already mentioned that the second laws derived in the Theorem~\ref{thm:2ndLawsAll} provide the necessary and sufficient conditions for the transformation between states that are block-diagonal in the eigenbases of the Hamiltonian $H_{S_{12}}$. For the states that are not block-diagonal, the second laws become only the necessary conditions. These necessary conditions can be further supplemented by a monotonic measure of the time-translation asymmetry. For a quantum state $\rho_{S_{12}}$ with the Hamiltonian $H_{S_{12}}$, the asymmetry is quantified as 
\begin{align}
	A_{\alpha}(\rho_{S_{12}}, H_{S_{12}})=D^q_{\alpha}\left(\rho_{S_{12}} \parallel P_{S_{12}} (\rho_{S_{12}})  \right)
\end{align}
where the $P_{S_{12}} (\rho_{S_{12}})$ is the dephased state in the energy eignebases, given in Eq. \eqref{eqApp:DephasedStateWeightedHamil}, and the quantum R\'enyi $\alpha$-relative entropy is defined as
\begin{align}
	D^q_{\alpha}(\rho \parallel \sigma)=
	\begin{cases}
		\frac{1}{\alpha - 1} \log \tr(\rho^\alpha \sigma^{1-\alpha}),  \ & \alpha \in [0, \ 1) \\
		\frac{1}{\alpha - 1} \log \tr \left( [\sigma^{\frac{1-\alpha}{2 \alpha}} \ \rho \ \sigma^{\frac{1-\alpha}{2 \alpha}}]^\alpha \right),   \ & \alpha >0.
	\end{cases}
\end{align}
Note, in the limit $\alpha \to 1$, the quantum $\alpha$-relative entropy converges to the well known von Neumann relative entropy $D_1^q(\rho \parallel \sigma)=\tr (\rho \log \rho - \rho \log \sigma)$. 

Similar to the state transformation in the presence of a single bath \cite{Lostaglio15a, Lostaglio15}, we also supplement the necessary condition for state transformation in presence of two baths via cSLTOs, in addition to the second laws for the diagonal states. Consider a transformation $(\rho_{S_{12}}, H_{S_{12}}) \to (\sigma_{S_{12}}^\prime, H_{S_{12}}^\prime)$ via a cSLTO, where the states may have superposition in the energy eigenbases. Then, the necessary conditions for the transformation are
\begin{align}
	A_{\alpha}(\rho_{S_{12}}, H_{S_{12}}) \geqslant A_{\alpha}(\sigma_{S_{12}}^\prime, H_{S_{12}}^\prime), \ \ \forall \alpha \geqslant 0,
\end{align}
where the amount quantum asymmetry monotonically decreases under cSLTOs.

\section{Asymptotic equipartition \label{secApp:asymptotic}}
The framework developed for quantum heat engines working in the one-shot finite-size regime can reproduce the know results of thermodynamics for the engines operating in the asymptotic regime. It is well known that all the R\'enyi $\alpha$-relative entropies converge to von Neuman relative entropy in the asymptotic regime (or i.i.d. regime), that is
\begin{align}
	&\lim_{N \to \infty} \frac{1}{N} D_{\alpha}(\rho^{\otimes N} \parallel \sigma^{\otimes N})=D_1(\rho \parallel \sigma)=\tr(\rho \log \rho - \rho \log \sigma), \nonumber
\end{align}
for all $\alpha$. For that reason, the $\alpha$-free-entropies reduce to the Helmholtz free-entropy for all $\alpha$, as 
\begin{align}
	\lim_{N \to \infty} \frac{1}{N} S_{\alpha} \left(\rho_{S_{12}}^{\otimes N}, \ (\gamma_{S_1} \otimes \gamma_{S_2})^{\otimes N}\right)=S_1 (\rho_{S_{12}}, \ \gamma_{S_1} \otimes \gamma_{S_2}).
\end{align}
Therefore, in the asymptotic regime, there is only one free-entropy and that is the Helmholtz free-entropy. Consider a transformation in the asymptotic regime via a cSLTO, where the individual system transformations as 
\begin{align}\label{eqmt:asymt_trans}
	(\rho_{S_{12}}, H_{S_{12}}) \to (\sigma_{S_{12}}^\prime, H_{S_{12}}^\prime). 
\end{align}
The many second laws in the one-shot finite-size regime converge to a single second law providing necessary and sufficient condition for the engine transformation in the asymptotic regime, that is
\begin{align}
	S_1 (\rho_{S_{12}}, \ \gamma_{S_1} \otimes \gamma_{S_2}) \geqslant S_1 (\sigma_{S_{12}}^\prime, \ \gamma_{S_1}^\prime \otimes \gamma_{S_2}^\prime).
\end{align}
The amount of extractable free-entropy or the free-entropy distance per copy of the system, given the initial and the final states, is
\begin{align}
	S_d(\rho_{S_{12}} \to \sigma_{S_{12}}^\prime)&=S_1 (\rho_{S_{12}}, \ \gamma_{S_1} \otimes \gamma_{S_2}) - S_1 (\sigma_{S_{12}}^\prime, \ \gamma_{S_1}^\prime \otimes \gamma_{S_2}^\prime) \nonumber \\
	&= \Delta S - \beta_1 \Delta E_{S_1} - \beta_2 \Delta E_{S_2}, 
\end{align}
where $\Delta S=S(\rho_{S_{12}})-S(\sigma_{S_{12}}^\prime)$ the change in the von Neumann entropy. $\Delta E_{S_{1}}=\tr \rho_{S_{12}}H_{S_1} - \tr \sigma_{S_{12}}^\prime H_{S_1}^\prime $ is the change in average energy of the subsystem $S_1$, and similarly, the $\Delta E_{S_{2}}$ is for $S_{2}$.
Note, for the reverse transformation $(\sigma_{S_{12}}^\prime, H_{S_{12}}^\prime) \to (\rho_{S_{12}}, H_{S_{12}}) $, the free-entropy cost per copy of the system is exactly equal to the extractable free-entropy in the forward process, i.e., 
\begin{align}
	S_d(\rho_{S_{12}} \to \sigma_{S_{12}}^\prime)=  S_d(\rho_{S_{12}} \leftarrow \sigma_{S_{12}}^\prime),
\end{align}
As a consequence, the thermodynamic reversibility is recovered in the asymptotic regime.

We recall that there is a free-entropy locking in the one-shot finite-size regime for the quantum states that have quantum superposition in the energy eigenbases, i.e., for $[\rho_{S_{12}}, H_{S_{12}}]\neq 0$. However, in the asymptotic limit, where the number of systems $N \to \infty$, all states become symmetric with respect to the Hamiltonian on average, and this is because of the fact that 
\begin{align}
	\lim_{N \to \infty} \frac{1}{N} \left[ \rho_{S_{12}}^{\otimes N},  \ \sum_{x=0}^{N-1} \mathbb{I}^{\otimes x} \otimes H_{S_{12}} \otimes \mathbb{I}^{\otimes (N-x-1)} \right]=0, \ \ \forall \ \rho_{S_{12}}. \nonumber
\end{align}
So, an arbitrary state can be written in the block-diagonal form on average in the asymptotic regime. It means that a state which is non-block-diagonal in single-copy level becomes block-diagonal (on average) in the energy eigenbases in the asymptotic regime. Consequently, the locked free-entropy due to the presence of quantum superpositions and correlations in a state can be accessed and extracted.

Let us now turn to the heat engines and show how the framework presented above reproduces the traditional thermodynamics and its laws. One may recover the statements of the second law in terms of heat, where the heat $Q$ is defined as $Q=\Delta E - W$. Here $\Delta E$ is the change in the internal energy in the system and $W$ is the work done by the system. Consider an engine transformation where bipartite system $S_{12}$ is semi-locally interacting with the baths $B_1$ and $B_2$ via semi-local thermal operation. The global operation respect strict conservation of total weighted-energy, i.e., $\beta_1 \Delta E_1 + \beta_2 \Delta E_2=0$, where $\Delta E_1$ and $\Delta E_2$ is the change in energies of the $B_1S_1$ and $B_2S_2$ composites respectively. Then, the total weighted-energy conservation and Eq.~\eqref{eq:free-ent-dist} together lead to
\begin{align}\label{eq:AllStatementsWithHeat}
	\beta_1 Q_1 + \beta_2 Q_2 \leqslant 0,
\end{align}
where $Q_{1/2}=\Delta E_{1/2} - W_{1/2}$. The expression above is the Clausius inequality and mathematically captures all the statements of the second law in terms of heat. For instance, in the asymptotic regime where the above definition of heat is applicable, we recover the traditional form of Carnot efficiency as $\eta=\frac{W_{\infty}}{Q_1}\leqslant \frac{Q_1 + Q_2}{Q_1} \leqslant 1- \frac{\beta_1}{\beta_2}$, where the extracted work is given by $W_{\infty} \leqslant Q_1 + Q_2$ and equality only holds for the reversible engine operations. 

In the asymptotic regime, the role of correlations in driving ``anomalous'' heat flow from a cold to a hot bath can also be understood. For example, consider the situation where engine operation leads to a transformation $(\rho_{S_{12}}, H_{S_{12}}) \to (\rho_{S_{1}} \otimes \rho_{S_2}, H_{S_{12}})$ which exploits the correlation between the subsystems $S_1$ and $S_2$. Here $\rho_{S_1}=\tr_{S_2}\rho_{S_{12}}$ and $\rho_{S_2}=\tr_{S_1}\rho_{S_{12}}$. As per the second law in the asymptotic regime, the transformation takes place spontaneously if and only if the free-entropy satisfy $S_1(\rho_{S_{12}}, \gamma_{S_1} \otimes \gamma_{S_2}) \geqslant S_1(\rho_{S_{1}} \otimes \rho_{S_2}, \gamma_{S_1} \otimes \gamma_{S_2})$. The extractable free-entropy from this process is
\begin{align}
	\mathcal{I}(S_1:S_2) \geqslant \beta_1 W_1 + \beta_2 W_2 \geqslant 0.
\end{align}
where $\mathcal{I}(S_1:S_2)=S(\rho_{S_1})+ S(\rho_{S_2}) - S(\rho_{S_{12}})$ is the mutual information quantifying the correlation present in the system $S_{12}$. The $W^{ext}=W_1 + W_2 \geqslant 0$ is the extractable work. This stored work in correlation is responsible for the spontaneous ``anomalous'' heat flow from the cold to the hot bath which is nothing but a process happens in case of refrigeration.

\end{document}